\declaretheorem{theorem}
\crefname{algocf}{Algorithm}{Algorithms}
\Crefname{algocf}{Algorithm}{Algorithms}
\newtheorem{lemma}[theorem]{Lemma}
\newtheorem{definition}{Definition}
\newtheorem{example}{Example}
\newtheorem{claim}{Claim}
\newtheorem{fact}[theorem]{Fact}
\newtheorem{remark}[theorem]{Remark}
\newtheorem{proposition}[theorem]{Proposition}
\newtheorem{corollary}[theorem]{Corollary}
\newtheorem{invariant}[theorem]{Invariant}
\def\dd{\mathinner{.\,.}}
\newcommand{\cO}{\mathcal{O}}
\newcommand{\Oh}{\cO}
\newcommand{\per}{\textsf{per}}
\newcommand{\LCE}{\textsf{LCP}}
\newcommand{\suc}{\textsf{succ}}
\newcommand{\csucc}{\textsf{lex-succ}}
\newcommand{\cpred}{\textsf{lex-pred}}
\newcommand{\problemtwo}{\textsc{Two String Families LCP Problem}\xspace}
\newcommand{\problemlambda}{$\lambda$ \textsc{String Families LCP Problem}\xspace}
\newcommand{\F}{\mathcal{F}}
\newcommand{\M}{\mathcal{M}}
\renewcommand{\L}{\mathcal{L}}
\newcommand{\R}{\mathcal{R}}
\newcommand{\T}{\mathcal{T}}
\newcommand{\TT}{\mathcal{T}}
\newcommand{\maxPairLCP}{\mathrm{maxPairLCP}}
\newcommand{\maxMultiLCP}{\mathrm{maxMultiLCP}}
\newcommand{\maxLCP}{\mathrm{maxLCP}}
\renewcommand{\P}{\mathcal{P}}
\newcommand{\Q}{\mathcal{Q}}
\newcommand{\LCP}{\mathsf{LCP}}
\newcommand{\LCPs}{\mathsf{LCPs}}
\newcommand{\sub}{\subseteq}
\newcommand{\sm}{\setminus}
\newcommand{\U}{\mathcal{U}}
\newcommand{\V}{\mathcal{V}}
\newcommand{\G}{\mathcal{G}}
\newcommand{\Gam}{\mathbf{G}}
\newcommand{\val}{\mathsf{val}}
\newcommand{\Pam}{\mathbf{P}}
\newcommand{\Zp}{\mathbb{Z}_+}
\newcommand{\nnext}{\mathsf{next}}
\newcommand{\pprev}{\mathsf{prev}}
\newcommand{\LeftMis}{\mathsf{LeftMisper}}
\newcommand{\RightMis}{\mathsf{RightMisper}}
\newcommand{\Misp}{\mathsf{Misper}}
\newcommand\twocol[2]{%
\begin{center}%
\begin{minipage}[t]{0.5\textwidth}%
{#1}%
\end{minipage}\hfill%
\begin{minipage}[t]{0.5\textwidth}%
{#2}%
\end{minipage}%
\end{center}}
\definecolor{darkblue}{rgb}{0.2,0.2,0.6}
\newcommand{\defproblem}[3]{
  \vspace{2mm}
\noindent\fbox{
  \begin{minipage}{0.96\textwidth}
  \textsc{#1}\\
  {\bf{Input:}} #2  \\
  {\bf{Output:}} #3
  \end{minipage}
  }
  \vspace{2mm}
}
\DeclarePairedDelimiter{\floor}{\lfloor}{\rfloor}
\title{Faster Algorithms for Longest Common Substring}
\author[1]{Panagiotis Charalampopoulos}
\author[2]{Tomasz Kociumaka}
\author[3,4]{Jakub Radoszewski}
\author[5,6]{Solon~P.~Pissis}
\affil[1]{King's College London, UK\\
    \texttt{p.charalampopoulos@kcl.ac.uk}}
\affil[2]{Max Planck Institute for Informatics, Saarland Informatics Campus, Germany\\
    \texttt{tomasz.kociumaka@mpi-inf.mpg.de}}
\affil[3]{University of Warsaw, Poland\\
    \texttt{jrad@mimuw.edu.pl}}
\affil[4]{Samsung R\&D Institute Poland}
\affil[5]{CWI, Amsterdam, The Netherlands\\
    \texttt{solon.pissis@cwi.nl}}
\affil[6]{Vrije Universiteit, Amsterdam, The Netherlands}
\date{\vspace{-5ex}}
\begin{document}
\maketitle
\thispagestyle{empty}

\begin{abstract}
In the classic longest common substring (LCS) problem, we are given two strings $S$ and $T$ of total length $n$ over an alphabet of size $\sigma$, and we are asked to find a longest string occurring as a fragment of both $S$ and~$T$.
Weiner, in his seminal paper that introduced the suffix tree, presented an $\cO(n \log \sigma)$-time algorithm for the LCS problem [SWAT 1973]. For polynomially-bounded integer alphabets, the linear-time construction of suffix trees by Farach yielded an $\cO(n)$-time algorithm for the LCS problem [FOCS 1997]. 
However, for small alphabets, this is not necessarily optimal for the LCS problem in the word RAM model of computation, in which  the strings can be stored in $\cO(n \log \sigma/\log n )$ space and read in $\cO(n \log \sigma/\log n )$ time.
We show that we can compute an LCS of two strings in time $\cO(n \log \sigma / \sqrt{\log n})$ in the word RAM model, which is sublinear in~$n$ if $\sigma=2^{o(\sqrt{\log n})}$ (in particular, if $\sigma=\Oh(1)$), using optimal space $\cO(n \log \sigma/\log n)$. 
In fact, it was recently shown that this result is conditionally optimal [Kempa and Kociumaka, STOC 2025]. The same complexity can be achieved for computing an LCS of $\lambda = \Oh(\sqrt{\log n}/\log \log n)$ input strings of total length $n$.

We then lift our ideas to the problem of computing a $k$-mismatch LCS, which has received considerable attention in recent years. In this problem, the aim is to compute a longest substring of $S$ that occurs in~$T$ with at most $k$ mismatches. Flouri et al.~showed how to compute a 1-mismatch LCS in $\Oh(n \log n)$ time [IPL 2015]. Thankachan et al.~extended this result to computing a $k$-mismatch LCS in $\Oh(n \log^k n)$ time for $k=\Oh(1)$ [J.~Comput.~Biol.~2016]. 
We show an $\Oh(n \log^{k-0.5} n)$-time algorithm, for any constant integer $k>0$ and \emph{irrespective} of the alphabet size, using $\Oh(n)$ space as the previous approaches.
We thus notably break through the well-known $n \log^k n$ barrier, which stems from a recursive heavy-path decomposition technique that was first introduced in the seminal paper of Cole et al.\ for string indexing with $k$ errors [STOC 2004]. As a by-product, we improve upon the algorithm of Charalampopoulos et al.\ [CPM 2018] for computing a $k$-mismatch LCS in the case when the output $k$-mismatch LCS is sufficiently long.
\end{abstract}

\clearpage
\setcounter{page}{1}

\section{Introduction}

In the classic longest common substring (LCS) problem, we are given two strings $S$ and $T$ of total length $n$ over an alphabet of size $\sigma$, and we are asked to compute a longest string occurring as a fragment of both $S$ and $T$ (see \cref{fig:0-1-LCS} for an example). The problem was conjectured by Knuth to require $\Omega(n\log n)$ time until Weiner, in his seminal paper introducing the suffix tree~\cite{DBLP:conf/focs/Weiner73}, showed that the LCS problem can be solved in $\cO(n)$ time when $\sigma$ is constant via constructing the suffix tree of string $S\#T$ for a sentinel letter $\#$. 
Later, Farach showed that even if $\sigma$ is not constant, the suffix tree can be constructed in linear time in addition to the time required for sorting the characters of $S$ and $T$~\cite{DBLP:conf/focs/Farach97}. This yielded an $\cO(n)$-time algorithm for the LCS problem in the word RAM model for polynomially-bounded integer alphabets. While Farach's algorithm for suffix tree construction is \emph{optimal} for all alphabets (the suffix tree by definition has size $\Theta(n)$), optimality does not carry over to the LCS problem. We were thus motivated to answer the following basic question: 

\begin{center}
\emph{Can the LCS problem be solved in $o(n)$ time when $\log\sigma=o(\log n)$?} 
\end{center}

\begin{figure}[ht]
\centering
\begin{tikzpicture}
\begin{scope}[xscale=0.7]
  \begin{scope}
    \foreach \x/\c in {1/c,2/b,8/a,9/b,10/c,11/a,12/b,13/c,14/b,15/a}{
      \draw (\x*0.5,0) node[above] {\texttt{\c}};
    }    
    \foreach \x/\c in {3/b,4/a,5/a,6/b,7/c}{
      \draw (\x*0.5,0) node[above] {\textcolor{blue}{\texttt{\c}}};
    }    
    \draw[yshift=0.2cm] (1.25,0) -- (1.25,-0.2) -- (3.75,-0.2) -- (3.75,0);
  \end{scope}
  \begin{scope}[yshift=-1cm]
    \foreach \x/\c in {1/d,2/a,3/a,9/b,10/b,11/a}{
      \draw (\x*0.5,0) node[above] {\texttt{\c}};
    }    
    \foreach \x/\c in {4/b,5/a,6/a,7/b,8/c}{
      \draw (\x*0.5,0) node[above] {\textcolor{blue}{\texttt{\c}}};
    }    
    \draw[xshift=0.5cm,yshift=0.2cm] (1.25,0) -- (1.25,-0.2) -- (3.75,-0.2) -- (3.75,0);
  \end{scope}
  \begin{scope}[xshift=10cm]
    \foreach \x/\c in {1/c,2/b,3/b,4/a,5/a,6/b,7/c,15/a}{
      \draw (\x*0.5,0) node[above] {\texttt{\c}};
    }
    \foreach \x/\c in {8/a,9/b,11/a,12/b,13/c,14/b}{
      \draw (\x*0.5,0) node[above] {\textcolor{blue}{\texttt{\c}}};
    }
    \foreach \x/\c in {10/c}{
      \draw (\x*0.5,0) node[above] {\textcolor{red}{{\texttt{\c}}}};
    }
    \draw[yshift=0.2cm] (3.75,0) -- (3.75,-0.2) -- (7.25,-0.2) -- (7.25,0);
  \end{scope}
  \begin{scope}[yshift=-1cm,xshift=10cm]
    \foreach \x/\c in {1/d,2/a,10/b,11/a}{
      \draw (\x*0.5,0) node[above] {\texttt{\c}};
    }
    \foreach \x/\c in {3/a,4/b,6/a,7/b,8/c,9/b}{
      \draw (\x*0.5,0) node[above] {\textcolor{blue}{\texttt{\c}}};
    }
    \foreach \x/\c in {5/a}{
      \draw (\x*0.5,0) node[above] {\textcolor{red}{{\texttt{\c}}}};
    }
    \draw[yshift=0.2cm,xshift=-2.5cm] (3.75,0) -- (3.75,-0.2) -- (7.25,-0.2) -- (7.25,0);
  \end{scope}
\end{scope}    
\end{tikzpicture}
\caption{Left: The LCS of the two strings has length 5. Right: The 1-LCS (1-mismatch LCS) of the same two strings has length 7; the mismatching letters are shown in red.}\label{fig:0-1-LCS}
\end{figure}

We consider the word RAM model and assume an alphabet $[0\dd \sigma)=\{0,1,\ldots,\sigma-1\}$. Any string of length $n$ can then be stored in $\cO(n \log \sigma/\log n )$ space and read in $\cO(n \log \sigma/\log n )$ time.
We answer the above question positively when $\log \sigma=o(\sqrt{\log n})$: 

\begin{theorem}\label{thm:sublinear}
    Given two strings $S$ and $T$ of total length $n$ over an alphabet $[0\dd \sigma)$, the LCS problem can be solved in
    $\cO(n\log\sigma/\sqrt{\log n})$ time using $\Oh(n \log\sigma/ \log n)$ space.
\end{theorem}

In a work subsequent to ours, Kempa and Kociumaka~\cite{DBLP:conf/stoc/KempaK25} showed that a variant of the dictionary matching problem underlies the hardness of a plethora of problems in the word RAM model. In particular, they showed that the time complexity in \cref{thm:sublinear} is optimal conditioned to said problem for all non-unary alphabets.
Notably, as a byproduct of their hardness reduction, they obtain a reduction from any instance of the LCS problem with strings of total length $\cO(n)$ under alphabet $[0\dd \sigma)$ to an instance of the LCS problem with strings of total length $\cO(n \log \sigma)$ under the binary alphabet.

The classic solution for the LCS problem over integer alphabets allows to compute an LCS of $\Oh(1)$ strings of total length $n$ in $\Oh(n)$ time; with extra care, it can also be extended to arbitrarily many strings in the same time complexity~\cite{DBLP:conf/cpm/Hui92}. We show that the result of \cref{thm:sublinear} can be achieved also for computing an LCS of $\lambda=\Oh(\sqrt{\log n}/\log \log n)$ input strings of total length $n$ (cf.\ \cref{thm:sublinear2}).

We also consider the following generalisation of the LCS problem that allows for mismatches (see \cref{fig:0-1-LCS} for an example).

\defproblem{$k$-Mismatch Longest Common Substring ($k$-LCS)}{
Two strings $S$ and $T$ of total length $n$ and an integer $k>0$.}{
A pair $S'$, $T'$ of substrings of $S$ and $T$, respectively, with Hamming distance (i.e., number of mismatches) at most $k$ and maximal length $|S'|=|T'|$.
}

Flouri et al.~presented an $\Oh(n \log n)$-time algorithm for the 1-LCS problem~\cite{DBLP:journals/ipl/FlouriGKU15}.
(Earlier work on this problem includes~\cite{DBLP:journals/poit/BabenkoS11}.)
This was generalised by Thankachan et al.~\cite{DBLP:journals/jcb/ThankachanAA16} to an algorithm for the $k$-LCS problem that works in $\Oh(n \log^k n)$ time if $k=\Oh(1)$. Both algorithms use $\Oh(n)$ space.  In~\cite{DBLP:conf/cpm/Charalampopoulos18}, Charalampopoulos et al.~presented an $\cO(n + n \log^{k+1} n/\sqrt{\ell})$-time algorithm  for $k$-LCS with $k=\Oh(1)$, where $\ell$ is the length of a $k$-LCS.\@
For general $k$, Flouri et al.~presented an $\Oh(n^2)$-time algorithm that uses $\cO(1)$ additional space~\cite{DBLP:journals/ipl/FlouriGKU15}.
Grabowski~\cite{DBLP:journals/ipl/Grabowski15} presented two algorithms with running times $\Oh (n ((k+1) (\ell_0+1))^k)$ and $\Oh (n^2 k/\ell_k)$, where $\ell_0$ and $\ell_k$ are, respectively, the length of an LCS of $S$ and~$T$ and the length of a $k$-LCS of $S$ and $T$.
Abboud et al.~\cite{Abboud:2015:MAP:2722129.2722146} employed the polynomial method to obtain a $k^{1.5} n^2 / 2^{\Omega(\sqrt{\log n/k})}$-time randomised algorithm.
In~\cite{DBLP:journals/algorithmica/KociumakaRS19}, Kociumaka et al.~showed that, assuming the Strong Exponential Time Hypothesis (SETH)~\cite{DBLP:journals/jcss/ImpagliazzoP01,DBLP:journals/jcss/ImpagliazzoPZ01}, no strongly subquadratic-time solution for $k$-LCS exists for $k=\Omega(\log n)$. The authors of~\cite{DBLP:journals/algorithmica/KociumakaRS19} additionally presented an $\Oh(n^{1.5}\log^2 n)$-time 2-approximation algorithm for $k$-LCS for general $k$; it was improved in~\cite{DBLP:conf/cpm/GourdelKRS20} to run in $\Oh(n^{4/3+o(1)})$ time for strings over a constant-sized alphabet.

Analogously to Weiner's solution to the LCS problem via the suffix tree, the algorithm of Thankachan et al.~\cite{DBLP:journals/jcb/ThankachanAA16} builds upon the ideas of the $k$-errata tree, which was introduced by Cole et al.~\cite{DBLP:conf/stoc/ColeGL04} in their seminal paper for indexing a string of length $n$ with the aim of answering pattern matching queries with up to $k$ mismatches (that is, under the Hamming distance) or errors (that is, under the edit distance).
For constant $k$, the size of the $k$-errata tree is $\cO(n \log^k n)$.
Let us stress that computing a $k$-LCS using the $k$-errata tree directly is not straightforward, as opposed to computing an LCS using the suffix tree.

We show the following result, breaking through the $n \log^k n$ barrier for all constant integers $k>0$ irrespectively of the alphabet size. Recall that, in the word RAM model, the letters of $S$ and~$T$ can be renumbered in $\Oh(n \log\log n)$ time~\cite{DBLP:journals/jal/Han04} so that they belong to $[0\dd \sigma)$, where $\sigma$ is the total number of distinct letters in $S$ and $T$.

\begin{restatable}{theorem}{klcs}\label{thm:klcs}
    Given two strings $S$ and $T$ of total length $n$ and a constant integer $k>0$, the $k$-LCS problem can be solved in
    $\cO(n\log^{k-1/2} n)$ time using $\cO(n)$ space.
\end{restatable}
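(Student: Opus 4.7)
The plan is to follow the recursive heavy-path decomposition of Cole et al.'s $k$-errata tree, as instantiated for $k$-LCS by Thankachan et al., but to stop the recursion one level early and handle the last level in bulk. Recall that each additional level of decomposition multiplies the number of residual subproblems by a factor of $\log n$; shaving a $\sqrt{\log n}$ factor therefore requires avoiding one such blow-up at some level. Concretely, I would perform $k-1$ full levels of heavy-path decomposition on the generalized suffix tree of $S$ and $T$; after these levels, each suffix of $S$ or $T$ has been associated with $\Oh(\log^{k-1}n)$ ``candidate contexts,'' and a $k$-LCS corresponds to a pair of candidates (one from $S$, one from $T$) that align with a common root-to-node path in the $(k-1)$-errata tree and whose extensions beyond that path agree up to at most one additional mismatch. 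I would encode each context by an offset $\ell$ together with a suffix-array range, and collect all contexts anchored at the same node into one instance of the \problem, whose overall input size (summed over all anchors) is $\Oh(n\log^{k-1}n)$.

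The key technical step is to solve the \problem on an instance of total size $N$ in $\Oh(N\sqrt{\log N})$ time and $\Oh(n)$ total space. The idea is to use a wavelet tree over the suffix-array ranks of the suffixes contributing to the families, backed by a bit-packed range-minimum oracle on the LCP array of the generalized suffix tree. A single ``best cross-family LCP within a pair of ranges'' query takes $\Oh(\log N)$ time at the wavelet tree, but by packing $\Theta(\log n/\log\log n)$ probes into a machine word and answering in batch all queries whose descents share a common wavelet-tree prefix, we can amortize each query down to $\Oh(\sqrt{\log N})$. The ``one remaining mismatch'' introduced by stopping the errata-tree recursion early reduces to a constant number of LCE-with-one-mismatch evaluations, each supported in $\Oh(1)$ time after $\Oh(n)$-space preprocessing on $S\#T$. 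Substituting the $\Oh(N\sqrt{\log N})$ bound into the $\Oh(n\log^{k-1}n)$ aggregate input size of the previous paragraph yields the advertised $\Oh(n\log^{k-1/2}n)$ total running time.

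I expect the hard part to be the improved solution of the \problem itself, in particular arranging the wavelet-tree walks so that the $\sqrt{\log n}$ speed-up actually materialises despite the families being heterogeneous in size and scattered throughout suffix-array rank space. A natural approach is to process all the instances arising from the $(k-1)$-errata tree in a single offline batch, sort all queries by suffix-array position once, and stream them through one global wavelet tree; the challenge then is to keep intermediate data within $\Oh(n)$ words, presumably by handling the batch in groups whose per-group input size is $\Oh(n)$. A secondary technical point is to verify that the errata-tree decomposition can be truncated after $k-1$ levels without losing any $k$-mismatch optimal pair; this should follow from the same combinatorial invariant (``a pattern with $\le k$ mismatches meets at least one node whose heavy-light decomposition isolates a mismatch at each level'') used by Cole et al., restricted to one fewer level.
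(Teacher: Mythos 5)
Your plan hinges on two claims, and both have genuine gaps. The central one is the assertion that the \problem on instances of total size $N$ can be solved in $\Oh(N\sqrt{\log N})$ time when the strings involved are arbitrary suffixes of $S\#T$. The paper explicitly treats an $o(N\log N)$ algorithm for the \emph{general} \problem as unknown and apparently hard; the $\sqrt{\log N}$-type speed-up of \cref{lem:main} is obtained only for $(\alpha,\beta)$-families, i.e., when the first components have length at most $\alpha$ (so the prefix-consistent wavelet tree has height $\Oh(\alpha+\log N)$ and its bit vectors can be word-packed) and the second components have length at most $\beta$ (so the $\LCP$ values fit in $\log\beta$ bits and $\Theta(\log_\beta N)$ of them can be processed per word). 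With full suffixes, each $\LCP$ value occupies $\Theta(\log n)$ bits, so the per-level work in any wavelet-tree sweep is $\Theta(N)$ words and you recover only $\Oh(N\log N)$; ``packing $\Theta(\log n/\log\log n)$ probes into a machine word'' does not circumvent this, because it is the \emph{data} (the running minima of $\LCP$ values), not the probes, that fails to pack. The second gap is the truncation of the errata-tree recursion after $k-1$ levels: after $k-1$ levels you must maximise, over all cross pairs in a group, an $\LCP$ \emph{allowing one further mismatch}, and this is not an instance of the (exact-$\LCP$) \problem. Your remark that the last mismatch ``reduces to a constant number of LCE-with-one-mismatch evaluations'' is true per pair, but the maximisation ranges over $\Theta(|\P|\cdot|\Q|)$ pairs; collapsing that maximisation to near-linear work is precisely what the $k$-th level of the Cole et al.\ decomposition is for, so it cannot simply be dropped.

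The paper takes a different route that sidesteps both issues. It first guesses the length $\ell$ of a $k$-LCS up to a factor of $2$ and constructs $\Oh(n/\ell)$ \emph{anchors} (via $\tau$-synchronising sets in the aperiodic case and misperiods of $\tau$-runs in the periodic case, \cref{prp:anchors}), so that some optimal solution is split by a synchronised anchor pair. This reduces $k$-LCS to computing $\maxPairLCP_{k',k-k'}(\U,\V)$ for $(\ell,\ell)$-families of size $N=\Oh(n/\ell)$. The full $k$ levels of the Cole/Thankachan-style decomposition are then applied (in the form of $(k_1,k_2)$-bicomplete families, \cref{lem:bicomplete,prp:maxpairlcpk}), producing \problem instances of total size $\Oh(N\min(\ell,\log N)^{k_1+k_2})$ whose strings are \emph{short} ($(\ell,\ell)$-families of modified substrings); these are solved by \cref{lem:problem} when $\ell$ is large and by the packed wavelet-tree algorithm of \cref{lem:main} when $\ell\le\log^{3/2}N$, and it is the combination of the $n/\ell$ anchor count with the length bound $\ell$ on the strings that yields $\Oh(n\log^{k-1/2}n)$ time within $\Oh(n)$ space. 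If you want to salvage your outline, the missing ingredients are precisely this anchor-based sparsification and the restriction to short strings; without them neither the time bound nor the $\Oh(n)$ space bound (your aggregate of $\Oh(n\log^{k-1}n)$ contexts already exceeds linear space unless streamed in batches against a sublinear-size structure) goes through.
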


Notably, on the way to proving the above theorem, we improve upon~\cite{DBLP:conf/cpm/Charalampopoulos18}
by showing an $\cO(n + \frac{n}{\ell} \log^{k+1} n)$-time algorithm for $k$-LCS with $k=\Oh(1)$, where $\ell$ is the length of a $k$-LCS; see \cref{cor:CPMbetter}. (Our second summand is smaller by a $\sqrt{\ell}$ multiplicative factor compared to~\cite{DBLP:conf/cpm/Charalampopoulos18}.)

\subsection{Our Techniques}
At the heart of our approaches lies the following \problemtwo. Here, the length of the longest common prefix of two strings $U$ and $V$ is denoted by $\LCP(U, V)$; for a precise definition of compacted tries, see \cref{sec:prelim}.

\defproblem{\problemtwo}{%
A compacted trie $\T(\F)$ of $\F\sub \Sigma^*$
and two sets $\P,\Q\sub \F^2$ whose elements are represented by pairs of pointers to nodes of $\T(\F)$, with $|\P|,|\Q|,|\F| \le N$.}
 {$\maxPairLCP(\P,\Q)=\max\{\LCP(P_1,Q_1)+\LCP(P_2,Q_2) : (P_1,P_2)\in \P,(Q_1,Q_2)\in \Q\}$.}

This abstract problem was introduced in~\cite{DBLP:conf/cpm/Charalampopoulos18}. The solution encapsulated in the following lemma
is directly based on a technique that was used in~\cite{DBLP:conf/cpm/BrodalP00,DBLP:journals/tcs/CrochemoreIMS06} and then in~\cite{DBLP:journals/ipl/FlouriGKU15} to devise an $\Oh(n \log n)$-time solution for 1-LCS.

\begin{lemma}[{\cite[Lemma 3]{DBLP:conf/cpm/Charalampopoulos18}}]\label{lem:problem}
The \problemtwo can be solved in $\Oh(N\log N)$ time and $\Oh(N)$ space.\footnote{The original formulation of~\cite[Lemma 3]{DBLP:conf/cpm/Charalampopoulos18} does not include a claim about the space complexity. However, it can be readily verified that the underlying algorithm, described in~\cite{DBLP:journals/tcs/CrochemoreIMS06,DBLP:journals/ipl/FlouriGKU15}, uses only linear space.}
\end{lemma}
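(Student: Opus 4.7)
The plan is to reduce to computing, for every node $v$ of $\T(\F_1)$, the maximum $\LCP(P_2,Q_2)$ over pairs in $\P\times\Q$ whose first components have $v$ as their LCA in $\T(\F_1)$, and then to batch all these sub-problems using a heavy-path decomposition of $\T(\F_1)$. First, I would preprocess both compacted tries for constant-time LCA (equivalently, LCP) queries in $\Oh(N)$ time and space. Define
\[
M(v) \;=\; \max\{\LCP(P_2,Q_2)\,:\,(P_1,P_2)\in\P,\ (Q_1,Q_2)\in\Q,\ \mathrm{LCA}_{\T(\F_1)}(P_1,Q_1)=v\}.
\]
Each ordered pair of $\P\times\Q$ contributes to exactly one $M(v)$, namely $v=\mathrm{LCA}_{\T(\F_1)}(P_1,Q_1)$, so $\maxPairLCP(\P,\Q)=\max_v(\mathrm{depth}_1(v)+M(v))$, and no double counting occurs.

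The structural observation driving the algorithm is that, given any set of leaves of $\T(\F_2)$ each labelled $P$- or $Q$-type and listed in DFS pre-order, the maximum LCA-depth in $\T(\F_2)$ across opposite-type pairs is always attained by some pair of \emph{adjacent} opposite-type elements in that order. This is because, in a compacted trie, the LCA of two leaves is an ancestor of the LCA of any pair of leaves between them in DFS order, and depth is monotone along ancestor chains. Hence to refresh $M(v)$ after combining the descendants of the children of $v$, it suffices to inspect the new opposite-type adjacencies that arise in the merged DFS-sorted list.

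Apply a heavy-path decomposition to $\T(\F_1)$ and process each heavy path bottom-up, maintaining a dynamic list $L$ of tagged second components from the already incorporated subtree, sorted by DFS pre-order in $\T(\F_2)$. Traversing a heavy edge costs nothing, since $L$ is inherited in place from the heavy child; when the traversal passes a node $v$, every light child $c$ contributes by merging its recursively built $L(c)$ into $L$, and each inserted element triggers $\Oh(1)$-time LCA queries on its two new neighbours of opposite type to refresh $M(v)$ and the global answer $\mathrm{depth}_1(v)+M(v)$. A balanced BST gives $\Oh(\log N)$ per insertion and per neighbour query; since every element sits in a light child only $\Oh(\log N)$ times along its leaf-to-root walk, the total work is $\Oh(N\log^2 N)$.

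The main obstacle is shaving the extra logarithmic factor to the claimed $\Oh(N\log N)$. As in~\cite{DBLP:journals/tcs/CrochemoreIMS06,DBLP:journals/ipl/FlouriGKU15}, this is done by presorting the contributions of each light subtree by DFS order in $\T(\F_2)$ once (globally, while the recursion returns) and then performing each merge as a single linear-time sorted scan driven by finger-style pointers into the running list, so the amortised per-element cost drops to $\Oh(1)$ on top of the initial $\Oh(N\log N)$ sort. Linear space is preserved because only the single active heavy-path list is alive at any time, with each light sublist being freed as soon as its elements have been absorbed. Once all $M(v)$ are computed, the answer is obtained by a final linear scan over $\T(\F_1)$.
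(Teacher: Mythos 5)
Your overall design is the same as the algorithm the paper is citing (the lemma is imported from Charalampopoulos et al.\ 2018, and the paper only sketches it): traverse $\T(\F_1)$ bottom-up, keep the second components of the pairs in each subtree in a structure sorted by their order in $\T(\F_2)$, merge small into large, and exploit the fact that the maximum $\LCP$ of second components over opposite-type pairs is always witnessed by two elements that are adjacent in the sorted order, so only the $\Oh(1)$ new adjacencies created by each insertion need to be inspected. That structural observation, the reduction via $M(v)$, and the safety of overcounting pairs whose LCA is a proper descendant of $v$ are all correct.

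The gap is in the last step, which is precisely where the $\Oh(N\log N)$ bound has to be earned. A ``linear-time sorted scan driven by finger-style pointers into the running list'' does not give amortised $\Oh(1)$ per inserted element: to place the $m$ sorted elements of a light subtree into the running list of size $n$, the finger must traverse every element of the running list lying between consecutive insertion positions, so a single merge can cost $\Theta(n)$ even when $m=1$. On a caterpillar-shaped $\T(\F_1)$ whose leaf keys interleave adversarially with the running list, this yields $\Theta(N^2)$ total work, and no presorting of the light sublists repairs it. What the cited works actually use --- and what the present paper's sketch points to --- is merging of height-balanced trees \`a la Brown and Tarjan, where merging trees of sizes $m\le n$ costs $\Oh\bigl(\log\binom{n+m}{m}\bigr)=\Oh(m\log(n/m)+m)$; the $\Oh(N\log N)$ total then follows because these logarithms telescope to $\Oh(\log N!)$ over the whole merge tree, not from an ``$\Oh(1)$ per element per light ancestor'' accounting. (With that tool the heavy-path decomposition also becomes unnecessary; plain bottom-up small-into-large merging suffices, which is how the paper describes it.) So you should replace the linked-list-with-fingers merge by mergeable balanced search trees, or at least by finger search trees with the $\Oh(m\log(n/m))$ batch-insertion bound, and invoke the telescoping argument for the total.
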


In particular, \cref{lem:problem} implies an $\Oh(n \log n)$-time algorithm for 1-LCS; see the following example.

\begin{example}
For a string $T$, let $T[i\dd j]$ denote the fragment of $T$ that starts at position $i$ (1-based) and ends at position $j$, and let $T^R$ be $T$ reversed. For two strings $S$ and $T$, we define the following sets of string pairs:
\begin{align*}
    \P&=\{((S[1\dd i-1])^R,S[i+1 \dd |S|])\,:\,i \in [1\dd|S|]\}\\
    \Q&=\{((T[1\dd i-1])^R,T[i+1 \dd |T|])\,:\,i \in [1\dd|T|]\}.
\end{align*}
Either the 1-LCS of $S$ and $T$ is simply their LCS, or the length of the 1-LCS of $S$ and $T$ equals $\maxPairLCP(\P,\Q)$. The underlying set $\F$ in the instance of \problemtwo is the set of suffixes of $S$, $T$, $S^R$, and $T^R$. The compacted trie $\T(\F)$ can be inferred in $\cO(n)$ time from the suffix tree of $S\#_1T\#_2S^R\#_3T^R$ for sentinel letters $\#_1,\#_2,\#_3$. Therefore, indeed \cref{lem:problem} implies an $\Oh(n \log n)$-time algorithm for 1-LCS.
\end{example}

In the algorithm underlying~\cref{lem:problem}, for each node $v$ of $\T(\F)$, we try to identify a pair of elements, one from $\P$ and one from $\Q$, whose first components are descendants of $v$ and the $\LCP$ of their second components is maximised. The algorithm traverses $\T(\F)$ bottom-up and uses mergeable height-balanced trees with $\Oh(N \log N)$ total merging time to store elements of pairs; see~\cite{DBLP:journals/jacm/BrownT79}.

No $o(N \log N)$ time solution to the \problemtwo is known, and devising such an algorithm seems difficult. The key ingredient of our algorithms is an efficient solution to a special case of the problem where we have bounds on the lengths of the involved strings. We say that a family $\mathcal{X}$ of string pairs is an \emph{$(\alpha,\beta)$-family} if each $(U,V) \in \mathcal{X}$ satisfies $|U| \le \alpha$ and $|V| \le \beta$.

\begin{restatable}{lemma}{lemmain}\label{lem:main}
An instance of the \problemtwo in which $\P$ and $\Q$ are $(\alpha,\beta)$-families can be solved in time 
$\cO(N(\alpha+\log N)(\log \beta +\sqrt{\log N})/ \log N)$ and space $\cO(N+N\alpha/\log N)$.
\end{restatable}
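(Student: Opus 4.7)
The plan is to adapt the bottom-up sweep of \cref{lem:problem} with two independent speedups exploiting the bounds $\alpha$ on first components and $\beta$ on second components. Recall the baseline: traverse $\T(\F_1)$ bottom-up and, at each node $v$, maintain mergeable sorted sets $R_{\P}(v)$ and $R_{\Q}(v)$ of the $\T(\F_2)$-ranks of pairs whose first component descends from $v$. When the children of $v$ are merged, each element of the smaller sibling's set queries its predecessor and successor in the larger sibling's set, and the candidate $\val(v)+\LCP(\cdot,\cdot)$ updates a global maximum. The classical small-to-large accounting charges $\cO(\log N)$ queries per pair, hence $\cO(N\log N)$ queries overall, and it is this baseline that I would speed up in two orthogonal ways.

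For the first components, I would exploit $|P_1|\le\alpha$ via bit-packing. After a standard alphabet reduction, each $P_1$ fits in $\cO(1+\alpha/\log N)$ machine words, so $\T(\F_1)$ can be radix-sorted and constructed in total $\cO(N+N\alpha/\log N)$ time and space, matching the space bound of the lemma. More subtly, I would reorganise the outer sweep to amortise processing against groups of pairs sharing long $P_1$-prefixes, packing $\cO(\log N)$ elementary events into a single word-level operation. This is what gives the $(\alpha+\log N)/\log N$ packing factor in the running time.

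For the second components, I would replace the balanced BSTs by a word-RAM data structure that supports predecessor, insertion, and small-to-large merging on $\cO(\log N)$-bit rank keys in $\cO(\sqrt{\log N})$ amortised time per element, for instance an atomic-heap/fusion-tree hybrid tailored for mergeable sets. Each query also triggers an $\LCP$ computation between two leaves of $\T(\F_2)$; this costs $\cO(1)$ via LCA preprocessing of $\T(\F_2)$, or $\cO(\log\beta)$ via a weighted ancestor query if we cannot afford the heavy preprocessing, giving the $\log\beta+\sqrt{\log N}$ per-operation cost. Multiplying the $\cO(N(\alpha+\log N)/\log N)$ packed events by the per-operation cost yields the claimed time bound.

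\paragraph*{Main obstacle.}
The hardest part will be the first speedup: shrinking the $\cO(N\log N)$ structural work of the sweep to $\cO(N(\alpha+\log N)/\log N)$ while still correctly identifying every merge event that the baseline identifies. This requires processing $\T(\F_1)$ in bit-packed batches rather than pointer-by-pointer, and interfacing that smoothly with the per-event work inside the mergeable predecessor structure so that the two savings multiply rather than interfere. A secondary difficulty is the amortised analysis of the mergeable structure, which must sustain the $\cO(\sqrt{\log N})$ per-element bound under the specific sequence of insertions and full-subtree merges induced by a bottom-up traversal of $\T(\F_1)$.
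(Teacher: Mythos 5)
Your proposal keeps the mergeable-set framework of \cref{lem:problem} and tries to accelerate it; the paper's proof of \cref{lem:main} abandons that framework entirely. It fixes one global order of $\P\cup\Q$ by second component, builds a wavelet tree of the \emph{first} components over a skeleton tree of height $\Oh(\alpha+\log N)$ obtained by binarising $\T(\F_1)$ (\cref{lem:C2}, \cref{thm:wavelet}), and propagates down this tree only the list of $\LCP$ values between consecutive second components. Since each such value is at most $\beta$, a machine word holds $\Theta(\log_\beta N)$ of them, and \cref{fact:trivial} lets a precomputed table update a whole block at once; this is the source of both the $(\alpha+\log N)/\log N$ and the $\log\beta$ factors. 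So the routes are genuinely different, and the difference is not cosmetic: it is what makes the packing possible.

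The gap in your plan is the first speedup, exactly the step you flag as the main obstacle. With small-to-large merging, each pair participates in $\Theta(\log N)$ merge events when $\T(\F_1)$ is balanced, and this event count cannot be reduced by reordering the traversal. Each event manipulates that pair's rank in $\T(\F_2)$, a $\Theta(\log N)$-bit key, so a machine word holds only $\Oh(1)$ of them; there is no way to ``pack $\Oh(\log N)$ elementary events into a single word-level operation'' while the per-event payload is a full rank. Note that for $\alpha=\Oh(\log N)$ and $\log\beta\le\sqrt{\log N}$ the claimed bound is $\Oh(N\sqrt{\log N})$, i.e., $\Oh(1/\sqrt{\log N})$ time per merge event, which is unattainable if each event performs a predecessor query on an $\Omega(\log N)$-bit key. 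The paper's packing succeeds precisely because the per-element payload carried down the tree is an $\Oh(\log\beta)$-bit $\LCP$ value and because no searching is ever performed: the order is fixed once up front, and ``merging'' is replaced by splitting a packed list according to a wavelet-tree bit vector. A secondary gap is the $\Oh(\sqrt{\log N})$-amortised mergeable predecessor structure, which you assert rather than construct; no such structure supporting arbitrary small-to-large merges is known, and even granting it, it would only reduce the per-event cost, not the $\Theta(N\log N)$ event count, so it cannot yield the stated bound on its own. (Also, the $\log\beta$ term in the lemma does not come from weighted-ancestor or $\LCP$ queries on $\T(\F_2)$ --- those are $\Oh(1)$ via LCA as you note --- but from the word-packing of $\LCP$ values.)
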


Observe that the algorithm of \cref{lem:main} works in $\Oh(N\sqrt{\log N})$ time if $\alpha = \Oh(\log N)$ and $\log \beta = \Oh(\sqrt{\log N})$.
The algorithm uses a wavelet tree (cf.\ \cite{DBLP:conf/soda/GrossiGV03}) of the first components of $\P \cup \Q$.

\paragraph{Solution to LCS}
For the LCS problem, we design three different algorithms and pick one depending on the length of the solution.
For short LCS ($\le \tfrac13\log_\sigma n$), we employ a simple tabulation technique.
For long LCS ($\ge \log^4 n$), we obtain a small instance of the \problemtwo by employing difference covers~\cite{DBLP:journals/ipl/ColbournL00,DBLP:journals/tocs/Maekawa85}, similarly to the solution of Charalampopoulos et al.~\cite{DBLP:conf/cpm/Charalampopoulos18} for the long $k$-LCS problem.
The main modification in this part is the usage of the sublinear longest common extension (LCE) data structure of Kempa and Kociumaka~\cite{DBLP:conf/stoc/KempaK19}. Our solutions for short LCS and long LCS work in $\Oh(n/\log_\sigma n)$ time.

As for medium-length LCS ($\ge \tfrac13\log_\sigma n$ and $\le \beta = 2^{\Oh(\sqrt{\log n})}$), let us first consider a case when the strings do not contain highly periodic fragments. In this case, we use the string synchronising sets of Kempa and Kociumaka~\cite{DBLP:conf/stoc/KempaK19} to select a set of $\cO(\frac{n}{\tau})$ anchors over $S$ and $T$, where $\tau=\Theta(\log_\sigma n)$, such that, for any common substring $U$ of $S$ and $T$ of length $\ell\geq 3\tau-1$, there exist occurrences $S[i^S \dd j^S]$ and $T[i^T \dd j^T]$ of $U$, for which we have anchors $a^S \in [i^S\dd j^S]$ and $a^T\in [i^T\dd j^T]$ with $a^S-i^S=a^T-i^T\le \tau$.
For each anchor $a$ in $S$, we add a string pair ${((S[a-\tau\dd a))^R,S[a \dd a+\beta))}$, possibly truncated at the ends of the string, to $\P$ (and similarly for $T$ and~$\Q$).
This lets us apply \cref{lem:main} with $N=\Oh(n/\tau)$, $\alpha=\Oh(\tau)$, and $\beta = 2^{\Oh(\sqrt{\log n})}$.
In the periodic case, we cannot guarantee that $a^S-i^S=a^T-i^T$ is small, but we can obtain a different set of anchors based on runs (maximal repetitions; cf.\ \cite{DBLP:conf/focs/KolpakovK99}) that yields multiple instances of the \problemtwo, which have extra structure leading to a linear-time solution.

\paragraph{Generalisation to many strings.}
We allow $\lambda=\Oh(\sqrt{\log n}/\log \log n)$ input strings.
Here, we also consider three cases based on the LCS length. The solution to short LCS is basically the same. For long LCS, we generalise difference covers from two to many strings, which might be a result of independent interest.
More precisely, we say that $(D,h)$ is a ($\lambda, d$)-\emph{cover} if $D \sub \mathbb{Z}_+$ is a set and $h : \Zp^\lambda \to [0\dd d)$ is an efficiently computable function such that, for any $i_1,\dots,i_\lambda\in \mathbb{Z}_+$ and $t\in [1\dd \lambda]$, we have $i_t+h(i_1,\dots,i_\lambda)\in D$.
The construction of ($\lambda, d$)-covers generalises the original construction behind $d$-covers~\cite{DBLP:journals/ipl/ColbournL00,DBLP:journals/tocs/Maekawa85}.

\begin{restatable}{theorem}{lambdadcover}\label{thm:lambda_dcover}
For every positive integers $\lambda\ge 2$ and $d$, there is a $(\lambda,d)$-cover $(D,h)$
such that the set $D\cap [1\dd n]$ is of size $\Oh(\lambda \cdot n/\sqrt[\lambda]{d})$.
After $\Oh(\log d)$-time initialization, the set $D\cap [1\dd n]$ can be constructed in $\Oh(\lambda \cdot n/\sqrt[\lambda]{d})$ time
and the function $h(i_1,\ldots,h_\lambda)$ can be evaluated in $\Oh(\lambda)$ time.
\end{restatable}

Further, we generalise the \problemtwo to $\lambda$ string families and obtain a counterpart of \cref{lem:problem} for this problem with a simpler but slower $\Oh(n \log^{\Oh(1)} n)$-time solution; this is still sufficient if the threshold for a long LCS is adjusted properly. We achieve this by opening the black box behind \cref{lem:problem} and extending a greedy property that stands behind it from two to~$\lambda$ input strings. 

For medium-length LCS, the application of string synchronising sets and runs does not require any essential changes. \cref{lem:main} can be carefully extended to $\lambda$ strings using said greedy property. Our solution to the special instances of the \problemtwo that arise due to periodicity in the $\lambda=\Oh(\sqrt{\log n}/\log \log n)$ input strings requires dynamic predecessor data structures. We keep the space $\Oh(n/\log_\sigma n)$ (and the solution deterministic) by applying a space-efficient version of van Emde Boas trees~\cite{DBLP:journals/ipl/Boas77} described in~\cite{DBLP:journals/jda/LagogiannisMT06,DBLP:journals/siamcomp/Willard00}.

\paragraph{Solution to $k$-LCS}
In this case, we also obtain a set of $\Oh(n/\ell)$ anchors, where $\ell$ is the length of a $k$-LCS. If the common substring is far from highly periodic, we use a string synchronising set for $\tau=\Theta(\ell)$, and otherwise we generate anchors using a technique of misperiods that was initially introduced for $k$-mismatch pattern matching~\cite{DBLP:conf/soda/BringmannWK19,DBLP:journals/jcss/Charalampopoulos21}.

Now, the families $\P,\Q$ need to consist not simply of substrings of $S$ and $T$ but rather of modified substrings generated by an approach that resembles $k$-errata trees~\cite{DBLP:conf/stoc/ColeGL04}. This requires combining the ideas of Thankachan et al.~\cite{DBLP:journals/jcb/ThankachanAA16}---who showed how to transform a family of strings into a family of modified strings such that $\LCE_k$ values for the former correspond to (maximal) $\LCE$ values for the latter---and (indirectly) Charalampopoulos et al.~\cite{DBLP:conf/cpm/Charalampopoulos18}---who lifted this idea to families consisting of pairs of strings that stem from ``sitting on anchors''.
Limiting the space usage of the algorithm to $\cO(n)$ proved to be quite technically challenging; note that, for example, this was not an issue in~\cite{DBLP:conf/cpm/Charalampopoulos18} where only a long enough LCS was sought.
Finally, we apply \cref{lem:problem} or \cref{lem:main} depending on the length $\ell$, breaking through the $n \log^k n$ barrier for $k$-LCS.

\subsection{Related Work} 

\subsubsection{Longest Common Substring}
Other results on the LCS problem include trade-offs between the time and the working space for computing an LCS of two strings~\cite{DBLP:conf/cpm/Nun0KK20,DBLP:conf/esa/KociumakaSV14,DBLP:conf/cpm/StarikovskayaV13}, internal LCS queries~\cite{DBLP:journals/algorithmica/AmirCPR20}, computing the LCS of two compressed strings~\cite{DBLP:conf/cccg/GagieGN13,DBLP:journals/tcs/MatsubaraIISNH09}, the dynamic maintenance of an LCS~\cite{DBLP:conf/cpm/AmirB18,DBLP:conf/spire/AmirCIPR17,DBLP:journals/algorithmica/AmirCPR20,DBLP:conf/icalp/Charalampopoulos20}, and the so-called heaviest induced ancestors queries that are a common theme of compressed and dynamic LCS~\cite{DBLP:journals/algorithmica/AbedinHGT22,DBLP:conf/cpm/Charalampopoulos23}.
The (long) LCS problem has also been recently considered in the quantum setting~\cite{DBLP:journals/corr/abs-2010-12122}, with almost optimal algorithms presented in~\cite{DBLP:journals/algorithmica/AkmalJ23,DBLP:conf/soda/JinN23}.

A version of the $k$-LCS problem in which it suffices to compute an LCS with \emph{approximately} $k$ mismatches has also been considered~\cite{DBLP:conf/cpm/GourdelKRS20,DBLP:journals/algorithmica/KociumakaRS19}.
The $k$-LCS problem has also been studied under edit distance. In this case, an $\Oh(n \log^{k} n)$-time algorithm is known~\cite{DBLP:conf/recomb/ThankachanACA18} (cf.~\cite{DBLP:conf/spire/AyadBCIP18} for an efficient average-case algorithm). It remains open if our approach can be used to improve upon the algorithm from~\cite{DBLP:conf/recomb/ThankachanACA18}. This would require substantial changes in the algorithm; a promising approach would be to use locked fragments (see \cite{DBLP:journals/siamcomp/ColeH02,DBLP:conf/focs/Charalampopoulos20}) instead of misperiods for computing anchors (see \cref{prp:anchors}).

\subsubsection{Stringology in the Word RAM Model}
A large body of work has been devoted to exploiting bit parallelism in the word RAM model for pattern matching~\cite{DBLP:journals/spe/Baeza-Yates89,DBLP:journals/spe/TarhioP97,DBLP:conf/cpm/NavarroR98,DBLP:conf/spire/Fredrikson02,DBLP:journals/ipl/Fredriksson03,DBLP:conf/wia/KleinB07a,DBLP:conf/cpm/Bille09,DBLP:conf/sofsem/CantoneF09,DBLP:conf/iwoca/Belazzougui10,DBLP:journals/jda/Bille11,DBLP:conf/cpm/BreslauerGG12,DBLP:journals/ipl/GiaquintaGF13,DBLP:journals/tcs/Ben-KikiBBGGW14}.
The problem of indexing a string of length $n$ over an alphabet $[0\dd \sigma)$ in the word RAM model, with the aim of efficiently answering pattern matching queries, has attracted significant attention~\cite{DBLP:conf/cpm/BilleGS17,DBLP:journals/algorithmica/BarbayCGNN14,DBLP:journals/talg/BelazzouguiN14,DBLP:journals/talg/BelazzouguiN15,DBLP:conf/stoc/KempaK19,DBLP:conf/cpm/MunroNN20,DBLP:conf/soda/MunroNN17,DBLP:journals/algorithmica/MunroNN20,DBLP:journals/siamcomp/NavarroN17}. Since, by definition, the suffix tree occupies $\Theta(n)$ space, alternative indexes have been sought. 
The state of the art is an index that occupies $\Oh(n \log \sigma /\log n)$ space and can be constructed in $\Oh(n \log \sigma / \sqrt{\log n})$ time; see Kempa and Kociumaka~\cite{DBLP:conf/stoc/KempaK19} and Munro et al.~\cite{DBLP:conf/cpm/MunroNN20}.  Interestingly, the running time of our algorithm (\cref{thm:sublinear}) matches the construction time of this index.
Note that, in contrast to suffix trees, such indexes \emph{cannot be used directly} for computing an LCS.
Intuitively, these indexes sample suffixes of the string to be indexed, and upon a pattern matching query, they treat separately the first $\cO(\log_\sigma n)$ letters of the pattern.

Bit parallelism was also considered in the context of other problems in stringology~\cite{DBLP:conf/cpm/Charalampopoulos22,DBLP:conf/mfcs/Charalampopoulos25,DBLP:conf/esa/BannaiE23,DBLP:conf/spire/Ellert23,DBLP:conf/focs/KempaK24}.

\subsubsection{\texorpdfstring{$k$}{k}-Mismatch Stringology Problems}
There are other problems in computational biology that allow for $k$ mismatches and for which $\Oh(n \log^{k} n)$-time algorithms are known. This includes the $k$-mismatch average common substring \cite{DBLP:journals/jcb/ThankachanAA16} and $k$-mappability \cite{DBLP:journals/algorithmica/Charalampopoulos22}.
It remains open if our techniques can be applied to these problems as well. Intuitively, the difficulty lies in that these problems seem to require computing maximal approximate $\LCE$ values for every suffix of the text separately.

As for $k$-mismatch indexing, for $k=\Oh(1)$, a $k$-errata tree occupies $\Oh(n \log^k n)$ space, can be constructed in $\Oh(n \log^k n)$ time, and supports pattern matching queries with at most $k$ mismatches in $\Oh(m+ \log^k n \log \log n+\textit{occ})$ time, where $m$ is the length of the pattern and $\textit{occ}$ is the number of the reported pattern occurrences. 
Other trade-offs for this problem, in which the product of space and query time is $\Omega(n\log^{2k} n)$, were shown in~\cite{DBLP:journals/jda/ChanLSTW11,DBLP:journals/jda/Tsur10} (a product $\Omega(n \log^{2k-1} n)$ of space and query time can be achieved for strings over constant-sized alphabets~\cite{DBLP:journals/algorithmica/ChanLSTW10}), and solutions with $\Oh(n)$ space but $\Omega(\min\{n,\sigma^km^{k-1}\})$-time queries were presented in~\cite{DBLP:journals/algorithmica/ChanLSTW10,DBLP:conf/cpm/Cobbs95,DBLP:journals/tcs/HuynhHLS06,DBLP:conf/cpm/Ukkonen93}. More efficient solutions for $k=1$ are known (see~\cite{DBLP:journals/algorithmica/Belazzougui15} and references therein).
Cohen-Addad et al.~\cite{DBLP:conf/soda/Cohen-AddadFS19} showed that, under SETH, for $k=\Theta(\log n)$ any indexing data structure that can be constructed in polynomial time cannot have $\Oh(n^{1-\delta})$ query time for any $\delta > 0$. They also showed that in the pointer machine model, for the reporting version of the problem with $k=o(\log n)$, exponential dependency on $k$ cannot be avoided in both the space and the query time. 
We hope that our techniques can fuel further progress in $k$-mismatch indexing.

\subsection{Conclusions}
We present $\cO(n \log \sigma / \sqrt{\log n})$-time and $\Oh(n \log^{k-0.5} n)$-time (assuming $k=\Oh(1)$) algorithms for computing the LCS and the $k$-LCS of two strings of total length $n$ over an alphabet of size $\sigma$, respectively. The bottleneck of both algorithms is the construction of a wavelet tree~\cite{DBLP:conf/soda/BabenkoGKS15,DBLP:journals/tcs/MunroNV16}. 
Improving the wavelet tree construction time could be a hard task; in particular, it would also improve upon the $\Oh(n \sqrt{\log n})$-time algorithm by Chan and Pătrașcu \cite{DBLP:conf/soda/ChanP10} for counting inversions of a permutation; cf.~\cite{DBLP:conf/stoc/KempaK19}.
As already mentioned, there is a matching conditional lower bound for LCS; see~\cite{DBLP:conf/stoc/KempaK25}.
However, in the case of the $k$-LCS problem, we are not aware of such a lower bound.

\subsection{Preliminary Version}
A preliminary version of this work was presented in~\cite{DBLP:conf/esa/Charalampopoulos21}. The current full version extends the sublinear-time LCS computation to many input strings, has (significantly) improved exposition, and includes all the proofs.

\section{Preliminaries}\label{sec:prelim}
\paragraph{Strings.} Let $T=T[1]T[2]\cdots T[n]$ be a \emph{string} (or \emph{text}) of length $n=|T|$ over an alphabet $\Sigma=[0\dd \sigma)$. The elements of $\Sigma$ are called \textit{letters}. By $\Sigma^*$ and $\Sigma^n$ we denote the sets of all finite strings and of all length-$n$ strings, respectively.

By $\varepsilon$ we denote the \emph{empty string}. For two positions $i$ and $j$ of $T$, we denote by $T[i\dd j]$ the \emph{fragment} of $T$ that starts at position $i$ and ends at position $j$ (the fragment is empty if $i>j$). A fragment of $T$ is represented using $\cO(1)$ space by specifying the indices $i$ and $j$. We define $T[i \dd j)=T[i \dd j-1]$ and $T(i \dd j]=T[i+1 \dd j]$. The fragment $T[i\dd j]$ is an \emph{occurrence} of the underlying \emph{substring} $P=T[i]\cdots T[j]$. We say that $P$ occurs at \emph{position} $i$ in $T$. A \emph{prefix} of $T$ is a fragment of $T$ of the form $T[1\dd j]$ and a \emph{suffix} of $T$ is a fragment of $T$ of the form $T[i\dd n]$.
For strings $U_1,U_2, \ldots, U_k$, by $\LCP(U_1,U_2,\ldots,U_k)$ we denote the length of the longest common prefix of $U_1,U_2, \ldots, U_k$. We denote the \emph{reverse string} of $T$ by $T^R$, i.e., $T^R=T[n]T[n-1]\cdots T[1]$. By $UV$ we denote the \emph{concatenation} of two strings $U$ and $V$, i.e., $UV=U[1]U[2]\cdots U[|U|]V[1]V[2]\cdots V[|V|]$.

The lexicographic order on strings is denoted by $\le$. We use the following simple fact that has extensive applications, e.g., in computations using suffix arrays~\cite{DBLP:journals/siamcomp/ManberM93}.

\begin{fact}\label{fact:trivial}
For strings $U_1 \le U_2 \le U_3$, $\LCP(U_1,U_3) = \min(\LCP(U_1,U_2),\LCP(U_2,U_3))$.
\end{fact} 

A positive integer $p$ is called a \emph{period} of a string $T$ if $T[i] = T[i + p]$ for all $i \in [1\dd  |T| - p]$.  We refer to the smallest period as \emph{the period} of the string, and denote it by $\per(T)$. A string $T$ is called \emph{periodic} if $\per(T)\leq |T|/2$ and \emph{aperiodic} otherwise. A \emph{run} in $T$ is a periodic fragment that cannot be extended (to the left nor to the right) without an increase of its smallest period. All runs in a string can be computed in linear time~\cite{DBLP:journals/siamcomp/BannaiIINTT17,DBLP:conf/focs/KolpakovK99}, even over arbitrary ordered alphabets~\cite{DBLP:conf/icalp/Ellert021}.

\begin{lemma}[Periodicity Lemma (weak version)~\cite{FW:periodicity-lemma}] 
If a string $S$ has periods $p$ and $q$ such that $p+q \leq |S|$, then $\gcd(p, q)$ is also a period of $S$.\label{lem:perlemma}
\end{lemma}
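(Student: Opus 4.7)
The plan is to prove this by strong induction on $p+q$. The base case is $p = q$, which is immediate: $\gcd(p,q) = p = q$ is already a period. For the inductive step, I would assume without loss of generality that $p > q \ge 1$ and aim to show that $p - q$ is also a period of $S$. Once that is established, since $\gcd(p-q, q) = \gcd(p,q)$ and $(p-q) + q = p \le |S|$ (because $p+q \le |S|$ and $q \ge 1$), applying the induction hypothesis to the pair $(p-q,\,q)$, whose sum is strictly less than $p+q$, will yield that $\gcd(p,q)$ is a period of $S$.

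The only real step is verifying that $p - q$ is a period, i.e., that $S[i] = S[i + p - q]$ for every $i$ with $1 \le i \le |S| - (p - q)$. I would split into two cases according to whether $i + p \le |S|$ or $i + p > |S|$. In the first case, I apply period $p$ at position $i$ to get $S[i] = S[i+p]$ and then period $q$ at position $i + p - q$ to get $S[i+p-q] = S[i+p]$; chaining these gives the desired equality, and both invocations are within range since $p > q$ and $i+p \le |S|$. In the second case, $i + p > |S|$ combined with the hypothesis $p + q \le |S|$ forces $i > q$, so I can instead apply period $q$ at position $i - q$ to get $S[i-q] = S[i]$, and period $p$ at the same position to get $S[i-q] = S[i-q+p] = S[i+(p-q)]$; the bound $i - q + p \le |S|$ needed for the second equality follows from the standing assumption $i \le |S| - (p - q)$.

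The crux is entirely in keeping the indices within $[1,|S|]$ when chaining the two given period relations, and the hypothesis $p + q \le |S|$ is precisely what makes the case analysis work; indeed, if $p+q > |S|$ the statement famously fails. I do not anticipate any genuine obstacle beyond carefully tracking these inequalities.
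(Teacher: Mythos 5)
Your proof is correct: both cases keep all indices in range, the reduction from $(p,q)$ to $(p-q,q)$ preserves the hypothesis since $(p-q)+q = p \le |S|$, and the strong induction terminates at $p=q$. The paper does not prove this lemma at all --- it cites it as a classical result of Fine and Wilf --- so there is nothing to compare against; your argument is the standard Euclidean-descent proof and is complete as written.
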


 \paragraph{Tries.}~Let $\M$ be a finite set containing $m>0$ strings over $\Sigma$. The \emph{trie} of $\M$, denoted by~$\R(\M)$, contains a node for every distinct prefix of a string in $\M$; the root node is $\varepsilon$; the set of terminal nodes is $\M$; and edges are of the form $(u,\alpha,u\alpha)$, where $u$ and $u\alpha$ are nodes and $\alpha\in\Sigma$. The \emph{compacted trie} of $\M$, denoted by $\TT(\M)$, contains the root, the branching nodes, and the terminal nodes of $\R(\M)$. 
Each maximal branchless path segment from $\R(\M)$ is replaced by a single edge, and a fragment of a string $M\in \M$ is used to represent the label of this edge in $\Oh(1)$ space. The best-known example of a compacted trie is the suffix tree~\cite{DBLP:conf/focs/Weiner73}. Throughout our algorithms, $\M$ always consists of a set of substrings or modified substrings with $k=\Oh(1)$ modifications (see~\cref{sec:kLCP} for a definition) of a reference string. 
The value $\val(u)$ of a node $u$ is the concatenation of labels of edges on the path from the root to $u$, and the \emph{string-depth} of $u$ is the length of $\val(u)$. The size of $\TT(\M)$ is $\Oh(m)$.
We use the following well-known construction (cf.~\cite{DBLP:books/daglib/0020103}).

\begin{lemma}\label{lem:ctrie}
Given a sorted list of $N$ strings and the longest common prefixes between pairs of consecutive strings in the list, the compacted trie of the strings can be constructed in $\Oh(N)$ time.
\end{lemma}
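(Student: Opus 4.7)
The plan is to build the compacted trie incrementally by scanning the given sorted list $S_1,\ldots,S_N$ from left to right, while maintaining the rightmost root-to-leaf path of the trie built so far, together with the string-depths of its nodes. This path, which we call the right spine, will be stored as a stack with the deepest node on top.

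I would initialize the trie with a root and a single leaf for $S_1$ connected by one edge labelled by the entire string $S_1$, represented in $\mathcal{O}(1)$ space as the fragment $S_1[1\dd |S_1|]$. For $i\ge 2$, let $\ell_i$ be the given length of the longest common prefix of $S_{i-1}$ and $S_i$. Because the list is sorted, $S_i$ must branch off from the trie at string-depth exactly $\ell_i$. I would pop nodes from the right spine until the top node $v$ has string-depth at most $\ell_i$. If the string-depth of $v$ equals $\ell_i$, I append to $v$ a new edge leading to a fresh leaf for $S_i$, whose label is represented by the fragment $S_i[\ell_i+1\dd |S_i|]$. Otherwise, the last popped node $w$ was a descendant of $v$ with string-depth strictly greater than $\ell_i$; I split the edge $(v,w)$ at string-depth $\ell_i$ by inserting a new branching node $u$, push $u$ on the spine, and hang the new leaf for $S_i$ from $u$ with the label $S_i[\ell_i+1\dd |S_i|]$. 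In either case, I finally push the new leaf onto the spine.

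Correctness follows from the standard observation that, in a compacted trie built over a lexicographically sorted set of leaves, every new leaf is appended to the rightmost path and branches off at depth equal to the LCP with the previous leaf; the split creates precisely the branching node required by the definition of a compacted trie, and all edge labels can be taken as fragments of one of the strings already in $\mathcal{M}$ (the edge from $u$ to the older descendant is a prefix of the label of the edge it replaced, while the new edge is a suffix of $S_i$). Since each node is pushed onto the spine at most once and popped at most once over the entire execution, the total cost of the spine traversals is $\mathcal{O}(N)$, and every other operation per string takes $\mathcal{O}(1)$ time, giving the claimed $\mathcal{O}(N)$ total time.

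The main technical point to be careful about is maintaining the fragment-based $\mathcal{O}(1)$ representation of edge labels during a split: when the edge $(v,w)$ is represented by a fragment $M[a\dd b]$ of some $M\in\mathcal{M}$, I need to replace it by the two fragments $M[a\dd a+\ell_i-\text{depth}(v)-1]$ and $M[a+\ell_i-\text{depth}(v)\dd b]$, which is immediate once the string-depths are stored at spine nodes. Assuming word-RAM arithmetic on indices, this takes $\mathcal{O}(1)$ time and preserves the invariant that every edge label is a fragment of a string in $\mathcal{M}$.
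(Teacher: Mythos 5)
Your proposal is correct and is exactly the standard left-to-right, stack-based construction (maintaining the rightmost root-to-leaf path and splitting an edge at depth $\LCP(S_{i-1},S_i)$ when needed) that the paper invokes without proof by citing the textbook construction; the amortized push/pop argument gives the claimed $\Oh(N)$ bound, and you correctly handle the fragment representation of edge labels under splits. No gaps.
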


 \paragraph{Packed strings.}~We assume the unit-cost word RAM model with word size $w= \Theta(\log n)$ and a standard instruction set including arithmetic operations, bitwise Boolean operations, and shifts. We count the space complexity of our algorithms in machine words.
 The \emph{packed representation} of a string $T \in [0\dd \sigma)^n$ is a list obtained by storing $\Theta(\log_\sigma n)$ letters per machine word thus representing~$T$ in $\Oh(n/\log_{\sigma}n)=\Oh(n \log \sigma/\log n)$ machine words. If $T$ is given in the packed representation, we simply say that $T$ is a \emph{packed string}.

\paragraph{Computations on packed data.}~We use the following abstract proposition to conveniently perform sequential computations on data given in a packed form. We consider any algorithm that interacts via $\Oh(1)$ input and output streams. In the proof, for every possible starting state of the algorithm's memory, we memoise the result of the algorithm after any $\tau$ single-bit read or write instructions. A full proof is given in \cref{app:prpstream}.

\begin{restatable}{proposition}{prpstream}\label{prp:streaming}
Let $\mathsf{A}$ be a deterministic streaming algorithm that reads $\Oh(1)$ input streams, writes $\Oh(1)$ output streams, uses $s$ bits of working space, and executes at most $t$ instructions (each in $\Oh(1)$ time) between subsequent read/write instructions.
The streams are represented in a packed form with word size $w$.
For every integer $\tau \in [1\dd w]$, if $s \le w$, after an $\Oh(2^{\tau+s}(\tau t+s))$-time preprocessing, any execution of $\mathsf{A}$ can be simulated in $\Oh(1+L/\tau)$ time, where $L$ is the total length (in bits) of all input and output streams.
\end{restatable}

For example, we will be using the proposition as follows. If for some $n$ the algorithm from the proposition uses $s=o(\log n)$ bits of space and executes $t=\Oh(1)$ instructions between subsequent reads and writes, then after $o(n)$ preprocessing, it can be simulated in $\Oh(1+L/\log n)$ time (for $\tau=\floor{\frac12 \log n}$).

\paragraph{String synchronising sets.}~Our solution uses the string synchronising sets that were introduced by Kempa and Kociumaka~\cite{DBLP:conf/stoc/KempaK19}.
Informally, in the simpler case that $T$ is cube-free, a $\tau$-synchronising set of $T$ is a small set of \emph{synchronising positions} in $T$ such that each length-$\tau$ fragment of $T$ contains at least one synchronising position, and the leftmost synchronising positions within two sufficiently long matching fragments of $T$ are consistent.

Formally, for a length-$n$ string $T$ and a positive integer $\tau \leq \frac12n$, a set $A\subseteq [1\dd  n-2\tau+1]$  is a \emph{$\tau$-synchronising set} of $T$ if it satisfies the following two conditions (see \cref{fig:SSS}):
\begin{enumerate}
    \item\label{item1} Consistency: If $T[i\dd i+2\tau)$ matches $T[j\dd j+2\tau)$, then $i\in A$ if and only if $j\in A$.
    \item\label{item2} Density: For $i\in [1\dd  n-3\tau +2]$, $A\cap [i\dd i+\tau)=\emptyset$ if and only if $\per(T[i\dd i+3\tau -2 ])\leq  \frac13\tau$.
\end{enumerate}

\begin{figure}[htpb]
\centering
\begin{tikzpicture}
  \begin{scope}
    \draw (12,0.3*0.6) node[right] {$\tau=3$};
    \foreach \x in {0,2,5,11,12,15}{
      \filldraw[xshift=0.25cm] (\x*0.5+0.25,-0.1) circle (0.07cm);
    }
    \foreach \x in {1,...,22}{
      \filldraw[xshift=0.25cm] (\x*0.5-0.25,-0.1) node[below] {\small \x};
    }
    \draw[xshift=0.5cm,yshift=1.2cm,red] (0.5*0.5,0.1*0.6) rectangle node {$T[2 \dd 8)$} (6.5*0.5,0.8*0.6);
    \draw[red,yshift=1.2cm,xshift=8cm] (0.5*0.5,0.1*0.6) rectangle node {$T[17 \dd 23)$} (6.5*0.5,0.8*0.6);
    \draw[green!50!black,yshift=0.6cm] (0.5*0.5,0.1*0.6) rectangle node {$T[1 \dd 7)$} (6.5*0.5,0.8*0.6);
    \draw[green!50!black,xshift=7.5cm,yshift=0.6cm] (0.5*0.5,0.1*0.6) rectangle node {$T[16 \dd 22)$} (6.5*0.5,0.8*0.6);
    \foreach \x/\c in {1/a,2/b,3/a,4/a,5/b,6/c,7/a,8/a,9/a,10/a,11/a,12/a,13/a,14/a,15/a,16/a,17/b,18/a,19/a,20/b,21/c,22/a}{
      \draw (\x*0.5,0) node[above] {\texttt{\c}};
    }    
    \draw[snake=brace,xshift=0.75cm] (1.4,-0.7*0.6-0.1) -- node[below] {\textcolor{green!50!black}{Exists synch.}} (0.1,-0.7*0.6-0.1);
    \draw[snake=brace,xshift=3.25cm] (2.4,-0.7*0.6-0.1) -- node[below] {\textcolor{red}{No synch.}} (0.1,-0.7*0.6-0.1);
    \foreach \x in {3,3.5,...,7.5}{
      \draw[xshift=\x cm] (0.25,0.6*0.6) .. controls (0.4,0.9*0.6) and (0.6,0.9*0.6) .. (0.75,0.6*0.6);
    }
    \draw (5.5,0.9*0.6) node[above] {$\tau$-run};
  \end{scope}
  \end{tikzpicture}
  \caption{An example of a $\tau$-synchronising set ($\tau=3$) of this string is $A=\{1,3,6,12,13,16\}$. Fragments $T[1 \dd 7)$ and $T[16 \dd 22)$ match and thus $1$, $16$ are both in $A$. Fragments $T[2 \dd 8)$ and $T[17 \dd 23)$ match and thus $2$, $17$ are both \emph{not} in $A$. Among every 3 consecutive positions (that are sufficiently far from the end of the string) there is a synchronising position, except for the positions $7,\ldots,11$ which imply a long fragment with period $\frac13\tau=1$ (a so-called $\tau$-run).}\label{fig:SSS}
\end{figure}

\begin{theorem}[{\cite[Proposition 8.10, Theorem 8.11]{DBLP:conf/stoc/KempaK19}}]\label{thm:synch_packed}
For a string $T \in [0\dd \sigma)^n$ with $\sigma=n^{\Oh(1)}$ and $\tau\le \frac12n$, there exists a $\tau$-synchronising set of size $\cO(n/\tau)$ that can be constructed in $\Oh(n)$ time or, if $\tau \leq \frac15 \log_\sigma n$, in $\cO(n/\tau)$ time if $T$ is given in a packed representation.
\end{theorem}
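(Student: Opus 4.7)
The plan is to construct $A$ via a content-local rule depending only on each length-$2\tau$ window, which immediately yields the first defining property. Fix a total order $\prec$ on length-$\tau$ strings (lex order, with ties broken using the remaining content of the $2\tau$-window in a canonical way), and declare $i\in A$ exactly when $T[i\dd i+\tau)$ is the strict $\prec$-minimum of the sliding sequence $T[i\dd i+\tau),\ T[i+1\dd i+\tau+1),\ \ldots,\ T[i+\tau-1\dd i+2\tau-1)$. Since this decision depends only on $T[i\dd i+2\tau)$, the first property holds by construction.

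For the second property I invoke the periodicity lemma. If $\per(T[i\dd i+3\tau-2])\le \tau/3$, then every length-$\tau$ substring occurring in $T[i\dd i+\tau)$ admits an equal shift within the same window, precluding a strict minimum and forcing $A\cap [i,i+\tau)=\emptyset$. Conversely, if $\per(T[i\dd i+3\tau-2])>\tau/3$, then distinct length-$\tau$ fragments inside the window are pairwise distinct (any equality would yield a period that the periodicity lemma would compose down to something at most $\tau/3$), so the sliding sequence has a unique minimum and $A\cap [i,i+\tau)\ne\emptyset$. The size bound $|A|=\cO(n/\tau)$ then follows from a charging argument: two synchronising positions $i<j$ with $j-i<\tau$ would, via their minimality conditions on overlapping windows, induce an equality of fragments of length $\ge \tau$, which, propagated across a longer stretch, would force a small period contradicting the non-periodicity witnessed by their very presence; hence $A$ is $\Omega(\tau)$-sparse outside highly periodic regions, and within such regions the second property forbids membership anyway.

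For the $\cO(n)$-time algorithm (with $\sigma=n^{\cO(1)}$), I build the suffix array of $T$ together with LCE support in $\cO(n)$ time, which enables $\cO(1)$ comparisons of length-$\tau$ fragments under $\prec$; a monotonic-deque sliding-window minimum over these comparisons computes all candidate anchors in $\cO(n)$ total time, and the linear-time runs algorithm filters out positions inside highly periodic windows. For the packed case with $\tau\le \tfrac15\log_\sigma n$, note that $\sigma^{3\tau-1}\le n^{3/5}$, so I precompute a lookup table, indexed by each length-$(3\tau-1)$ window, that returns its period class together with the offsets of the anchors lying inside; this table is built in $n^{3/5}\cdot\mathrm{poly}(\tau)=o(n/\tau)$ time and fits in sublinear space. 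Reading $T$ in $\cO(n/\log_\sigma n)=\cO(n/\tau)$ word accesses and probing the table at overlapping $3\tau$-windows that step by $\tau$ then produces $A$ in the claimed $\cO(n/\tau)$ time. The hard part is ensuring that the tie-breaking rule remains coherent across window boundaries without inflating the alphabet budget beyond what a single machine word can hold; I would address this by working with a slightly enlarged context (still of length $\cO(\log_\sigma n)$) and verifying that overlapping probes agree where they overlap, as in the original Kempa--Kociumaka argument.
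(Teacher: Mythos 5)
This statement is imported verbatim from Kempa and Kociumaka, so the paper offers no proof to compare against; judged on its own terms, your construction has a genuine gap at its core. Declaring $i\in A$ whenever $T[i\dd i+\tau)$ is the strict lexicographic minimum of the $\tau$ fragments starting in $[i,i+\tau)$ does yield the consistency property, but it does not yield $|A|=\cO(n/\tau)$. Take $T=0^{n/2}1^{n/2}$: for every $i$ in a $\Theta(n)$-length middle range, $T[i\dd i+\tau)=0^{a}1^{\tau-a}$, and these fragments are strictly increasing in lexicographic order as $i$ grows, so every such $i$ is a strict forward minimum; none of the relevant windows has period at most $\frac13\tau$, and hence $|A|=\Theta(n)$. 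Your charging argument breaks precisely here: two nearby positions $i<j$ of $A$ impose minimality over two \emph{different} forward windows, and the two conditions are perfectly compatible (indeed $T[i\dd i+\tau)\prec T[j\dd j+\tau)$ is consistent with both); no equality of length-$\tau$ fragments is forced. This is exactly why the cited construction is more delicate: membership of $i$ is decided by whether the minimum identifier over the two-sided window of starting positions $[i,i+\tau]$ is attained at one of its two \emph{endpoints}, and, crucially, the identifier function is not the lexicographic order but a carefully chosen (randomised and then derandomised) labelling for which the expected number of windows whose minimum sits at an endpoint is $\cO(n/\tau)$. The density bound is a property of that labelling, not of local minimality per se.

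The converse half of property 2 is also not established. From $\per(T[i\dd i+3\tau-2])>\frac13\tau$ you cannot conclude that the length-$\tau$ fragments inside the window are pairwise distinct: an equality $T[j\dd j+\tau)=T[j+d\dd j+d+\tau)$ with $d\in(\frac13\tau,\tau]$ only yields a fragment of length $\tau+d$ with period $d$, and the periodicity lemma gives no contradiction with the full window having period greater than $\frac13\tau$. Moreover, even if all fragments were distinct, the minimising position over $[i,i+2\tau-2]$ may lie in $[i+\tau,i+2\tau-2]$, in which case no position of $[i,i+\tau)$ need be a strict minimum of its own forward window. The algorithmic layer you sketch (suffix-array-based $\cO(1)$ fragment comparisons with a sliding-window minimum, and tabulation over length-$\cO(\tau)$ windows when $\tau\le\frac15\log_\sigma n$) is the right machinery and matches how such sets are in fact computed, but it cannot be salvaged until the combinatorial rule itself is replaced by one with a provable $\cO(n/\tau)$ density.
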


As in~\cite{DBLP:conf/stoc/KempaK19}, for a $\tau$-synchronising set $A$,
let 
$\suc_A(i) := \min\{j \in A \cup \{n - 2\tau + 2\} : j \geq i\}$.

\begin{lemma}[{\cite[Fact 3.2]{DBLP:conf/stoc/KempaK19}}]\label{lem:fact3.2}
If $p=\per(T[i\dd i+3\tau -2 ])\leq  \frac13\tau$, then $T[i \dd \suc_A(i)+2\tau-1)$ is the longest prefix of $T[i\dd|T|]$ with period $p$.
\end{lemma}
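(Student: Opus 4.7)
The plan is to relate the statement to the unique maximal fragment $R = T[\gamma \dd \delta]$ of $T$ with period $p$ that contains $T[i \dd i+3\tau-2]$. Since any smaller period of $R$ would descend to its substring $T[i \dd i+3\tau-2]$, one first checks that $\per(R) = p$ as well, so $R$ is a run. By the maximality of $R$, the longest prefix of $T[i \dd |T|]$ with period $p$ is precisely $T[i \dd \delta]$, so it suffices to prove $\suc_A(i) = \delta - 2\tau + 2$; this will then give $T[i \dd \suc_A(i)+2\tau-1) = T[i \dd \delta]$, as required.

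For the direction $\suc_A(i) \geq \delta - 2\tau + 2$, I will establish $A \cap [i, \delta - 2\tau + 2) = \emptyset$ by invoking property~\ref{item2} at every position $j$ with $i \leq j \leq \delta - 3\tau + 2$: the window $T[j \dd j+3\tau-2]$ lies inside $R$, so its period is at most $p \leq \tau/3$, and property~\ref{item2} forces $A \cap [j, j+\tau) = \emptyset$. The union of these intervals is exactly $[i, \delta - 2\tau + 2)$.

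The main remaining task is the matching upper bound. The case $\delta = n$ is immediate because $A \subseteq [1, n-2\tau+1]$ makes $\suc_A(i) = n-2\tau+2$ by default. When $\delta < n$, it suffices to show $\per(T[\delta-3\tau+3 \dd \delta+1]) > \tau/3$, since the contrapositive of property~\ref{item2} will then put an element of $A$ into $[\delta-3\tau+3, \delta-2\tau+3)$, and combined with the emptiness from the previous paragraph, this pins that element to $\delta - 2\tau + 2$. Suppose otherwise, so that this length-$(3\tau-1)$ fragment has a period $q \leq \tau/3$; necessarily $q \neq p$, since $R$ breaks at position $\delta+1$. Then $T[\delta-3\tau+3 \dd \delta]$ carries both $p$ and $q$ with $p+q \leq 2\tau/3 \leq 3\tau-2$, so \cref{lem:perlemma} gives $d := \gcd(p,q)$ as a period of this substring. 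If $d < p$, then $d \mid p$ together with the length bound $|T[\delta-3\tau+3 \dd \delta]| \geq p$ lets us propagate $d$ to a period of all of $R$, contradicting $\per(R) = p$. If $d = p$, then $p \mid q$ and $q \geq 2p$, so iterating period $p$ inside $R$ yields $T[\delta+1-q] = T[\delta+1-p]$, whereas period $q$ of the extended fragment yields $T[\delta+1] = T[\delta+1-q]$; together these contradict $T[\delta+1] \neq T[\delta+1-p]$ guaranteed by the maximality of $R$.

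The main delicacy lies in this last step, where both sub-cases require chaining period-applications beyond the substring where the periods were originally derived; verifying that the accessed indices remain inside the fragments where $p$ and $q$ hold amounts to offset arithmetic that exploits the slack $p,q \leq \tau/3$ against $|R| \geq 3\tau-1$. The rest is a routine dictionary between the endpoints of $R$ and the structure of $A$ dictated by property~\ref{item2}.
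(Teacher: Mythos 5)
The paper does not prove this statement; it imports it verbatim as Fact~3.2 of Kempa and Kociumaka. Your reconstruction is correct and follows the standard argument from that source: identifying the longest prefix with period $p$ with the right end $\delta$ of the maximal periodic extension, using condition~\ref{item2} in one direction to empty $A\cap[i,\delta-2\tau+2)$, and in the other direction (via the Periodicity Lemma applied to the length-$(3\tau-1)$ window ending at $\delta+1$) to force a synchronising position at exactly $\delta-2\tau+2$; the boundary arithmetic and the two $\gcd$ sub-cases all check out.
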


\begin{lemma}[{\cite[Fact 3.3]{DBLP:conf/stoc/KempaK19}}]\label{lem:synch}
If a string $U$ with $|U|\geq 3\tau-1$ and $\per(U)>\frac13\tau$ occurs at positions $i$ and $j$ in $T$, then $\suc_A(i) - i = \suc_A(j) - j \leq |U| - 2\tau$.
\end{lemma}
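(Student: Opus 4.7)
The proof splits naturally into two tasks: (a) show the bound $\suc_A(i)-i \le |U|-2\tau$ (and symmetrically for $j$), and (b) show the equality $\suc_A(i)-i = \suc_A(j)-j$. Task (b) follows almost immediately from Condition~\ref{item1} in the definition of a $\tau$-synchronising set, so the real work lies in task (a).

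For task (b), I would observe that for every $i' \in [i, i+|U|-2\tau]$, the fragment $T[i'\dd i'+2\tau)$ sits inside the occurrence $T[i\dd i+|U|)$ of $U$ and coincides with $U[i'-i+1\dd i'-i+2\tau]$. Setting $j' := i'+(j-i) \in [j, j+|U|-2\tau]$, the corresponding fragment $T[j'\dd j'+2\tau)$ equals the \emph{same} fragment of $U$, so $T[i'\dd i'+2\tau) = T[j'\dd j'+2\tau)$. Condition~\ref{item1} then gives $i' \in A \iff j' \in A$. Thus $A \cap [i, i+|U|-2\tau]$ and $A \cap [j, j+|U|-2\tau]$ are exact translates of each other by $j-i$; once task~(a) guarantees these sets are non-empty, taking minima yields $\suc_A(i)-i = \suc_A(j)-j$.

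For task (a), the plan is to exhibit some $i' \in [i, i+|U|-3\tau+1]$ with $\per(T[i'\dd i'+3\tau-2]) > \tfrac{1}{3}\tau$; Condition~\ref{item2} then hands us a position $k \in A \cap [i', i'+\tau)$, and any such $k$ satisfies $k \le i + |U| - 2\tau$. I would argue by contradiction: assuming every length-$(3\tau-1)$ substring of $U$ has period $\le \tfrac{1}{3}\tau$, I would deduce $\per(U) \le \tfrac{1}{3}\tau$, contradicting the hypothesis. This is the heart of the argument and the only non-routine step. The natural route is induction on $m \ge 3\tau-1$, showing that $U[1\dd m]$ has a period $p_m \le \tfrac{1}{3}\tau$. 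The base $m=3\tau-1$ is the assumption. For the step, $U[1\dd m]$ has period $p \le \tfrac{1}{3}\tau$ and $U[m-3\tau+3\dd m+1]$ has period $q \le \tfrac{1}{3}\tau$; their overlap has length $3\tau-2 \ge p+q$, so the Periodicity Lemma (\cref{lem:perlemma}) yields $d := \gcd(p,q)$ as a period of the overlap. A short verification extends $d$ to a period of $U[1\dd m+1]$: since $d \mid p$, the equality $U[r]=U[r+p]$ propagates (by iterating the period $d$ on the overlap) to $U[r]=U[r+d]$ at position $r=1$; similarly, using $d \mid q$, it propagates to position $r=m+1-d$ on the right.

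Combining the two tasks: task (a) produces $k \in A \cap [i, i+|U|-2\tau]$, so $\suc_A(i)-i \le |U|-2\tau$; the analogous bound holds for $j$; and task (b) then equates the two offsets. The only delicate point is the Fine--Wilf induction that extends the local periods to a global period of $U$, which I would write out carefully to verify that all invoked equalities $U[r]=U[r+d]$ stay within indices where the inductive hypothesis or Condition~\ref{item2} actually applies.
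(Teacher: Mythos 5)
The paper does not actually prove this lemma: it imports it verbatim as Fact~3.3 of Kempa and Kociumaka, so your self-contained derivation from Conditions~\ref{item1} and~\ref{item2} is necessarily a different route, and most of it is sound. Task~(b) is correct as you state it, and your reduction of task~(a) to ``some length-$(3\tau-1)$ window of $U$ has period $>\frac13\tau$'' is also correct, including the arithmetic $k\le i'+\tau-1\le i+|U|-2\tau$; note also (a point worth one sentence in a write-up) that since $A\sub[1,n-2\tau+1]$ the sentinel $n-2\tau+2$ exceeds every element of $A$, so once the windows $A\cap[i,i+|U|-2\tau]$ and $A\cap[j,j+|U|-2\tau]$ are shown non-empty, $\suc_A(i)$ and $\suc_A(j)$ really are the minima of these two translated sets.

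The one step that does not work as written is the left half of your extension argument. After the Periodicity Lemma (\cref{lem:perlemma}) gives $d=\gcd(p,q)$ as a period of the overlap $O=U[m-3\tau+3\dd m]$, you claim that for $r=1$ the equality $U[r]=U[r+p]$ ``propagates by iterating the period $d$ on the overlap'' to $U[r]=U[r+d]$. For large $m$ the positions $r+d,\dots,r+p$ lie nowhere near $O$, and $O$ is the only interval on which you know the period $d$, so this chain of equalities is not justified; showing that $U[1\dd m+1]$ has period $d$ genuinely needs an extra argument (e.g.\ project an arbitrary position into the last $p$ positions of $U[1\dd m]$ using the period $p$, and only then move by multiples of $d$ inside $O$). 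The cleaner repair is not to switch to $d$ at all: keep the \emph{same} period $p=\per(U[1\dd 3\tau-1])\le\frac13\tau$ as the inductive invariant. In the step you then only need the single new equality $U[m+1-p]=U[m+1]$: the window $U[m-3\tau+3\dd m+1]$ gives $U[m+1]=U[m+1-q]$, and the positions $m+1-q$ and $m+1-p$ both lie in $O$ and differ by a multiple of $d$, with every intermediate $d$-step between them staying inside $O$, so $U[m+1-q]=U[m+1-p]$. This yields $\per(U)\le p\le\frac13\tau$, the desired contradiction, and with this repair your proof is complete.
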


A \emph{$\tau$-run} $R$ is a run of length at least $3\tau-1$ with period at most $\frac13\tau$. The \emph{Lyndon root} of $R$ is the lexicographically smallest cyclic shift of $R[1\dd \per(R)]$.
The proof of the following lemma resembles an argument given in~\cite[Section 6.1.2]{DBLP:conf/stoc/KempaK19}; it is presented in \cref{app:taurons} for completeness.

\begin{restatable}{lemma}{lemtauruns}\label{lem:tauruns}
For a positive integer $\tau$, a string $T \in [0\dd \sigma)^n$ contains $\Oh(n/\tau)$ $\tau$-runs. 
Moreover, if $\tau\leq \frac{1}{4} \log_{\sigma} n$, given a packed representation of $T$, we can compute all $\tau$-runs in $T$ and group them by their Lyndon roots in $\Oh(n/\tau)$ time.
Within the same complexities, for each $\tau$-run, we can compute the two leftmost occurrences of its Lyndon root.
\end{restatable}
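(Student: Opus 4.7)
The plan is to establish the counting bound first and then build on the machinery of synchronising sets to extract, group, and annotate the $\tau$-runs within the packed-string budget. For the counting, I argue that the starting positions of distinct $\tau$-runs differ by more than $\tau/3$. Suppose $\tau$-runs $R_1, R_2$ with periods $p_1, p_2 \le \tau/3$ start at $i_1 < i_2 \le i_1 + \tau/3$. Their length-$(3\tau-1)$ prefixes then overlap in a fragment of length at least $\tfrac{8}{3}\tau - 1$, which exceeds $p_1 + p_2$, so Lemma~\ref{lem:perlemma} yields a common period $g := \gcd(p_1,p_2)$ on the overlap. If $p_1 \ne p_2$, a second application of the periodicity lemma (the overlap exceeds $p_2 + g$) extends $g$ to all of $R_2$, contradicting $p_2 = \per(R_2)$; if $p_1 = p_2 = p$, the overlap exceeds $p$ and forces $R_1 \cup R_2$ to be periodic with period $p$, contradicting run maximality. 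Hence there are $\Oh(n/\tau)$ $\tau$-runs.

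To extract the runs, I invoke Theorem~\ref{thm:synch_packed} to build a $\tau$-synchronising set $A$ of $T$ in $\Oh(n/\tau)$ time; the hypothesis $\tau \le \tfrac{1}{9}\log_\sigma n$ comfortably satisfies the required $\tau \le \tfrac{1}{5}\log_\sigma n$. By condition~\ref{item2} of synchronising sets, the set $B := \{i \in [1, n-3\tau+2] : A \cap [i, i+\tau) = \emptyset\}$ consists exactly of positions where a length-$(3\tau-1)$ fragment has period at most $\tau/3$. A short periodicity-lemma calculation shows that consecutive positions in $B$ force the same shortest period $p$ on their length-$(3\tau-1)$ fragments, so each maximal interval $[l, r'] \subseteq B$ is associated with a unique $\tau$-run whose starting position is $l$ and whose right extent is either $r' + 3\tau - 2$ or, near the right boundary of $T$, determined by Lemma~\ref{lem:fact3.2}. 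These maximal intervals are enumerated in $\Oh(n/\tau)$ time by scanning the sorted representation of $A$ and collecting gaps of length at least $\tau$.

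It remains to compute, for every $\tau$-run, its period, its Lyndon root, and the two leftmost occurrences of that root. All three pieces of information are determined by the length-$3\tau$ prefix $F$ of the run, which occupies at most $3\tau \log \sigma \le \tfrac{1}{3}\log n$ bits; hence $F$ fits in a single machine word and there are only $n^{1/3}$ distinct such prefixes. The plan is to precompute, once, a table indexed by $F$ storing (i) the shortest period $p$, (ii) the canonical Lyndon-root word of $F$ (at most $\tau/3$ letters, still one word), and (iii) the offsets of the first two occurrences of that root inside $F$; each entry is filled in $\Oh(\tau)$ time by a textbook linear-time routine on the window. The total table-construction cost is $\Oh(n^{1/3}\tau) = o(n/\tau)$, one lookup per run is $\Oh(1)$, and the final grouping by Lyndon-root word is a radix sort on $\Oh(n/\tau)$ single-word keys in $\Oh(n/\tau)$ time.

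The main obstacle I anticipate is keeping every tabulation step strictly within the $\Oh(n/\tau)$ budget: the constant $\tfrac{1}{9}$ in the hypothesis is precisely what makes the universe of length-$3\tau$ patterns polynomially smaller than $n/\tau$, so that naive $\Oh(\tau)$-per-entry preprocessing still fits. A secondary nuisance will be the formal verification that each maximal $B$-interval corresponds to exactly one $\tau$-run and that runs touching the boundaries of $T$ are captured correctly; this reduces to routine period bookkeeping via Lemma~\ref{lem:fact3.2}, together with the fact that $l-1 \notin B$ already pins the left end of the run.
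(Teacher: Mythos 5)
Your proposal is correct and follows essentially the same route as the paper's proof: a periodicity-lemma argument bounding how close two $\tau$-runs can be, extraction of run starts from the $\tau$-synchronising set via condition~\ref{item2} together with \cref{lem:fact3.2} for the right endpoints, and $\Oh(n^{1/3}\tau)$-time tabulation over the at most $\sigma^{3\tau}\le n^{1/3}$ short prefixes to obtain periods, Lyndon roots, and the leftmost root occurrences. The only differences (identifying run starts as left endpoints of maximal gaps in $A$ rather than via the condition $i-1\in A$, grouping by radix sort on the root word rather than direct table indexing, and storing both occurrence offsets instead of deriving the second from the first by periodicity) are cosmetic.
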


\begin{theorem}[{\cite[Theorem 4.3]{DBLP:conf/stoc/KempaK19}}]
Given a packed representation of a string $T \in [0\dd \sigma)^n$ and a $\tau$-synchronising set $A$ of $T$ of size $\Oh(n/\tau)$ for $\tau=\Oh(\log_\sigma n)$, we can compute in $\Oh(n/\tau)$ time the lexicographic order of all suffixes of $T$ starting at positions in $A$.\label{thm:suffixsort}
\end{theorem}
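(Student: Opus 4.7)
The plan is to reduce suffix sorting to a radix-sort-and-refine procedure driven by the two defining properties of a $\tau$-synchronising set. Throughout, the crucial budget observation is that $2\tau = \Oh(\log_\sigma n)$, so a length-$2\tau$ substring of $T$ is encoded by a single $\Oh(\log n)$-bit integer and can be read from the packed representation in $\Oh(1)$ time.

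First, for each $a\in A$ I would extract the ``name'' $N(a) := T[a\dd a+2\tau)$, producing all $|A|=\Oh(n/\tau)$ names in $\Oh(n/\tau)$ total time. The names are integers in $[0, \sigma^{2\tau}) \subseteq [0, n^{\Oh(1)})$, so I can radix-sort them in $\Oh(n/\tau)$ time using $\Oh(1)$ digits of size $\Oh(\log n)$, and assign each anchor a rank $r(a)\in [0,|A|)$. Anchors with distinct names are already placed in their correct relative lex-order of suffixes, because $N(a)$ is a prefix of $T[a\dd n]$; only ties among equal names remain.

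Second, I would resolve ties by reducing to a sorting problem on a shorter string over a smaller alphabet. Property~\ref{item1} guarantees that whenever $N(a)=N(b)$ the anchor structure of $T$ in the window starting at $a$ agrees with that at $b$ for the relevant prefix, so the ``distance to the next anchor'' is consistent across tied groups. This lets me form a derived string $T'$ of length $|A|$ whose characters are the ranks $r(a_1),r(a_2),\ldots$ taken in the order of the original positions $a_1<a_2<\cdots$ of anchors in $T$, and apply the same routine recursively to sort its suffixes; by Property~\ref{item1} this order coincides with the desired lex-order of the anchor-suffixes of $T$. The geometric shrinkage of the instance size together with the $\Oh(\cdot/\tau)$ cost of each level makes the total work telescope to $\Oh(n/\tau)$.

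The main obstacle is ensuring that the recursion actually shrinks fast enough and that the reduction is sound inside long periodic regions, where Property~\ref{item2} allows the gap between consecutive anchors to substantially exceed $\tau$. Here I would invoke \cref{lem:fact3.2} to encode the periodic stretches compactly by their period and length (and, if needed, group them via their Lyndon roots using \cref{lem:tauruns}), so that a periodic block contributes only $\Oh(1)$ derived characters and its internal suffix comparisons degenerate to comparisons of a cyclic rotation of a short root together with a length. Merging this periodic-case treatment with the name-based refinement in a way that preserves the $\Oh(n/\tau)$ budget and the correctness of the induced lexicographic order is the delicate step of the argument.
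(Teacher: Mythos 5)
First, note that the paper does not prove this statement at all: it is imported verbatim as \cite[Theorem 4.3]{DBLP:conf/stoc/KempaK19}, so there is no in-paper proof to compare against; your proposal has to be judged against the argument in Kempa and Kociumaka's paper, whose overall shape (derive a short string over an integer alphabet from the anchors, sort it, handle periodic stretches via their periods/Lyndon roots) your sketch does resemble.

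There are, however, two genuine gaps. The central one is in your second step: equality of the length-$2\tau$ names $N(a)=N(b)$ does \emph{not} make the ``distance to the next anchor'' consistent between $a$ and $b$. Membership of a position $a+\delta$ in $A$ is governed by the window $T[a+\delta\dd a+\delta+2\tau)$, which for $\delta>0$ extends past the name window $[a,a+2\tau)$; this is exactly why \cref{lem:synch} demands a common occurrence of length at least $(\suc_A(i)-i)+2\tau$, not just $2\tau$. Consequently, a derived string whose $i$-th character is merely the rank of the fixed-length name loses both the alignment between positions of $T'$ and offsets in $T$ (consecutive anchors are at varying distances) and the actual text between consecutive anchors, so the suffix order of $T'$ need not coincide with the order of the anchor suffixes of $T$. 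The correct construction takes as the character of anchor $a_i$ (in text order) essentially the fragment $T[a_i\dd a_{i+1}+2\tau)$; the consistency property then guarantees that equal characters force equal gaps $a_{i+1}-a_i=a_{j+1}-a_j$ and reduce the comparison to the suffixes at the next anchors, and that no character is a proper prefix of another in a harmful way. Ranking these variable-length characters in $\Oh(n/\tau)$ time is where the periodic case (long gaps, handled via \cref{lem:fact3.2} and Lyndon roots) must be integrated; it cannot be bolted on afterwards. The second gap is the recursion itself: ``apply the same routine recursively'' is circular, since the routine presupposes a synchronising set of the derived string, and no geometric shrinkage is established. The standard fix is to not recurse at all: once $T'$ is a string of length $\Oh(n/\tau)$ over an alphabet of integers bounded by $n^{\Oh(1)}$ whose character order is consistent with the reduction above, a linear-time suffix-array construction for integer alphabets finishes the job within the budget.
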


We often want to preprocess $T$ to be able to answer queries of the form $\LCP(T[i\dd n], T[j\dd n])$.
For this case, there exists an optimal data structure that is based on synchronising sets.

\begin{theorem}[{\cite[Theorem 5.4]{DBLP:conf/stoc/KempaK19}}]\label{thm:lce}
Given a packed representation of a string $T \in [0\dd \sigma)^n$, $\LCE$ queries on suffixes of $T$ can be answered in $\cO(1)$ time after $\cO(n/ \log_\sigma n)$-time preprocessing.
\end{theorem}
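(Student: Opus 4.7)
The plan is to use the synchronising set machinery already introduced in the preliminaries to reduce arbitrary $\LCE$ queries to $\LCE$ queries between sampled suffixes, so that only $\cO(n/\log_\sigma n)$ suffixes need to be indexed, and to dispatch the "tail'' of every query in $\cO(1)$ with packed-word tabulation. Fix $\tau := \lfloor \tfrac{1}{5}\log_\sigma n\rfloor$. First I would build a $\tau$-synchronising set $A$ of size $\cO(n/\tau)$ via \cref{thm:synch_packed}, then lex-sort the suffixes $\{T[i\dd n] : i\in A\}$ via \cref{thm:suffixsort}, both in $\cO(n/\tau)$ time. Alongside the sorted order, I would obtain the LCP values between lexicographically consecutive sampled suffixes and erect an RMQ index on this array of length $\cO(n/\tau)$, supporting $\cO(1)$ $\LCE$ queries between any two positions of $A$. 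In parallel, I would invoke \cref{lem:tauruns} to enumerate all $\tau$-runs and store, for every position $i$ with $[i,i+\tau)\cap A=\emptyset$, a pointer to the unique $\tau$-run containing $i$ together with its period and right endpoint. Finally, a lookup table on pairs of packed $2\tau$-letter blocks yields the length of their longest common prefix in $\cO(1)$; the table has at most $\sigma^{4\tau}\le n^{4/5}$ entries, each fillable in $\cO(1)$, so it fits within the preprocessing budget.

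\textbf{Query.} For a query $\LCE(i,j)$, I would first compare $T[i\dd i+2\tau)$ with $T[j\dd j+2\tau)$ through the packed table. If they disagree at offset $\ell<2\tau$, return $\ell$. Otherwise they agree, and I split into two subcases. If $[i,i+\tau)\cap A=\emptyset$, then by property~(\ref{item2}) of a synchronising set $T[i\dd i+3\tau-2]$ is periodic with some period $p\le \tfrac13\tau$, and since $T[i\dd i+2\tau)=T[j\dd j+2\tau)$ the same is true at $j$ with the same period; \cref{lem:fact3.2} lets me read off in $\cO(1)$ the maximal periodic extensions $L_i,L_j$ from the stored run information. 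If $L_i\neq L_j$, return $\min(L_i,L_j)$; otherwise recurse on $\LCE(i+L_i,\,j+L_j)$, which by maximality falls into the non-periodic subcase, so the recursion fires at most once. In the non-periodic subcase, \cref{lem:synch} guarantees that $\delta:=\suc_A(i)-i=\suc_A(j)-j\le \tau$, and the answer is $\delta + \LCE(\suc_A(i),\suc_A(j))$, where the second term is resolved in $\cO(1)$ by the RMQ over the sampled suffixes.

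\textbf{Main obstacle.} The delicate point is bootstrapping the LCPs between lexicographically consecutive sampled suffixes within the $\cO(n/\tau)$ budget, because Kasai-style traversals compare characters one by one, which would cost $\Theta(n)$. I would circumvent this by comparing in blocks of $2\tau$ letters via the packed table, and, whenever a $2\tau$-block match occurs inside a $\tau$-run, jumping directly to the end of the run using the stored run endpoint; since consecutive block matches that are not absorbed by a run give a genuine $2\tau$-letter agreement and the total sum of LCPs over $\cO(n/\tau)$ consecutive pairs telescopes linearly in $n$, this scheme still runs in $\cO(n/\tau)$ time. A second minor obstacle is the right-boundary behaviour, which I would smooth out by padding $T$ with a sentinel block as in the proof of \cref{lem:tauruns}, keeping the packed comparisons and the synchronising-set guarantees uniformly valid.
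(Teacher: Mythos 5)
This theorem is not proved in the paper at all: it is imported verbatim from Kempa and Kociumaka (Theorem 5.4 of the cited work), so your reconstruction has to stand on its own. It is in the right spirit (synchronising set, sparse suffix array with RMQ over consecutive LCPs, packed-word tables, runs for the periodic case), but three steps do not hold up. First, the claim that after the periodic case the recursion ``fires at most once'' is false: for $T=(a^{4\tau}b^{4\tau})^m$ and two suffixes whose LCE is long, the positions $i+L_i,\,j+L_j$ land at the start of another highly periodic stretch (with a different period), and your recursion advances only $\Theta(\tau)$ characters per step, giving $\Theta(\mathrm{LCE}/\tau)$ recursion depth rather than $\cO(1)$. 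The correct move in the case $L_i=L_j$ is to observe, via \cref{lem:fact3.2}, that the periodic prefix ends exactly at $\suc_A(i)+2\tau-1$, so $\suc_A(i)-i=\suc_A(j)-j$ and both are genuine synchronising positions; you should jump there and finish with a single RMQ over the sampled suffixes, never recursing past the run. Second, agreement of the $2\tau$-length blocks at $i$ and $j$ is too weak to transfer the case analysis from $i$ to $j$: property~(\ref{item2}) and \cref{lem:synch} speak about fragments of length $3\tau-1$, and with only a $2\tau$-match it can happen that $[i,i+\tau)\cap A=\emptyset$ while $[j,j+\tau)\cap A\neq\emptyset$ (e.g.\ $T[j\dd j+2\tau)=a^{2\tau}$ followed by $b$), in which case $T[j\dd j+3\tau-2]$ need not have period $p$, no run information is stored for $j$, and $\suc_A(j)-j$ need not equal $\suc_A(i)-i$. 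You need to compare blocks of length $3\tau-1$ (adjusting $\tau$ or splitting the table lookups so the tabulation stays sublinear) and make the case distinction symmetric in $i$ and $j$.

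Third, and most seriously, your resolution of the self-identified ``main obstacle'' --- computing the LCPs between lexicographically consecutive sampled suffixes in $\cO(n/\tau)$ time --- is not valid. The sum of these LCPs does not telescope to $\cO(n)$: in the periodic example above it is $\Theta(n^2/\tau)$, and even with your run-jumping each consecutive pair still costs $\Theta(\mathrm{LCP}/\tau)$ block steps, for $\Theta(n^2/\tau^2)$ total. \cref{thm:suffixsort} as quoted gives you only the lexicographic order, not the LCP array; obtaining the latter within $\cO(n/\tau)$ requires the reduction that Kempa and Kociumaka actually perform (renaming $\Theta(\tau)$-length contexts at synchronising positions and recursing on a reduced string of length $\cO(n/\tau)$, from which sparse LCPs are recovered), or some equally substantive argument. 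As it stands, this step is asserted rather than proved, and it is precisely where the cited proof does its real work.
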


\section{Sublinear-Time LCS}\label{sec:0LCS}

We provide different solutions depending on the length $\ell$ of an LCS.~\cref{lem:long,lem:short,lem:medium} directly yield~\cref{thm:sublinear}. 

\subsection{Solutions for Short and Long LCS}

\begin{lemma}[Short LCS]\label{lem:short}
The LCS problem can be solved in $\cO(n/\log_\sigma n)$ time if $\ell\leq \frac13 \log_\sigma n$.
\end{lemma}
\begin{proof}
Let us set $m=\lfloor\frac13 \log_\sigma n\rfloor$. 
Given the packed representations of strings $S$ and $T$, we can encode any length-$2m$ substring of the strings as a number in $[0\dd \floor{n^{2/3}})$ in $\Oh(1)$ time.
Sorting $\Oh(n/m)$ such numbers via Counting Sort takes $\Oh(n/m+n^{2/3})$ time.
We use the so-called standard trick: we split both $S$ and $T$ into $\Oh(n/m)$ fragments, each of length $2m$ (perhaps apart from the last one), starting at positions equivalent to $1 \pmod m$. For each of the strings, we obtain at most $\sigma^{2m}=\cO(n^{2/3})$ \emph{distinct} substrings corresponding to these fragments.
For each of the two strings, we can compute the distinct such substrings in $\Oh(n/m+n^{2/3})$ time by sorting all considered substrings.

\begin{figure}[htpb]
    \centering
    \begin{tikzpicture}[xscale=0.5,yscale=0.9]
  \begin{scope}
    \draw (0,0) rectangle (9.5,0.5);
    \draw (0,0.25) node[left] {$S$};
    \foreach \x in {0,2,4,6}{
      \draw[xshift = \x cm] (0,0.5) rectangle (2,1);
    }
    \draw[xshift = 8 cm] (0,0.5) rectangle (1.5,1);
    \foreach \x in {1,3,5,7}{
      \draw[xshift = \x cm] (0,1) rectangle (2,1.5);
    }
    \draw[latex-latex,densely dotted] (1,1.7) -- node[above] {$2m$} (3,1.7);
      \filldraw[blue!50!white] (0.2,0) rectangle (1.2,0.5);
      \draw[blue] (0.2,0) rectangle (1.2,0.5);
      \draw (1,0.75) node {\small $X_1$};
      \draw (2,1.25) node {\small $X_2$};
      \draw (3,0.75) node {\small \textcolor{black!50!white}{$X_2$}};
      \draw (4,1.25) node {\small \textcolor{black!50!white}{$X_2$}};
      \draw (5,0.75) node {\small $X_3$};
      \draw (6,1.25) node {\small \textcolor{black!50!white}{$X_1$}};
      \draw (7,0.75) node {\small \textcolor{black!50!white}{$X_3$}};
      \draw (8,1.25) node {\small \textcolor{black!50!white}{$X_2$}};
      \draw (8.75,0.75) node {\small $X_4$};
  \end{scope}
  \begin{scope}[yshift=-2.2cm]
    \draw (0,0) rectangle (9.8,0.5);
    \draw (0,0.25) node[left] {$T$};
    \foreach \x in {0,2,4,6}{
      \draw[xshift = \x cm] (0,0.5) rectangle (2,1);
    }
    \draw[xshift = 8 cm] (0,0.5) rectangle (1.8,1);
    \foreach \x in {1,3,5,7}{
      \draw[xshift = \x cm] (0,1) rectangle (2,1.5);
    }
      \filldraw[blue!50!white] (5.4,0) rectangle (6.4,0.5);
      \draw[blue] (5.4,0) rectangle (6.4,0.5);
      \draw (1,0.75) node {\small $Y_1$};
      \draw (2,1.25) node {\small \textcolor{black!50!white}{$Y_1$}};
      \draw (3,0.75) node {\small \textcolor{black!50!white}{$Y_1$}};
      \draw (4,1.25) node {\small $Y_2$};
      \draw (5,0.75) node {\small $Y_3$};
      \draw (6,1.25) node {\small \textcolor{black!50!white}{$Y_2$}};
      \draw (7,0.75) node {\small $Y_4$};
      \draw (8,1.25) node {\small \textcolor{black!50!white}{$Y_3$}};
      \draw (8.9,0.75) node {\small $Y_5$};
  \end{scope}

    \begin{scope}[xshift=13cm]
    \draw (0,0.25) node[left] {$X$};
    \draw (0,0) rectangle (2,0.5);
    \draw (1,0.5) node[above] {\small $X_1$};
    \filldraw[blue!50!white] (0.2,0) rectangle (1.2,0.5);
      \draw[blue] (0.2,0) rectangle (1.2,0.5);
    \draw (3,0) rectangle (5,0.5);
    \draw (4,0.5) node[above] {\small $X_2$};
    \draw (6,0) rectangle (8,0.5);
    \draw (7,0.5) node[above] {\small $X_3$};
    \draw (9,0) rectangle (10.5,0.5);
    \draw (9.75,0.5) node[above] {\small $X_4$};
    \draw (2.5,0.25) node {\footnotesize $\#_1$};
    \draw (5.5,0.25) node {\footnotesize $\#_2$};
    \draw (8.5,0.25) node {\footnotesize $\#_3$};
  \end{scope}
  \begin{scope}[yshift=-2.2cm,xshift=13cm]
    \draw (0,0.25) node[left] {$Y$};
    \draw (0,0) rectangle (2,0.5);
    \draw (1,0.5) node[above] {\small $Y_1$};
    \draw (3,0) rectangle (5,0.5);
    \draw (4,0.5) node[above] {\small $Y_2$};
    \filldraw[blue!50!white] (3.3,0) rectangle (4.3,0.5);
    \draw[blue] (3.3,0) rectangle (4.3,0.5);
    \draw (6,0) rectangle (8,0.5);
    \draw (7,0.5) node[above] {\small $Y_3$};
    \draw (9,0) rectangle (11,0.5);
    \draw (10,0.5) node[above] {\small $Y_4$};
    \draw (12,0) rectangle (13.8,0.5);
    \draw (12.9,0.5) node[above] {\small $Y_5$};
    \draw (2.5,0.25) node {\footnotesize $\$_1$};
    \draw (5.5,0.25) node {\footnotesize $\$_2$};
    \draw (8.5,0.25) node {\footnotesize $\$_3$};
    \draw (11.5,0.25) node {\footnotesize $\$_4$};
  \end{scope}
    \end{tikzpicture}
    \caption{Left: partitioning of $S$ and $T$ into fragments of length (up to) $2m$. All distinct substrings in $S$ and in $T$ are $X_1,\ldots,X_4$ and $Y_1,\ldots,Y_5$, respectively. The LCS of length $m$ (in blue) is a substring of $X_1$ in $S$ and of $Y_2$ in~$T$. Right: strings $X$ and $Y$; their LCS (shown in blue) is the same as the LCS of $S$ and $T$.}
    \label{fig:shortLCS}
\end{figure}

Let $X_1,\dots,X_p$ and $Y_1,\dots,Y_q$ be the resulting distinct substrings.
Then, the problem can be reduced to computing the LCS of $X=X_1\,\#_1\,X_2\,\#_2\,\dots \#_{p-1}\,X_p$ and $Y=Y_1\,\$_1\,Y_2\,\$_2\,\dots \$_{q-1}\,Y_q$ for distinct letters $\#_1,\dots,\#_{p-1},\$_1,\dots,\$_{q-1}$; see \cref{fig:shortLCS}. Strings $X$ and $Y$ have length $\Oh(n^{2/3}m)$ and their LCS
can be computed in linear time by constructing the suffix tree of $X\#Y$, where $\#$ is a letter that does not occur in $XY$~\cite{DBLP:conf/focs/Farach97}.
Thus, the overall time complexity is $\Oh(n/m+n^{2/3}m) = \cO(n/\log_\sigma n)$.
\end{proof}

The proof of the following lemma, for the case when an LCS is long, i.e., of length $\ell=\Omega(\frac{\log^4 n}{\log^2 \sigma})$, uses difference covers and the $\cO(N \log N)$-time solution to the \problemtwo.
This proof closely follows~\cite{DBLP:conf/cpm/Charalampopoulos18}.

\begin{lemma}[Long LCS]\label{lem:long}
The LCS problem can be solved in $\cO(n/\log_\sigma n)$ time if $\ell=\Omega(\frac{\log^4 n}{\log^2 \sigma})$.
\end{lemma}

Before proceeding to the proof, we need to introduce difference covers. We say that $(D,h)$ is a $d$-\emph{cover} if $D\sub \mathbb{Z}_+$ and $h$ is a (constant-time computable) function such that for positive integers $i,j$ we have $0\le h(i,j)< d$ and $i+h(i,j), j+h(i,j)\in D$.
The following fact synthesises a well-known construction implicitly used in~\cite{BurkhardtEtAl2003}, for example.
\begin{theorem}[\cite{DBLP:journals/ipl/ColbournL00,DBLP:journals/tocs/Maekawa85}]\label{thm:dcover}
For each positive integer $d$ there is a $d$-cover $(D,h)$
such that $D \cap [1\dd n]$ is of size $\Oh(\frac{n}{\sqrt{d}})$ and can be constructed in $\Oh(\frac{n}{\sqrt{d}})$ time.
\end{theorem}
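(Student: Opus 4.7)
The plan is to reduce the claim to constructing a \emph{difference cover modulo $d$}: a set $D_0\sub\{0,1,\ldots,d-1\}$ of size $\Oh(\sqrt d)$ such that for every $k\in\{0,1,\ldots,d-1\}$ there exist $a,b\in D_0$ with $a-b\equiv k\pmod d$, together with an $\Oh(1)$-time procedure that, given $k$, returns such a witness pair $(a,b)$.

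Given such $D_0$, I would define the $d$-cover by $D=\{i\in\mathbb{Z}_+:(i\bmod d)\in D_0\}$. Since $D_0$ accounts for an $\Oh(1/\sqrt d)$ fraction of the $d$ residues modulo $d$, this immediately yields $|D\cap[1,n]|=\Oh(n/\sqrt d)$, and the intersection is enumerated within the same time bound by sweeping, for each $r\in D_0$, through the arithmetic progression $r,r+d,r+2d,\ldots$ clipped to $[1,n]$ (the one-off cost of building $D_0$ itself is $\Oh(\sqrt d)$, which is dominated). For the function $h$, given positive integers $i,j$, compute $k=(i-j)\bmod d$, invoke the witness procedure to obtain $(a,b)\in D_0^2$ with $a-b\equiv k\pmod d$, and return $h=(a-i)\bmod d\in[0,d)$. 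Correctness is immediate: $i+h\equiv a\pmod d$ and $j+h\equiv a-k\equiv b\pmod d$, so both $i+h$ and $j+h$ lie in $D$, as required.

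For the construction of $D_0$, I would use a classical square-root construction in the style of Maekawa~\cite{DBLP:journals/tocs/Maekawa85}: setting $s=\lceil\sqrt d\,\rceil$, take
\[
D_0=\bigl(\{0,1,\ldots,s-1\}\cup\{0,s,2s,\ldots,(s-1)s\}\bigr)\bmod d,
\]
a set of at most $2s-1=\Oh(\sqrt d)$ elements. The witness procedure is essentially long division in base $s$: given $k\in\{0,\ldots,d-1\}$, write $k=qs+r$ with $0\le r<s$, which is possible since $k<d\le s^2$ forces $q<s$; then $a=r$ lies in the first block and the element of the second block congruent to $-qs\pmod d$ yields the desired $b$. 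Everything is $\Oh(1)$, and crucially no precomputed table of size $d$ is needed.

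The main obstacle, indeed the only non-routine step, is making the $\Oh(1)$-time witness retrieval work cleanly modulo an arbitrary $d$ rather than a perfect square: when $d<s^2$ one must show that the unique $b\in[0,d)$ congruent to $-qs\pmod d$ is always realised by an element of $\{0,s,\ldots,(s-1)s\}\bmod d$ and can be located by direct arithmetic. This is resolved by a bounded case analysis on whether $qs<d$; alternatively, one can invoke the explicit optimal difference-cover constructions in~\cite{BurkhardtEtAl2003,DBLP:journals/tocs/Maekawa85}, which provide both the size bound and the constant-time lookup for every $d$. All remaining ingredients---the lifting to $\mathbb{Z}_+$, the density bound, and the streamed enumeration of $D\cap[1,n]$---are then immediate.
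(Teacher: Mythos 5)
Your overall route is the standard one behind the cited works (the paper itself offers no proof, only the citation): build a difference cover $D_0$ modulo $d$ of size $\Oh(\sqrt d)$ with constant-time witness lookup, lift it periodically to $D=\{i:(i\bmod d)\in D_0\}$, and define $h(i,j)$ from a witness pair for $(i-j)\bmod d$. The lifting, the density bound, the enumeration of $D\cap[1,n]$, and the verification that $i+h,j+h\in D$ are all correct.

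The gap is in the witness-retrieval step, which is exactly the part you flag as non-routine. Writing $k=qs+r$ and taking $a=r$ from the first block forces $b\equiv -qs\pmod d$, and your claim that this residue is always realised by an element of $\{0,s,\ldots,(s-1)s\}\bmod d$ is false; moreover the proposed ``bounded case analysis on whether $qs<d$'' cannot repair it. Concretely, take $d=7$, so $s=3$, $D_0=\{0,1,2\}\cup\{0,3,6\}=\{0,1,2,3,6\}$, and $k=3$ (so $q=1$, $r=0$, $qs=3<d$): your procedure returns $a=0$ and $b=(-3)\bmod 7=4$, but $4\notin D_0$, even though $k=3$ is of course covered (by $3-0$). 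The trouble is that $-qs\bmod d$ lies in the second block only if $(m+q)s\equiv 0\pmod d$ for some $m<s$, which generically fails when $d$ is not a near-multiple of $s$. The fix is to decompose in the other direction, taking $a$ from the multiples block and $b$ from the units block: the integers $\{ms-r: 0\le m,r<s\}$ form the interval $[-(s-1),(s-1)s]$ of $s^2\ge d$ consecutive values, so every residue $k$ is hit; explicitly, if $k\le (s-1)s$ set $m=\lceil k/s\rceil$, $r=ms-k$, $a=(ms)\bmod d$, $b=r$, and otherwise (which forces $d-k<s$) set $a=0$, $b=d-k$. This is $\Oh(1)$ time and restores correctness. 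Your fallback of invoking the explicit constructions of Maekawa and Burkhardt--K\"arkk\"ainen is of course sound, but it reduces your argument to the same pure citation the paper already makes, so the self-contained construction should be stated with the corrected witness procedure.
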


\begin{proof}[Proof of~\cref{lem:long}]
Let us assume that the answer to LCS is of length $\ell \geq d$ for some parameter~$d$. 
We first use \cref{thm:dcover} to compute in $\cO(n/\sqrt{d})$ time a set $D=D' \cap [1\dd n]$ such that $(D',h)$ is a $d$-cover for a function $h$.
Let us now consider a position $i$ from $S$ and a position $j$ from $T$ such that $S[i \dd i+\ell)=T[j \dd j+\ell)$. 
We have that $0 \le h(i,j) < d \le \ell$, so for $i'=i+h(i,j)$, $j'=j+h(i,j)$ we have:
$$S[i \dd i')=T[j \dd j'),\ S[i' \dd i+\ell)=T[j' \dd j+\ell),\text{ and }i',j' \in D.$$
In particular (see \cref{fig:dcover} for an example),
$$\LCP((S[1 \dd i'))^R,(T[1 \dd j'))^R)+\LCP(S[i' \dd |S|], T[j' \dd |T|])=\ell.$$

\begin{figure}[htpb]
\centering
\begin{tikzpicture}
\begin{scope}
  \begin{scope}
    \foreach \x/\c in {1/c,2/b,8/a,9/b,10/c,11/a,12/b,13/c,14/b,15/a}{
      \draw (\x*0.35,0) node[above] {\texttt{\c}};
    }
    \foreach \x/\c in {3/b,4/a,5/a,6/b,7/c}{
      \draw (\x*0.35,0) node[above] {\textcolor{blue}{\texttt{\c}}};
    }    
    \draw[yshift=0.12cm] (1.25*0.7,0) -- (1.25*0.7,-0.2) -- (3.75*0.7,-0.2) -- (3.75*0.7,0);
    \foreach \x in {1,...,15}{
        \draw (\x*0.35,-0.2) node[below] {\small \x};
    }
    \foreach \x in {1,2,4,6,7,9,11,12,14}{
        \filldraw (\x*0.35,-0.2) circle (0.05cm);
    }
    \filldraw[red] (6*0.35,-0.2) circle (0.07cm);
    \draw[-latex,red] (2.8*0.35,0.9) -- node[above] {$h(3,4)=3$} (6.2*0.35,0.9);
    \draw (3*0.35,0.6) node {$i$};
    \draw (6*0.35,0.6) node {$i'$};
    \draw[-latex,blue] (5.8*0.35,-0.7) -- node[below] {$\LCP$} (7.2*0.35,-0.7);
    \draw[-latex,blue] (5.2*0.35,-0.7) -- node[below] {$\LCP^R$} (2.8*0.35,-0.7);
  \end{scope}
  \begin{scope}[xshift=7.5cm]
    \foreach \x/\c in {1/d,2/a,3/a,9/b,10/b,11/a}{
      \draw (\x*0.35,0) node[above] {\texttt{\c}};
    }    
    \foreach \x/\c in {4/b,5/a,6/a,7/b,8/c}{
      \draw (\x*0.35,0) node[above] {\textcolor{blue}{\texttt{\c}}};
    }    
    \draw[xshift=0.35cm,yshift=0.2cm] (1.25*0.7,0) -- (1.25*0.7,-0.2) -- (3.75*0.7,-0.2) -- (3.75*0.7,0);
    \foreach \x in {1,...,11}{
        \draw (\x*0.35,-0.2) node[below] {\small \x};
    }
    \foreach \x in {1,2,4,6,7,9,11}{
        \filldraw (\x*0.35,-0.2) circle (0.05cm);
    }
    \filldraw[red] (7*0.35,-0.2) circle (0.07cm);
    \draw[-latex,red] (3.8*0.35,0.9) -- node[above] {$h(3,4)=3$} (7.2*0.35,0.9);
    \draw (4*0.35,0.6) node {$j$};
    \draw (7*0.35,0.6) node {$j'$};
    \draw[-latex,blue] (6.8*0.35,-0.7) -- node[below] {$\LCP$} (8.2*0.35,-0.7);
    \draw[-latex,blue] (6.2*0.35,-0.7) -- node[below] {$\LCP^R$} (3.8*0.35,-0.7);
  \end{scope}
\end{scope}    
\end{tikzpicture}
\caption{Strings from \cref{fig:0-1-LCS} with their LCS $S[3 \dd 8)=T[4 \dd 9)$ of length $\ell=5$. The dots correspond to elements of a 5-cover $(D,h)$ where $D=\{1,2,4,\,6,7,9,\,11,12,14,\ldots\}$. For $i=3$, $j=4$, we have $i'=i+h(i,j)=6$, $j'=j+h(i,j)=7$, and $\LCP((S[1 \dd i'))^R,(T[1 \dd j'))^R)+\LCP(S[i' \dd |S|], T[j' \dd |T|])=\ell=5$.}\label{fig:dcover}
\end{figure}

Hence, if we want to find an LCS whose length is at least $d$, we can use the elements of the $d$-cover as anchors between occurrences of the sought LCS, the length of which equals
\begin{equation}\label{eq:max}
\max_{i,j \in D} \{\LCP((S[1 \dd i))^R,(T[1 \dd j))^R)+\LCP(S[i \dd |S|], T[j \dd |T|])\}.
\end{equation}

As also done in~\cite{DBLP:conf/cpm/Charalampopoulos18}, we set:
\begin{itemize}
    \item $\P:= \{((S[1 \dd i))^R, S[i \dd |S|]) : i \in D\}$, $\Q:= \{((T[1 \dd i))^R, T[i \dd |T|]) : i \in D\}$, and
    \item $\F := \{U : (U,V) \in \P \cup \Q \text{ or } (V,U) \in \P \cup \Q \text{ for some string }V\}$.
\end{itemize}

It is then readily verified that, after building the compacted trie $\TT(\F)$, evaluating~\cref{eq:max} reduces to solving the \problemtwo for $\P$, $\Q$, and $\F$ with $N=\cO(|D \cap [1\dd n]|)=\cO(n/\sqrt{d})$.
In order to build $\TT(\F)$, due to~\cref{lem:ctrie}, it suffices to sort the elements of~$\F$ lexicographically and answer $\LCE$ queries between consecutive elements in the resulting sorted list.
To this end, we first preprocess $S \#_1 S^R \#_2 T \#_3 T^R$, where $\#_i \not\in \Sigma$ for $i \in \{1,2,3\}$ are distinct letters, in $\cO(n/\log_\sigma n)$ time as in~\cref{thm:lce}, in order to allow for $\cO(1)$-time $\LCE$ queries.
Next, we employ merge sort in order to sort the elements of~$\F$, performing each comparison using an $\LCE$ query. This step requires $\cO(|\F| \log |\F|)$ time. Then, answering $\LCE$ queries for consecutive elements requires $\cO(|\F|)$ time in total.
We finally employ~\cref{lem:problem} to solve the instance of \problemtwo.
We have that each of $\P$, $\Q$, and $\F$ is of size at most $N=\cO(n/\sqrt{d})$. The overall time complexity is 
\[\cO\left(N \log N  + n/\log_\sigma n\right) = \cO(n \log n/\sqrt{d} + n/\log_\sigma n),\]
which is $\cO(n/\log_\sigma n)$ if $d=\Omega(\frac{\log^4 n}{\log^2 \sigma})$.
\end{proof}

\begin{remark}
The only modification compared to the solution of~\cite{DBLP:conf/cpm/Charalampopoulos18} lies in the construction of $\TT(\F)$. In~\cite{DBLP:conf/cpm/Charalampopoulos18}, $\TT(\F)$ is extracted from the generalised suffix tree of strings $S$, $T$, $S^R$, and~$T^R$, which we cannot afford to construct, as its construction requires $\Omega(n)$ time. Here we employ \cref{thm:lce} instead.
\end{remark}

\subsection{Solution for Medium-Length LCS}\label{sec:medium}
\newcommand{\II}{\mathit{II}}
\newcommand{\III}{\mathit{III}}
We now give a solution to the LCS problem for $\ell$ such that $\frac13\log_\sigma n \le \ell \le 2^{\sqrt{\log n}}$. We first construct three subsets of positions in $S\$T$, where $\$ \not \in \Sigma$, of size $\Oh(n/\log_\sigma n)$ as follows.
For $\tau=\floor{\frac{1}{9}\log_\sigma n}$, let $A_I$ be a $\tau$-synchronising set of $S\$T$. 
For each $\tau$-run in $S\$T$, we insert to $A_{\II}$ the starting positions of the first two occurrences of the Lyndon root of the $\tau$-run and to $A_{\III}$ the last position of the $\tau$-run.
The elements of $A_\II$ and $A_\III$ store the $\tau$-run they originate from.
Finally, we denote $A^S_j=A_j \cap [1\dd |S|]$ and $A^T_j=\{a-|S|-1\,:\,a \in A_j,\,a > |S|+1\}$ for $j\in \{I,\II,\III\}$. The following lemma shows that there exists an LCS of $S$ and $T$ for which $A_I \cup A_\II \cup A_\III$ is a set of \emph{anchors} that satisfies certain distance requirements.

\begin{lemma}\label{lem:anchors}
If an LCS of $S$ and $T$ has length $\ell \ge 3\tau$, then there exist positions $i^S \in [1\dd |S|]$, $i^T \in [1\dd |T|]$, a shift $\delta \in [0\dd  \ell)$, and $j \in \{I,\II,\III\}$ such that $S[i^S \dd i^S+\ell) = T[i^T \dd i^T+\ell)$, $i^S+\delta\in A^S_j$, $i^T+\delta\in A^T_j$, and
\begin{itemize}
    \item if $j=I$, then we can choose a $\delta\in [0\dd \tau)$;
    \item if $j=\II$, then $S[i^S \dd i^S+\ell)$ is contained in the $\tau$-run from which $i^S+\delta\in A^S$ originates;
    \item if $j=\III$, then $S[i^S \dd i^S+\delta]$ is a suffix of the $\tau$-run from which $i^S+\delta\in A^S$ originates.
\end{itemize}
\end{lemma}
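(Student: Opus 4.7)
The plan is to fix any LCS $U$ with occurrences $U=S[i^S_0\dd i^S_0+\ell)=T[i^T_0\dd i^T_0+\ell)$ and case-split on $p:=\per(U[1\dd 3\tau-1])$; note that $U[1\dd 3\tau-1]$ is itself a length-$(3\tau-1)$ common factor of $S$ and $T$. In the aperiodic branch $p>\tfrac13\tau$, the second defining property of the $\tau$-synchronising set~$A_I$ forces an anchor inside the window $[i^S_0,i^S_0+\tau)$ of~$S\$T$ and, symmetrically, inside the corresponding window starting at $i^T_0+|S|+1$, while \cref{lem:synch} applied to $U[1\dd 3\tau-1]$ guarantees that the two successor offsets $\suc_{A_I}(i^S_0)-i^S_0$ and $\suc_{A_I}(i^T_0+|S|+1)-(i^T_0+|S|+1)$ are equal and lie in $[0,\tau)$. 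Setting $(i^S,i^T)=(i^S_0,i^T_0)$ with $\delta$ equal to this common offset realises Case~$I$.

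In the periodic branch $p\leq\tfrac13\tau$, \cref{lem:fact3.2} places each occurrence of~$U$ inside a $\tau$-run of period~$p$, namely $R^S=S[x^S\dd y^S]$ with $y^S=\suc_{A_I}(i^S_0)+2\tau-2$, and analogously $R^T=T[x^T\dd y^T]$. Because the longest period-$p$ prefix of~$U$ is determined by~$U$ alone, the equality $y^S-i^S_0=y^T-i^T_0$ holds; call this common value $\delta_0$. If $\delta_0<\ell-1$ — so $U$ extends past the right ends of both runs — I invoke Case~$\III$ with $(i^S,i^T)=(i^S_0,i^T_0)$ and $\delta=\delta_0$: then $i^S+\delta=y^S\in A^S_\III$, the fragment $S[i^S\dd i^S+\delta]$ is a suffix of~$R^S$ by construction, and the required bound $\delta\geq 3\tau-1$ follows from $y^S\geq i^S_0+3\tau-2$ after discarding the boundary configuration in which~$R^S$ has length exactly~$3\tau-1$ (the leftover subcase is routed to Case~$I$ via the shifted window $[i^S_0+1,i^S_0+1+\tau)$, in which the period-$p$ pattern of $U[2\dd 3\tau]$ has already broken).

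The remaining possibility is $\delta_0\geq\ell-1$, so both runs cover the full occurrence of~$U$; this is Case~$\II$. I cannot reuse $(i^S_0,i^T_0)$ because these positions need not align with any Lyndon-root occurrence, so I instead take $i^S$ and $i^T$ to be the leftmost occurrences of~$U$ inside~$R^S$ and~$R^T$, each located within the first~$p$ positions of its run. Both runs share the common Lyndon root~$L$ determined by $U[1\dd p]$, and writing $U[1\dd p]=L[c\dd p]\,L[1\dd c-1]$ for the appropriate rotation~$c\in[1,p]$, I claim that the shift $\delta:=(1-c)\bmod p\in[0,p)$ pairs $i^S+\delta$ with the first or second occurrence of~$L$ inside~$R^S$ and, simultaneously, $i^T+\delta$ with the analogous $L$-occurrence inside~$R^T$. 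Verifying this cross-string consistency is the main obstacle of the proof: the unknown phase offsets at which~$R^S$ and~$R^T$ begin inside the infinite repetition~$L^{\infty}$ cancel against the corresponding phase offsets of the first $L$-occurrence and of the leftmost $U$-occurrence, leaving a shift that depends only on~$c$. Hence $i^S+\delta\in A^S_\II$, $i^T+\delta\in A^T_\II$, and $S[i^S\dd i^S+\ell)$ is contained in~$R^S$ by construction, completing Case~$\II$ and the proof.
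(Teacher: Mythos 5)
Your three-case skeleton coincides with the paper's proof (aperiodic prefix window $\Rightarrow$ synchronising position; fully periodic $\Rightarrow$ Lyndon-root occurrence; periodicity breaking inside the LCS $\Rightarrow$ run endpoint), and your Case~II is a genuinely different, valid route for the fully periodic subcase: the paper fixes an occurrence pair minimising $i^S+i^T$ and uses that minimality to argue each run starts at most $p$ positions to the left of the occurrence, whence the leftmost Lyndon-root occurrence at or after $i^X$ is the first or second one in its run; you instead re-normalise to the leftmost occurrence of $U$ inside each run (which starts within the first $p$ positions of the run) and take the rotation offset $\delta=(1-c)\bmod p<p$, so $i^X+\delta$ starts before the third Lyndon-root occurrence and the shift depends only on $U$ — the ``cross-string consistency'' you flag as the main obstacle is in fact immediate once this normalisation is in place. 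Both arguments work; yours trades the global minimal-pair choice for a local normalisation inside the runs.

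Two steps, however, need repair. First, the assertion $y^S-i^S_0=y^T-i^T_0$ ``because the longest period-$p$ prefix of $U$ is determined by $U$ alone'' is false precisely when $\per(U)=p$: then the $p$-periodicity may extend past the two occurrences by different amounts (e.g.\ $U=\mathtt{a}^{\ell}$ followed by $\mathtt{a}$ in $S$ but by $\mathtt{b}$ in $T$), so $\delta_0$ is not well defined. The trichotomy survives — either $\per(U)=p$ (then both quantities are at least $\ell-1$, your Case~II) or the break occurs inside $U$ at a common offset (your Case~III) — so the split should be phrased on whether $\per(U)=p$, as in the paper. Second, your patch for the boundary configuration does not work. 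The problematic situation is $y^S-i^S_0=3\tau-2$ (irrespective of how far the run extends to the left, so ``$R^S$ has length exactly $3\tau-1$'' does not even capture all of it), and re-routing it to Case~I fails: since $\per(S[i^S_0\dd i^S_0+3\tau-2])\le\frac13\tau$, property~\ref{item2} of a $\tau$-synchronising set gives $A_I\cap[i^S_0,i^S_0+\tau)=\emptyset$, so the synchronising position obtained from the shifted window starting at $i^S_0+1$ is exactly $i^S_0+\tau$, i.e.\ $\delta=\tau\notin[0,\tau)$; and you cannot move the occurrence pair to $(i^S_0+1,i^T_0+1)$, since those length-$\ell$ fragments need not agree at their last position. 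The honest outcome in that corner is $\delta=3\tau-2$ in Case~III — which is also all that the paper's own proof yields there (it derives $d\ge 3\tau-1$ and sets $\delta=d-1$), and the exact constant plays no role in the application in \cref{lem:medium} — but as written your routing does not establish the stated Case~I bound, so the detour should simply be dropped.
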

\begin{proof}
By the assumption, there exist $i^S \in [1\dd |S|]$ and $i^T \in [1\dd |T|]$ such that $S[i^S \dd i^S+\ell) = T[i^T \dd i^T+\ell)$. Let us choose any such pair $(i^S,i^T)$ minimising the sum $i^S+i^T$. We have the following cases.
\begin{enumerate}
    \item If $\per(S[i^S \dd i^S+3\tau-2])>\frac13\tau$, then, by the definition of a $\tau$-synchronising set, there exist some elements $a^S \in A_I^S \cap [i^S\dd i^S+\tau)$ and $a^T \in A_I^T \cap [i^T\dd i^T+\tau)$. Let us choose the smallest such elements. By~\cref{lem:synch}, we have $a^S-i^S=a^T-i^T \in [0\dd \tau)$.
    \item Else, $p=\per(S[i^S \dd i^S+3\tau-2])\le\frac13\tau$. We have two subcases.
    \begin{enumerate}
        \item If $p=\per(S[i^S \dd i^S+\ell))$, then, by the choice of $i^S$ and $i^T$, there exists a $\tau$-run $R_S$ in~$S$ that starts at position in $(i^S-p \dd i^S]$ and a $\tau$-run $R_T$ in $T$ that starts at a position in $(i^T-p \dd i^T]$. Moreover, by \cref{lem:perlemma}, the Lyndon roots of the two runs are equal. 
        For each $X\in \{S,T\}$, let us choose $a^X$ as the leftmost starting position of a Lyndon root of $R_X$ that is weakly to the right of $i^X$. We have $a^S-i^S=a^T-i^T \in [0\dd  \lfloor\frac13\tau\rfloor)$. Each position $a^X$ will be the starting position of the first or the second occurrence of the Lyndon root of $R_S$, so $a^S \in A_{\II}^S$ and $a^T \in A_{\II}^T$.
        \item Else, $p\ne \per(S[i^S \dd i^S+\ell))$. We have $d:=\min\{b \ge p\,:\,S[i^S+b] \ne S[i^S+b-p]\}<\ell$ (and $d \ge 3\tau-1$). In this case, $a^S=i^S+d-1$ and $a^T=i^T+d-1$ are the ending positions of $\tau$-runs with period $p$ in $S$ and $T$, respectively, so $a^S \in A^S_{\III}$ and $a^T\in A^T_{\III}$.
    \end{enumerate}
\end{enumerate}
In each case, we set $\delta:=a^S-i^S=a^T-i^T$.
\end{proof}

The case when $j=I$ in the above lemma corresponds to the \problemtwo with $\P$ and $\Q$ being $(\tau, 2^{\sqrt{\log n}})$-families.
In this case, we use \cref{lem:main}.
Let us introduce a variant of the \problemtwo that intuitively corresponds to the case when $j \in \{\II,\III \}$.
A family of string pairs $\P$ is called a \emph{prefix family} if there exists a string $Y$ such that, for each $(U,V) \in \P$, $U$ is a prefix of $Y$. 

\begin{lemma}\label{lem:suffix}
An instance of the \problemtwo in which $\P \cup \Q$ is a prefix family can be solved in $\Oh(N)$ time.
\end{lemma}
\begin{proof}
By traversing $\T(\F)$ we can compute in $\Oh(N)$ time a list $\mathcal{R}$ being a union of sets $\P$ and~$\Q$ in which the second components are ordered lexicographically.

\newcommand{\maxel}{\gamma}

Consider an element $e=(U,V) \in \P$.
Let $\cpred(e)=(Y_1,Y_2)$ be the predecessor of $e$ in~$\mathcal{R}$ that originates from $\Q$ and satisfies $|Y_1| \geq |U|$.
If there is no such predecessor, we assume that $\cpred(e)$ is undefined.
Similarly, let $\csucc(e)=(Z_1,Z_2)$ be the successor of $e$ in~$\mathcal{R}$ that originates from $\Q$ and satisfies $|Z_1| \geq |U|$ (inspect Figure~\ref{fig:predsucc}).
Further, let $\maxel(e)=\max\{\LCE(U,Y_1)+\LCE(V,Y_2), \LCE(U,Z_1)+\LCE(V,Z_2)\}$; if any of the pairs $(Y_1,Y_2)$, $(Z_1,Z_2)$ is undefined, we set the corresponding component of the $\max$ operation to 0.
We also define the same notations for $e \in \Q$ with $\P$ and $\Q$ swapped.

\begin{claim}\label{clm:gamma}
$\maxPairLCP(\P,\Q)=\max_{e \in \P \cup \Q} \maxel (e)$.
\end{claim}
\begin{proof}
First, we clearly have $\maxPairLCP(\P,\Q)\geq \max_{e \in \P \cup \Q} \maxel (e)$.

Let $(P_1,P_2) \in \P$, $(Q_1,Q_2) \in \Q$ be such that $\LCE(P_1,Q_1)+\LCE(P_2,Q_2)=\maxPairLCP(\P,\Q)$.
Without loss of generality, let us assume that $|P_1| \leq |Q_1|$. We further assume that $(Q_1,Q_2)$ precedes $(P_1,P_2)$ in $\mathcal{R}$.

Let $(Y_1,Y_2)=\cpred((P_1,P_2))$. 
Now, we have $\LCE(P_1,Y_1)=|P_1|\geq \LCE(P_1,Q_1)$ since ${|P_1|\leq |Y_1|}$ and $P_1$ and~$Y_1$ are prefixes of the same string (as $\P \cup \Q$ is a prefix family).
By \cref{fact:trivial} we have $\LCE(P_2,Y_2)\geq \LCE(P_2,Q_2)$ since $(Q_1,Q_2)$, $(Y_1,Y_2)$, and $(P_1,P_2)$ appear in $\mathcal{R}$ in this order.
Thus,
\[\maxPairLCP(\P,\Q)=\LCE(P_1,Q_1)+\LCE(P_2,Q_2) \leq \LCE(P_1,Y_1)+\LCE(P_2,Y_2) \leq \max_{e \in \P \cup \Q} \maxel (e).
\]
The case when $(P_1,P_2)$ precedes $(Q_1,Q_2)$ in $\mathcal{R}$ is symmetric.
\end{proof}

\begin{figure}[ht]
    \centering
    \includegraphics[width=.5\linewidth]{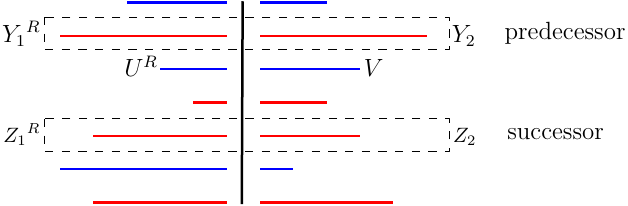}
    \caption{The setting in Lemma~\ref{lem:suffix} on list $\mathcal{R}$. With red color, we denote the elements of $\P$ and with blue color the elements of $\Q$. For element $e=(U,V)$ from $\Q$, we have $\cpred(e) = (Y_1,Y_2)$ and $\csucc(e) = (Z_1,Z_2)$.}\label{fig:predsucc}
\end{figure}

To compute $\csucc(e)$ and $\cpred(e)$ for each $e \in \P \cup \Q$ we proceed as follows.
We process the list $\mathcal{R}$ in the order of non-decreasing lengths of the first components. This order can be computed in $\Oh(N)$ time using the compacted trie $\T(\F)$.
After processing all elements $e$ of some length, we remove them from $\mathcal{R}$.

We maintain the list $\mathcal{R}$ using the data structure of Gabow and Tarjan~\cite{DBLP:journals/jcss/GabowT85} for a special case of the union-find problem, in which the elements correspond to numbers in $[1 \dd |\mathcal{R}|]$ representing the subsequent elements of the initial list $\mathcal{R}$ and the sets are formed by consecutive integers. 
Let us think of elements of $\mathcal{R}$ originating from $\P$ to be colored by red and elements originating from~$\Q$ to be colored by blue.
The sets the union-find data structure is initialised with correspond to maximal sequences of elements of the same color in $\mathcal{R}$. 
Each set has as an id its smallest element. 
Each set also maintains the following satellite information: pointers to the head and to the tail of a list of all non-deleted elements in this set.
Every element of $\mathcal{R}$ stores a pointer to the element in the list in which it is represented. It can thus be deleted at any moment in $\cO(1)$ time. When the satellite list of a set~$S_i$ becomes empty, we union $S_{i-1}$, $S_i$, and $S_{i+1}$ (provided that the respective sets exist). To find $\cpred(e)$ we perform the following; the procedure of $\csucc(e)$ is analogous. 
We find the set $S_i$ to which $e$ belongs using a find operation. Let the id of $S_i$ be $\alpha$. By using another find operation to find $\alpha-1$, we find the set $S_{i-1}$. The tail of the list of $S_{i-1}$ is $\cpred(e)$; in the analogous procedure, the head of the list of $S_{i+1}$ is $\csucc(e)$. The algorithm is correct because we process $\mathcal{R}$ in the order of non-decreasing lengths. Each union or find operation requires $\cO(1)$ amortized time~\cite{DBLP:journals/jcss/GabowT85}.
Thus, this procedure takes $\cO(N)$ time in total.

Finally, we preprocess the compacted trie $\T(\F)$ in $\cO(N)$ time to be able to answer lowest common ancestor (LCA) queries in $\cO(1)$ time~\cite{DBLP:conf/latin/BenderF00}.
For any $e=(U,V)$, given $\cpred(e)$ and $\csucc(e)$ we can compute $\maxel(e)$ in $\cO(1)$ time by answering $\LCE$ queries using the LCA data structure.
\end{proof}

We are now ready to state the main result of this subsection. The proof of \cref{lem:main} is deferred to \cref{sec:proof_lem_main}.

\begin{lemma}[Medium-length LCS]\label{lem:medium}
The LCS problem can be solved in $\cO(n \log \sigma /\sqrt{\log n})$ time
using $\Oh(n / \log_{\sigma } n)$ space if $\frac13\log_\sigma n\le\ell\le 2^{\sqrt{\log n}}$.
\end{lemma}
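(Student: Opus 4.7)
The plan is to set $\tau := \lfloor\tfrac{1}{9}\log_\sigma n\rfloor$ and $\beta := 2^{\sqrt{\log n}}$, construct the anchor sets $A_I,A_\II,A_\III$ of $S\$T$ using Theorem~\ref{thm:synch_packed} and Lemma~\ref{lem:tauruns}, and set up a constant-time LCE oracle on $S\#_1 S^R\#_2 T\#_3 T^R$ via Theorem~\ref{thm:lce}, all within $\Oh(n/\log_\sigma n)$ time. By Lemma~\ref{lem:anchors}, computing an LCS reduces to solving three separate \problem instances, one per case $j\in\{I,\II,\III\}$, and taking the maximum.

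For $j=I$ I would form $\P := \{((S[\max(1,a-\tau)\dd a))^R,\, S[a\dd \min(|S|,a+\beta))) : a\in A^S_I\}$ and $\Q$ symmetrically from $A^T_I$; these are $(\tau,\beta)$-families of size $N=\Oh(n/\tau)$. Each first component fits in $\Oh(1)$ packed words, so $\F_1$ is radix-sorted in $\Oh(N)$ time; for $\F_2$, Theorem~\ref{thm:suffixsort} yields the sorted order of the $A_I$-suffixes in $\Oh(N)$ time, and truncating to length $\beta$ preserves this order. Consecutive LCP values follow from $\Oh(1)$-time LCE queries, so Lemma~\ref{lem:ctrie} builds $\T(\F_1)$ and $\T(\F_2)$ in $\Oh(N)$ time. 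Applying Lemma~\ref{lem:main} with $\alpha=\tau=\Theta(\log_\sigma n)$, $\log\beta=\sqrt{\log n}$, and $\log N=\Theta(\log n)$ gives time $\Oh(N(\alpha+\log N)(\log\beta+\sqrt{\log N})/\log N) = \Oh(n\log\sigma/\sqrt{\log n})$ and space $\Oh(N+N\alpha/\log N)=\Oh(n/\log_\sigma n)$.

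For the periodic cases $j\in\{\II,\III\}$, I would bucket the $\tau$-runs of $S\$T$ by their Lyndon root $\rho$ via Lemma~\ref{lem:tauruns}, and for case $\III$ further refine each bucket by the residue $|R|\bmod |\rho|$. Within each resulting bucket, every first component, after reversal, is a prefix of a single fixed string — a sufficiently long power of $\rho^R$ in case $\II$ (because the anchor sits at a Lyndon-root occurrence inside the run) and the appropriate cyclic rotation of $(\rho^R)^\omega$ in case $\III$ — so $\P\cup\Q$ is a prefix family whose $\T(\F_1)$ is a caterpillar linearly sorted by length. To build the $\T(\F_2)$'s across all buckets in total $\Oh(n/\tau)$ time, I exploit that each relevant second component has the form $S[a\dd\cdot]=S[a\dd e]\cdot S[e{+}1\dd\cdot]$ where $a$ lies in a $\tau$-run ending at $e$: the first factor is a $\rho^\omega$-prefix of known length, while the second factor starts in the aperiodic regime just past the run and can therefore be ranked using the global $A_I$-sort of Theorem~\ref{thm:suffixsort} combined with a packed comparison over the $\Oh(\tau)$-length gap between $e{+}1$ and its nearest sync position. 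Each per-bucket order is then assembled by merging the known periodic-prefix lengths with the global post-run ranks, and Lemma~\ref{lem:suffix} applied per bucket costs time linear in its size, summing to $\Oh(n/\tau)=\Oh(n\log\sigma/\log n)$.

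Taking the maximum over the three cases gives the claimed time $\Oh(n\log\sigma/\sqrt{\log n})$ and space $\Oh(n/\log_\sigma n)$. The main technical obstacle is the batched construction of the tries $\T(\F_2)$ in the periodic cases within the $\Oh(n/\tau)$ budget: the anchors in $A_\II\cup A_\III$ cannot belong to any $\tau$-synchronising set (they lie in periodic regions by sync condition~\ref{item2}), so one must exploit both the rigid $\rho^\omega$-structure inside each $\tau$-run and the aperiodic regime immediately past its endpoint, where Theorem~\ref{thm:suffixsort} becomes applicable.
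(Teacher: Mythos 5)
Your plan follows essentially the same route as the paper's proof: the same three anchor classes from \cref{lem:anchors}, the same $(\tau,\beta)$-family fed to \cref{lem:main} for case $I$, and prefix families per Lyndon-root bucket fed to \cref{lem:suffix} for cases $\II$ and $\III$, with the case-$\III$ second components ranked via synchronising positions near the run's right end (the paper compares $S[y-2\tau+1\dd|S|]$ using \cref{lem:fact3.2} instead of your split at $e+1$, but the two devices are interchangeable). One small slip to fix: the case-$\III$ buckets must be refined by the phase of the run's right endpoint relative to the Lyndon root (the paper's ``tail'' $(j+1-i')\bmod p$, with $i'$ the first Lyndon-root occurrence), not by $|R|\bmod|\rho|$ --- two runs with the same Lyndon root can have different left-end offsets $i'-i$, so $|R|\bmod|\rho|$ alone does not guarantee that the reversed first components are prefixes of a single common string, which is exactly the property \cref{lem:suffix} needs.
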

\begin{proof}
Recall that $\tau=\floor{\frac{1}{9}\log_\sigma n}$.
The set of anchors $A=A_I \cup A_\II \cup A_\III$ consists of a $\tau$-synchronising set and of $\Oh(1)$ positions per each $\tau$-run in $S\$T$. Hence, $|A| = \Oh(n/\tau)$ and $A$ can be constructed in $\Oh(n/\tau)$ time by~\cref{thm:synch_packed,lem:tauruns}. We also use \cref{lem:tauruns} to group all $\tau$-runs by their Lyndon roots.

We construct sets of pairs of substrings of $X=STS^RT^R$. First, for $\Delta=\floor{2^{\sqrt{\log n}}}$:
\[
    \P_I = \{((S[a-\tau\dd a))^R,S[a \dd a+\Delta))\,:\,a \in A_I^S\}.
\]

Then, for each group $\G$ of $\tau$-runs in $S$ and $T$ with equal Lyndon root, we construct the following set of string pairs:
\[
    \P_\II^\G = \{((S[x\dd a))^R,S[a \dd y])\,:\,a \in A_\II^S\text{ that originates from }\tau\text{-run }S[x \dd y] \in \G\}.
\]

We define the \emph{tail} of a $\tau$-run $S[x \dd y]$ with period $p$ and Lyndon root $S[x' \dd x'+p)$ as $(y+1-x') \bmod p$ (and same for $\tau$-runs in $T$).
For each group of $\tau$-runs in $S$ and $T$ with equal Lyndon roots, we group the $\tau$-runs belonging to it by their tails. This can be done in $\Oh(n/\tau)$ time using Radix Sort since the tail values are up to $\frac13\tau$.
For each group $\G$ of $\tau$-runs in $S$ and $T$ with equal Lyndon root and tail, we construct the following set of string pairs:
\[
\P_\III^\G = \{((S[x \dd y))^R,S[y \dd |S|])\,:\,S[x \dd y] \in \G\}.
\]
Simultaneously, we create sets $\Q_I$, $\Q^\G_\II$ and $\Q^\G_\III$ defined with $T$ instead of $S$.

By~\cref{lem:anchors}, it suffices to output the maximum of \[\{\maxPairLCP(\P_I,\Q_I), \maxPairLCP(\P_\II^\G,\Q_\II^\G), \maxPairLCP(\P_\III^\G,\Q_\III^\G)\},\] where $\G$ ranges over groups of $\tau$-runs in $S\$T$.

Computing any individual value of $\maxPairLCP$ can be expressed as an instance of the \problemtwo provided that all the first and second components of families are represented as nodes of compacted tries. 
We will use~\cref{lem:ctrie} to construct these compacted tries. $\LCE$ queries can be answered efficiently due to~\cref{thm:lce}, so it suffices to be able to sort all the first and second components of each pair of string pair sets lexicographically. 
Each of the sets $\P_I$ and $\Q_I$ can be ordered by the second components using~\cref{thm:suffixsort} since $A_I$ is a $\tau$-synchronising set, and by the first components via a straightforward tabulation approach because the number of possible $\tau$-length strings is $\sigma^\tau=\cO(n^{1/9})$. In a set $\P^\G_\II$, all first and all second components are prefixes of a single string (a power of the common Lyndon root). Hence, they can be sorted simply by comparing their lengths. This sorting is performed simultaneously for all the families $\P_\II^\G$, $\Q_\II^\G$ in $\Oh(n/\tau)$ time via Radix Sort. Finally, to sort the second components of the sets $\P_\III^\G$ $\Q_\III^\G$, instead of comparing strings of the form $S[y \dd |S|]$ (and same for $T$), we can equivalently compare strings $S[y-2\tau+1 \dd |S|]$ which are known to start at positions from a $\tau$-synchronising set by~\cref{lem:fact3.2}. This sorting is done across all groups using Radix Sort and~\cref{thm:suffixsort}.

Finally, we observe that $(\P_I,\Q_I)$ is a $(\tau,\Delta)$-family of size $N=\Oh(n/\tau)$, and thus the value $\maxPairLCP(\P_I,\Q_I)$ can be computed in $\Oh(n\log\sigma/\sqrt{\log n})$ time and $\Oh(n/\log_\sigma n)$ space using \cref{lem:main}. 
On the other hand, $(\P_\II^\G,\Q_\II^\G)$ and $(\P_\III^\G,\Q_\III^\G)$ are prefix families of total size $\Oh(n/\tau)$, so the corresponding instances of the \problemtwo can be solved in $\Oh(n/\log_\sigma n)$ total time using \cref{lem:suffix}.
\end{proof}

\section{Proof of Lemma \ref{lem:main}---Solution to a Special Case of the \problemtwo via Wavelet Trees}\label{sec:proof_lem_main}

\paragraph{Wavelet trees.}
For an arbitrary alphabet $\Sigma$, the \emph{skeleton tree} for $\Sigma$
is a full binary tree $\T$ together with a bijection between $\Sigma$ and the leaves of $\T$. 
For a node $v\in \T$, let $\Sigma_v$ denote the subset of $\Sigma$ that corresponds to the leaves in the subtree of $v$.

For a skeleton tree $\T$ and a string $T\in \Sigma^*$, the \emph{$\T$-shaped wavelet tree} of $T$ is $\T$ augmented with bit-vectors assigned to its internal nodes (inspect~\cref{fig:WT}(a) for a wavelet tree with a ``standard'' skeleton tree). 
For an internal node $v$ with left child $v_L$ and right child $v_R$, let $T_v$ denote the maximal subsequence of $T$ that consists of letters from $\Sigma_v$; the bit-vector $B_v[1\dd |T_v|]$ is defined so that $B_v[i]=0$ if $T_v[i]\in \Sigma_{v_L}$ and $B_v[i]=1$ if $T_v[i]\in \Sigma_{v_R}$.

\begin{figure}[htpb]
  \begin{center}
  \begin{subfigure}[b]{0.55\textwidth}
    \includegraphics[width=7cm]{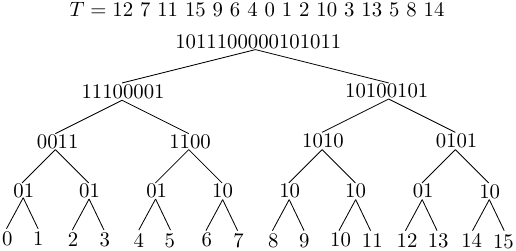}
    \caption{Standard skeleton tree with $|T|=\sigma=16$.}\label{fig:std}
  \end{subfigure}
  \begin{subfigure}[b]{0.4\textwidth}
  \centering
     \includegraphics[width=3cm]{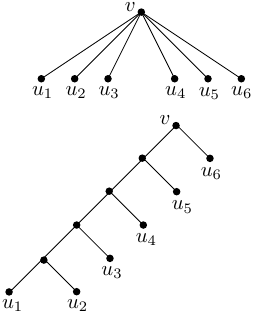}
     \caption{Binarisation with height $\cO(\alpha + \log N)$.}\label{fig:binN}
  \end{subfigure}
\end{center}
\caption{(a) Let $v$ be left child of the root node. Then $\Sigma_v=\{0,1,\ldots,7\}$, $T_v=7,6,4,0,1,2,3,5$ and so $B_v=11100001$: $7,6,4,5$ belong to the right subtree of $v$ and $0,1,2,3$ to the left. (b) For each $i$, let the size of the subtree rooted at $u_i$ be $2^{i}$. The binarisation from~\cite{DBLP:conf/soda/BabenkoGKS15} leads to height $\cO(\alpha+\log N)$, favouring heavier children.}\label{fig:WT}
\end{figure}

Wavelet trees were introduced in~\cite{DBLP:conf/soda/GrossiGV03}. 
One can derive an $\Oh(n \log \sigma)$-time construction algorithm directly from their definition;
more efficient construction algorithms were presented in~\cite{DBLP:journals/tcs/MunroNV16,DBLP:conf/soda/BabenkoGKS15}.

\begin{theorem}[see {\cite[Theorem 2]{DBLP:journals/tcs/MunroNV16}}]\label{thm:wavelet}
Given the packed representation of a string $T\in [0\dd \sigma)^n$ and a skeleton tree $\T$ of height $h$, the $\T$-shaped wavelet tree of $T$ takes $\Oh(nh/\log n+\sigma)$ space and can be constructed in $\Oh(nh/\sqrt{\log n}+\sigma)$ time.
\end{theorem}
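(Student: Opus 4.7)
The plan is to construct the $\T$-shaped wavelet tree top-down, level by level, maintaining at each node $v$ a packed representation of the subsequence $T_v$. Before the main loop, I would preprocess the skeleton tree $\T$ in $\Oh(\sigma)$ time so that each letter $a\in\Sigma$ stores its depth in $\T$ and an $h$-bit encoding of the root-to-leaf path ending at $a$; this preprocessing accounts for the additive $\sigma$ term in both the time and space bounds.

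At each level, for every internal node $v$ with children $v_L,v_R$, I need to produce simultaneously (i) the bit vector $B_v$ marking which symbols of $T_v$ go left versus right and (ii) the two packed subsequences $T_{v_L}$, $T_{v_R}$ that will be passed down. A baseline that processes each $T_v$ symbol-by-symbol takes $\Oh(n)$ time per level and $\Oh(nh)$ in total, which is too slow. To get the $\sqrt{\log n}$ speedup I would use a Four-Russians-style tabulation: process each $T_v$ in chunks of $\Theta(\log n/\log\sigma)$ symbols (so each chunk occupies $\Theta(\log n)$ bits) and consult a precomputed table, keyed by the packed chunk together with the depth of $v$, that returns in $\Oh(1)$ time the piece of $B_v$ contributed by that chunk together with the already-compacted left-going and right-going sub-chunks. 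To keep the table at $n^{o(1)}$ entries, each chunk is further split into $\Oh(1)$ sub-chunks of $\tfrac12\log n$ bits, and per-sub-chunk outputs are combined via bit-parallel shift-and-shuffle; this costs $\Oh(\sqrt{\log n})$ word operations per chunk. Summed over $\Oh(n/\log n)$ chunks per level and $h$ levels, this gives the target $\Oh(nh/\sqrt{\log n})$ time, while the bit vectors stored across all levels occupy $\Oh(nh/\log n)$ words of space.

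The main obstacle I expect is designing the tabulation so that it handles an arbitrary (not necessarily balanced) skeleton tree $\T$ while keeping the tables sublinear in $n$: the symbols inside one chunk of $T_v$ may belong to many different subtrees beneath $v_L$ and $v_R$, so the lookup key must encode both the raw symbols and enough local routing information at depth $\mathrm{depth}(v)$ to direct every symbol correctly. The trick is to ensure that the routing decision for each symbol can be extracted in $\Oh(1)$ from its precomputed path, and that the compaction of sub-chunks composes faithfully under packed concatenation — this is what forces the $\tfrac12\log n$-bit sub-chunking and the extra $\sqrt{\log n}$ word-operations per chunk, and it is the reason the final time bound is $\Oh(nh/\sqrt{\log n}+\sigma)$ rather than $\Oh(nh/\log n+\sigma)$.
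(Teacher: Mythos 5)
The paper does not prove this statement --- it is imported as a black box from Munro, Nekrich and Vitter --- so your proposal should be measured against their construction (and the closely related one of Babenko et al.). Your high-level plan, top-down splitting of packed subsequences accelerated by Four-Russians tabulation, is in the right spirit, but the accounting does not close and the strictly level-by-level organisation cannot reach the stated bound. Concretely: a chunk of $\Theta(\log n/\log\sigma)$ symbols means $\Theta(n\log\sigma/\log n)$ chunks per level, not $\Oh(n/\log n)$ as you write, so at $\Oh(\sqrt{\log n})$ word operations per chunk your scheme costs $\Oh(nh\log\sigma/\sqrt{\log n})$ --- a factor $\log\sigma$ over the target. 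Moreover, the $\Oh(\sqrt{\log n})$ cost per chunk is asserted rather than derived: with $\Oh(1)$ sub-chunks of $\tfrac12\log n$ bits and $\Oh(1)$-time table lookups, the natural count is $\Oh(1)$ per chunk, and nothing in your description identifies a mechanism that genuinely costs (or legitimately saves only) a $\sqrt{\log n}$ factor; the exponent has been reverse-engineered from the answer.

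The deeper obstruction is that any algorithm that revisits a packed representation of every $T_v$ at every level already spends $\Theta(nh\log\sigma/\log n)$ time just scanning its data, which exceeds $\Oh(nh/\sqrt{\log n})$ whenever $\log\sigma=\omega(\sqrt{\log n})$ --- so level-by-level processing provably cannot achieve the theorem for large alphabets. The known proofs instead batch $\tau=\Theta(\sqrt{\log n})$ \emph{consecutive levels} of $\T$: a single pass routes every symbol through $\tau$ levels at once (in essence a stable distribution on a $\tau$-bit key into at most $2^{\tau}=2^{\sqrt{\log n}}$ descendant nodes, with the $\tau$ intermediate bit vectors emitted in packed form via lookup tables indexed by blocks of $\Theta(\sqrt{\log n})$ keys occupying $\tfrac12\log n$ bits). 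Each batch costs $\Oh(n)$ time, there are $\Oh(h/\sqrt{\log n})$ batches, and each symbol is physically touched only $\Oh(h/\sqrt{\log n})$ times in total. This batching across levels --- not a per-chunk slowdown within one level --- is where the $\sqrt{\log n}$ comes from, and it is the idea your proposal is missing. Your handling of the additive $\sigma$ term via precomputed per-letter root-to-leaf codes, and your space accounting for the bit vectors, are fine.
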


Wavelet trees are sometimes constructed for sequences $T\in \M^*$
over an alphabet $\M \sub \Sigma^*$ that itself consists of strings (see e.g.,~\cite{DBLP:conf/stoc/KempaK19}).
In this case, the skeleton tree $\T$ is often chosen to resemble the compacted trie of $\M$.
Formally, we say that a skeleton tree $\T$ for some $\M \sub \Sigma^*$ is \emph{prefix-consistent}  if
each node $v\in \T$ admits a label $\val(v)\in \Sigma^*$ such that:
\begin{itemize}
  \item the restriction of the function $\val(v)$ to the leaves $v \in \T$ is a bijection from those leaves to~$\M$;
  \item if $v$ is a node with children $v_L,v_R$, then, for all leaves $u_L,u_R$ in the subtrees of $v_L$ and $v_R$, respectively, the string $\val(v)$ is the longest common prefix of $\val(u_L)$ and $\val(u_R)$.
\end{itemize}
We observe that if $\M\sub \{0,1\}^{\alpha}$ for some integer $\alpha$, then the compacted trie $\T(\M)$
is a prefix-consistent skeleton tree for $\M$. For larger alphabets, we binarise $\T(\M)$ as follows.

\begin{restatable}{lemma}{lemwav}\label{lem:C2}
Given the compacted trie $\T(\M)$ of a set $\M\sub \Sigma^\alpha$, 
a prefix-consistent skeleton tree for~$\M$ of height $\Oh(\alpha+\log |\M|)$ can be constructed in $\Oh(|\M|)$ time, with each node $v$ associated to a node $v'$ of $\T(\M)$ such that $\val(v)=\val(v')$.
\end{restatable}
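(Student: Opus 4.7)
The plan is to build the skeleton tree from $\T(\M)$ by replacing each internal node individually with a carefully balanced binary tree over its children (a \emph{binarisation}) and then gluing these binarisations along the edges of $\T(\M)$. Because the compacted trie may branch with high fan-out, a naive balanced binarisation would blow the height up; I would therefore employ the weight-biased construction of \cite{DBLP:conf/soda/BabenkoGKS15} advertised in the caption of \cref{fig:binN}, which favours heavier children and is designed exactly to control height in this setting.

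Concretely, I would first DFS $\T(\M)$ and compute, for every node $v$, its \emph{weight} $w(v)$, equal to the number of $\M$-leaves in the subtree of $v$; this takes $\Oh(|\M|)$ time. For every internal node $u$ with children $c_1,\dots,c_{k}$ of weights $w_1,\dots,w_k$ summing to $W := w(u)$, I would invoke the procedure of \cite{DBLP:conf/soda/BabenkoGKS15} to produce, in $\Oh(k)$ time, a full binary tree $B_u$ with leaves $c_1,\dots,c_k$ placed at depths $\Oh(1+\log(W/w_i))$. I identify the root of $B_u$ with $u$, keep the leaves of $\T(\M)$ as leaves of the skeleton tree, and for every strict internal node $v$ of $B_u$ I set $\val(v):=\val(u)$ and associate $v$ to $v':=u$. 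Summing over $u$ gives $\Oh(|\M|)$ total construction time since the $k_u$'s aggregate to the number of edges of $\T(\M)$.

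Prefix-consistency is a quick check: for a strict internal node $v$ of $B_u$ with skeleton children $v_L,v_R$, the skeleton leaves reachable through $v_L$ (resp.\ $v_R$) are all $\T(\M)$-descendants of nonempty disjoint subsets of $\{c_1,\dots,c_k\}$; hence any two such leaves taken from opposite sides of $v$ meet in $\T(\M)$ exactly at $u$, so their longest common prefix is $\val(u)=\val(v)$, as required. The same argument handles the root of $B_u$, and leaves of the skeleton tree are precisely the leaves of $\T(\M)$, i.e., $\M$.

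For the height bound, I would fix a root-to-leaf path $u_0,u_1,\dots,u_d$ in $\T(\M)$: its realisation in the skeleton tree traverses each $B_{u_i}$ from its root to the leaf $u_{i+1}$, contributing $\Oh(1+\log(w(u_i)/w(u_{i+1})))$ per step. Summation telescopes to $d + \Oh(\log(w(u_0)/w(u_d))) = \Oh(\alpha + \log|\M|)$, using $d\le\alpha$ (since every edge label of $\T(\M)$ has length at least one while the leaves lie at string-depth $\alpha$), $w(u_0)=|\M|$, and $w(u_d)=1$. The only nontrivial ingredient is the $\Oh(k)$-time weight-biased binarisation with the logarithmic-depth guarantee; this is the main obstacle and where I would rely on the black-box construction of \cite{DBLP:conf/soda/BabenkoGKS15} rather than, say, a sorted Huffman tree which would cost an extra $\log$ factor.
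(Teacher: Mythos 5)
Your proposal is correct and follows essentially the same route as the paper, which simply invokes \cite[Corollary 3.2]{DBLP:conf/soda/BabenkoGKS15} (any rooted tree of size $m$ and height $h$ can be binarised in $\Oh(m)$ time into a tree of height $\Oh(h+\log m)$) and applies it to $\T(\M)$, whose height is at most $\alpha$. The only difference is that you unpack that corollary by spelling out the per-node weight-biased binarisation, the telescoping height argument, and the prefix-consistency check, all of which are sound.
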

\begin{proof}
We use~\cite[Corollary 3.2]{DBLP:conf/soda/BabenkoGKS15}, where the authors showed that any rooted tree with $m$ leaves and of height $h$ can be binarised in $\cO(m)$ time so that the resulting tree is of height $\cO(h + \log m)$. For $\T(\M)$, we obtain height $\cO(\alpha + \log |\M|)$ and time $\cO(|\M|)$ (inspect~\cref{fig:WT}(b)).
\end{proof}

\lemmain*
\begin{proof}
If $\log \beta = \Omega(\log N)$, the stated bounds are not better than those of \cref{lem:problem}. We henceforth assume that $\log \beta = o(\log N)$.

Let $\R$ be the list obtained by sorting $\P \cup \Q$ with respect to the lexicographic order of the second components.
For any sublist $\mathcal X = (U_1,V_1),\dots,(U_m,V_m)$ of~$\R$, let $\LCPs(\mathcal X)$ denote the representation of the list $0,\LCP(V_1,V_2),\dots,\LCP(V_{m-1},V_{m})$ as a packed string over alphabet $[0\dd \beta]$ in space $\Oh(1+N/\log_\beta N)$.
For each node $v$ of the wavelet tree, let $\R_v$ be the sublist of $\R$ composed of elements whose first component is in the leaf list $\Sigma_v$ of $v$.

\begin{claim}\label{clm:useful}
Consider an instance of the \problemtwo in which $\P$ and~$\Q$ are $(\alpha,\beta)$-families with $\log \beta = o(\log N)$.
We can construct, in $\cO(N(\alpha+\log N)/ \sqrt{\log N})$ time and $\cO(N + N\alpha/\log N)$ space, a wavelet tree of height $\cO(\alpha + \log N)$ for the first components of $\R$ (some possibly padded with a $\$\not\in\Sigma$).

Moreover, in $\cO(N(\alpha+\log N) \log \beta / \log N)$ time and $\cO(N)$ space, we can compute a bit-vector~$G_v$ specifying the origin ($\P$ or $\Q$) of each element of $\R_v$ and the list $\L_v = \LCPs(\R_v)$ for each node~$v$ of the wavelet tree in the BFS order, such that after computing $G_u$ and $\L_u$ for each child $u$ of a node~$v$, $G_v$ and $\L_v$ are deleted.
\end{claim}
\begin{proof}
By traversing $\T(\F)$ we can sort in $\Oh(N)$ time all elements of $\P \cup \Q$ by the second components, obtaining the list $\R$. We also store a bit-vector~$G$ of length $|\R|$ that specifies, for each element of $\R$, which of the sets $\P$, $\Q$ it originates from.
We then construct the prefix-consistent skeleton tree for the sequence of strings being the first components of pairs from~$\R$ with~\cref{lem:C2} and use it to construct the wavelet tree of the same sequence using~\cref{thm:wavelet}.
Before said skeleton and wavelet trees are constructed, we pad each string with a letter $\$\not\in\Sigma$ to make them all of length $\alpha$; in what follows, we ignore the nodes of the wavelet tree whose path-label contains a \$.

The list $\LCPs(\R)$ can be computed in $\Oh(N)$ time when constructing $\R$. For each node~$v$ of the wavelet tree, we wish to compute $\L_v=\LCPs(\R_v)$ and the corresponding bit-vector $G_v$ of origins ($\P$ or $\Q$) of the elements of $\R_v$. We will construct the lists $\LCPs(\R_v)$ without actually computing~$\R_v$. We process the nodes in the BFS order.

\bigskip
\newcommand{\rd}[1]{#1.\texttt{read}()}
 \newcommand{\wrt}[2]{#1.\texttt{write}(#2)}
 \newcommand{\eof}[1]{#1.\mathtt{end}\texttt{\_}\mathtt{of}\texttt{\_}\mathtt{stream}()}

 \begin{algorithm}[H]
  \caption{Computation of $\LCPs$ lists.}\label{algo:computeLCPs}
  $\mu_0 := \mu_1 := \rd{\L}$\;
  \While{\KwSty{not} $\eof{B}$}{
    $b:= \rd{B}$\;
    $\wrt{G_{b}}{\rd{G}}$\;
    $\wrt{\L_{b}}{\mu_b}$\;
    $\mu_{b} := \rd{\L}$\;
    $\mu_{1-b} := \min(\mu_{1-b},\mu_b)$\;
  }
 \end{algorithm}%

\bigskip
\paragraph{Computation of $\LCPs$ lists.} We apply \cref{algo:computeLCPs}, that we describe below, to each node~$v$ of the wavelet tree, with $\L$ being a stream representing the $\LCPs$ list $\L_v$, $G$ a stream of the corresponding bit-vector of origins~$G_v$, and $B$ a stream of the bit-vector $B_v$ stored in the wavelet tree node. The algorithm outputs streams $\L_0$ and $\L_1$ representing the lists $\L_{u_L}$ and $\L_{u_R}$ of the left and right children $u_L$, $u_R$ of $v$ and, analogously, streams $G_0$ and $G_1$ representing $G_{u_L}$ and $G_{u_R}$.

Let us describe the algorithm and argue for its correctness. Variable $b$ is used to store each subsequent bit of the stream $B$. It is then used to forward the next bit of the stream of origins $G$ to the respective output stream.

Variables $\mu_0$ and $\mu_1$ are used for processing the $\LCPs$ list. Let $\R_v=(U_1,V_1),\dots,(U_m,V_m)$ and assume $V_0$ is a sentinel empty string. We next show that before the step of the while-loop in which the $i$th bit of $B$ is read, we have
\begin{equation}\label{eq:mu}
    \mu_b=\LCP(V_{p_b},V_i),\text{ where }p_b=\max(\{0\} \cup \{p \in [1\dd i)\,:\,B_v[p]=b\}).
\end{equation}
In particular, if $p_b=0$, i.e., no $b$-bit in $B$ was encountered yet, then $\mu_b=0$.

Let us recall that the list $\L_v$ starts with a 0, so initially $\mu_0=\mu_1=0$ satisfy \eqref{eq:mu}. Let us consider the $i$th step of the loop with $b=B[i]$. If $p_b=0$, then this is the first $b$-bit in $B$, so $\mu_b=0$ is correctly output to $\L_b$. Otherwise, $\mu_b=\LCP(V_{p_b},V_i)$ is output to $\L_b$, as expected. Next, $\mu_b$ becomes $\L_v[i+1]=\LCP(V_i,V_{i+1})$ which is correct since when $i$ is incremented, $p_b=i$. Finally, $\mu_{1-b}$ becomes $\min(\LCP(V_{p_{1-b}},V_i),\LCP(V_i,V_{i+1}))$, which is $\LCP(V_{p_{1-b}},V_{i+1})$ by \cref{fact:trivial} and the fact that the strings $V_1,\ldots,V_m$ are ordered lexicographically. When $i=m$, the next bit on stream~$\L$ is not defined, which is not an issue as then the lines setting $\mu_0$ and $\mu_1$ are not relevant anymore.

\paragraph{Analysis.}
We analyze the complexity of \cref{algo:computeLCPs} using \cref{prp:streaming}. The algorithm works on $\Oh(1)$ data streams, uses $s=\Oh(\log (\beta+1))$ bits of space and among every $t = \Oh(1)$ instructions performs a read or write. For a node $v$ of the wavelet tree, the total size of input and output streams in bits is proportional to the number of bits in $\L_v$, i.e., $\Oh(|B_v| \log \beta)$. By \cref{prp:streaming}, for $\tau=\frac12 \log N$, after $o(N)$ preprocessing (as $\log\beta=o(\log N)$), the algorithm can be executed in $\Oh(1+|B_v| \log \beta / \log N)$ time. Whenever \cref{algo:computeLCPs} is applied, the input streams are removed from memory. This guarantees that the additional space required to store the streams for the at most $N$ topmost nodes that have not been processed yet is bounded by~$\Oh(N)$.
 
The wavelet tree, which is of height $h = \cO(\alpha + \log N)$, can be built in $\Oh(N h/\sqrt{\log N}+N)$ time using space $\Oh(N h/\log N+N)$ by~\cref{thm:wavelet}.
Over all nodes, \cref{algo:computeLCPs} consumes $\Oh(Nh\log \beta/\log N)$ time and $\Oh(N)$ space.
The overall complexities follow.
\end{proof}

\paragraph{Application of $\LCPs$ lists.}
For each node $v$ of the wavelet tree, let us define $f(v)$ as the sum of the string-depth $|\val(v)|$ and the maximum $\LCP$ between second components of pairs in $\R_v$ of different origins.
For each $(U,V) \in \P$ and $(U',V') \in \Q$ such that $\LCP(U,U')=d$, the pairs $(U,V)$ and $(U',V')$ are in the same list $\R_v$ of one node $v$ per each depth in $[0 \dd d]$.
Thus, to compute $\maxPairLCP(\P,\Q)$, it suffices to compute $f(v)$ for each node $v$ of the wavelet tree and return the maximum of such values.

Greedily, again by \cref{fact:trivial}, the maximum $\LCP$ is obtained by two \emph{consecutive} elements in the list~$\R_v$, i.e., as $\max\{\L_v[i]\,:\,G_v[i] \ne G_v[i-1],\,i \in [2 \dd |\L_v|]\}$.
This maximum LCP is computed in \cref{algo:applyLCPs} exactly from this formula with $\L$ and $G$ being streams representing $\L_v$ and $G_v$, as before.
The algorithm is used whenever the lists $\L_v$ and $G_v$ have been computed for a node $v$ in \cref{clm:useful}.

 \medskip
\newcommand{\rd}[1]{#1.\texttt{read}()}
\newcommand{\wrt}[2]{#1.\texttt{write}(#2)}
 \newcommand{\eof}[1]{#1.\mathtt{end}\texttt{\_}\mathtt{of}\texttt{\_}\mathtt{stream}()}
\begin{algorithm}[H]
  \caption{Application of $\LCPs$ lists.}\label{algo:applyLCPs}
  $r := g':= 0$\;
  \While{\KwSty{not} $\eof{G}$}{
    $g:= \rd{G}$\;
    \lIf{$g \ne g'$}{$r  := \max(r, \rd{\L})$}
    \lElse{$\rd{\L}$}
    $g' := g$\;
  }
  \Return{$r$}\;
 \end{algorithm}%

 \medskip
Let $\R_v=(U_1,V_1),\dots,(U_m,V_m)$. We next show that before the $i$th step of the while-loop,
\[r=\max(\{0\} \cup \{\LCP(V_{j-1},V_j)\,:\,j \in [2\dd i-1],\,G_v[j-1] \ne G_v[j]\}).\]
Moreover, $g'=G_v[i-1]$ if $i>1$ and $g'=0$ if $i=1$.

Let us consider the $i$th step of the loop with $g=G[i]$. If $i=1$, then $\L[i]=0$ by definition, so the value of $r$ does not change regardless of the result of comparing $g$ and $g'$. If $i>1$, we have $\L[i]=\LCP(V_{i-1},V_i)$. Therefore, if $g \ne g'$, i.e., $G[i-1] \ne G[i]$, variable $r$ is updated correctly, and otherwise the value $\L[i]$ can be discarded. Finally, $g'$ becomes $g$.

The complexities of \cref{algo:applyLCPs} are asymptotically dominated by those of \cref{algo:computeLCPs} and hence the result follows.
\end{proof}

\section{Sublinear-time LCS of Several Strings}
We show how to modify the results from~\cref{sec:0LCS,sec:proof_lem_main} to compute the LCS of $\lambda$ strings $T^{(1)},\ldots,T^{(\lambda)}$ of total length $n$ in $\Oh(n\log \sigma / \sqrt{\log n})$ time if $\lambda = \Oh(\sqrt{\log n}/\log \log n)$; in particular, for $\lambda=\Oh(1)$.

By $\ell$ we denote the length of an LCS. As before, the solution is divided into three cases, depending on whether $\ell$ is small (\cref{sec:lambda_short}), large (\cref{sec:lambda_long}), or medium (\cref{sec:lambda_medium}). Instead of instances of \problemtwo, we will obtain instances of its generalisation called \problemlambda:

\defproblem{\problemlambda}{%
A compacted trie $\T(\F)$ of $\F\sub \Sigma^*$
and $\lambda$ non-empty sets $\P^{(1)},\ldots,\P^{(\lambda)}\sub \F^2$, with $|\F|+\sum_{t=1}^\lambda |\P^{(t)}| = N$.}
 {$\maxMultiLCP(\P^{(1)},\ldots,\P^{(\lambda)})=\max\{\LCP(P^{(1)}_1,\ldots,P^{(\lambda)}_1)+\LCP(P^{(1)}_2,\ldots,P^{(\lambda)}_2) : (P^{(t)}_1,P^{(t)}_2)\in \P^{(t)}\text{ for all }t \in [1\dd \lambda]\}$.}

Before we proceed with the solution, in \cref{sec:lambda_strings} we provide generalisations of some of the tools used for two input strings to the setting with multiple input strings. We also discuss efficient predecessor/successor data structures (\cref{sec:pred}).

\subsection{Toolbox for Handling Multiple Input Strings}\label{sec:lambda_strings}
Let us recall that $(D,h)$ is a ($\lambda, d$)-\emph{cover} if $D \sub \mathbb{Z}_+$ is a set and $h : \Zp^\lambda \to [0\dd d)$ is an efficiently computable function such that, for any $i_1,\dots,i_\lambda\in \mathbb{Z}_+$ and $t\in [1\dd \lambda]$, we have $i_t+h(i_1,\dots,i_\lambda)\in D$.
We show an efficient construction of ($\lambda,d$)-covers; the theorem below is restated for convenience.

\lambdadcover*
\begin{proof}
Let $c = \floor{\sqrt[\lambda]d}$. 
Our initialization procedure determines $c$ in $\Oh(\lambda \cdot \log c)=\Oh(\log d)$ time using exponential search.

We define 
\[D = \bigcup_{t=1}^{\lambda} \{j\in \Zp : j \bmod c^{t} \ge c^t- c^{t-1}\}.\]
Note that $|D\cap [1\dd n]| \le \sum_{t=1}^{\lambda}|\{j\in [1\dd n] : j \bmod c^{t} \ge c^t- c^{t-1}\}| \le \lambda \cdot n/c = \Oh(\lambda \cdot n/\sqrt[\lambda]{d})$.
Moreover, the set $D\cap [1\dd n]$ can be easily constructed in $\Oh(\lambda \cdot n/\sqrt[\lambda]{d})$ time.

For any $i_1,\ldots,i_\lambda\in \Zp$, we define $h(i_1,\ldots,i_\lambda)=h_\lambda$ based on a sequence $0=h_0 \le \cdots \le h_{\lambda}$ specified using the following recursion for $t\in [1\dd \lambda]$:
\begin{align*}
h_t &= h_{t-1}+c^{t-1}\cdot \left(c-1- \left\lfloor \frac{i_t+h_{t-1}}{c^{t-1}} \right\rfloor \bmod c\right)\\
&=h_{t-1}+c^t-c^{t-1}+(i_t+h_{t-1})\bmod c^{t-1}-(i_t+h_{t-1})\bmod c^{t}.
\end{align*}
The values $(h_0,\ldots,h_\lambda)$ can be computed in $\Oh(\lambda)$ time along with the values $(c^0,\ldots, c^{\lambda})$.

A simple inductive argument proves that $h_t \in [0\dd c^{t})$ holds for every $t\in [0\dd \lambda]$ and, in particular, $h_\lambda \in [0\dd c^{\lambda})\subseteq [0\dd d)$. 
It remains to prove that $i_t + h_\lambda\in D$ holds for every $t\in [1\dd \lambda]$.
For this, observe that
\begin{align*}i_t + h_t &= i_t+h_{t-1}+c^t-c^{t-1}+(i_t+h_{t-1})\bmod c^{t-1}-(i_t+h_{t-1})\bmod c^{t}\\
&\equiv c^t-c^{t-1}+(i_t+h_{t-1})\bmod c^{t-1}  \pmod{c^t}.\end{align*}
Since $h_t \equiv h_{t+1}\equiv \cdots \equiv h_\lambda \pmod{c^t}$, we conclude that  $(i_t+h_\lambda)\bmod c^{t}=(i_t+h_t)\bmod c^{t} \in [c^t-c^{t-1}\dd c^t)$, and thus $i_t + h_\lambda \in D$ holds as claimed.
\end{proof}

\newcommand{\Color}{\mathsf{col}}
\newcommand{\X}{\mathcal{X}}
Let $\L$ be a multiset of strings and assume that each string $S\in \L$ is assigned a color $\Color(S) \in [1\dd \lambda]$. Note that, in general, multiple strings from $\L$ may be assigned the same color.
Let $\$,\#$ be auxiliary letters that are smaller and greater than all other letters of the alphabet, respectively. For a string~$U$ and a color $c \in [1\dd \lambda]$, by $\nnext_c(U,\L)$ we denote the lexicographically smallest string $S \in \L$ such that $S \ge U$ and $\Color(S)=c$. If no such string $S$ exists, we assume that $\nnext_c(U,\L)=\#$. Symmetrically, by $\pprev_c(U,\L)$ we denote the lexicographically greatest string $S \in \L$ such that $S \le U$ and $\Color(S)=c$, or $\$$ if no such string $S$ exists. We also denote
\begin{align*}\pprev(U,\L)&=\min\{\pprev_c(U,\L)\,:\,c \in [1\dd \lambda]\},\\
\nnext(U,\L)&=\max\{\nnext_c(U,\L)\,:\,c \in [1\dd \lambda]\}.
\end{align*}
A subset $\X$ of $\L$ is called a \emph{$\lambda$-rainbow} if $|\X|=\lambda$ and all strings in $\X$ have different colors.

The next lemma can be viewed as an extension of \cref{fact:trivial}.

\begin{lemma}\label{lem:colored_lcp}
Let $\L$ be a multiset of strings and assume that each string $S\in \L$ is assigned a color $\Color(S) \in [1\dd \lambda]$ so that no two equal strings in $\L$ have the same color and each color in $[1\dd \lambda]$ is present in $\L$.
\begin{enumerate}[(a)]
    \item\label{col_b} The maximum $\LCP$ over all $\lambda$-rainbows $\X \subseteq \L$ equals $\max_{S \in \L} \LCP(S, \pprev(S,\L))$.
    \item\label{col_a} Assume that $S \in \L$. Let $c_1,\ldots,c_\lambda$ be a permutation of $[1\dd \lambda]$ such that $\pprev_{c_1}(S,\L) \le \cdots \le \pprev_{c_\lambda}(S,\L)$ and $\Color(S)=c_\lambda$. Let us denote $c_0=c_\lambda$. Then, the maximum $\LCP$ of a $\lambda$-rainbow $\X \subseteq \L$ such that $S \in \X$ equals
    \begin{equation}\label{eq:maxLCPeq}
        \maxLCP(S,\L) := \max_{i=1}^\lambda \LCP(\pprev_{c_i}(S,\L),\ \max \{\nnext_{c_j}(S,\L)\,:\,j \in [0\dd i)\}).
    \end{equation}
\end{enumerate}
\end{lemma}
\begin{proof}
Proof of part \eqref{col_a}: Let us fix $S \in \L$, let $\zeta=\maxLCP(S,\L)$, and assume that the maximum in formula~\eqref{eq:maxLCPeq} is attained for $i=i'$.

Let $\X' \subseteq \L$ be a $\lambda$-rainbow such that $S \in \X'$ and $\LCP(\X')$ is maximal among such multisets. Among possibly many multisets $\X'$ that satisfy the requirements, we select the one with the largest $\min \X'$. By \cref{fact:trivial}, $\LCP(\X')=\LCP(\min \X', \max \X')$.
We will prove that $\zeta=\LCP(\X')$ by showing two inequalities.

$\zeta \le \LCP(\X')$: Since $\LCP(\X')\ge 0$ holds trivially, we can assume $\zeta > 0$. It suffices to construct a $\lambda$-rainbow $\X \subseteq \L$ such that $S \in \X$ and $\LCP(\X)\ge \zeta$. The required set is
\[\X=\{\pprev_{c_i}(S,\L)\,:\,i \in  [i'\dd \lambda]\} \cup \{\nnext_{c_i}(S,\L)\,:\,i \in  [1\dd i')\}.\]
Indeed, $S=\pprev_{c_\lambda}(S,\L) \in \X$.
Moreover, we have $\Color(\pprev_{c_i}(S,\L))=c_i$ for $i \in [i'\dd \lambda]$ and $\Color(\nnext_{c_j}(S,\L))=c_j$ for all $j \in [1\dd i')$; this is because $\zeta>0$, and hence $\pprev_{c_{i'}}(S,\L) \ne \$$ and $\max_{j \in [1\dd i')} \nnext_{c_j}(S,\L) \ne \#$. 
Consequently, $\X$ is a $\lambda$-rainbow. We have $\min \X = \pprev_{c_{i'}}(S,\L)$. Further,
\[\max \X = \max (\{\pprev_{c_\lambda}(S,\L)\} \cup \{\nnext_{c_j}(S,\L)\,:\,j \in [1\dd i')\}) = \max \{\nnext_{c_j}(S,\L)\,:\,j \in [0\dd i)\}.\]
By \cref{fact:trivial}, $\LCP(\X)=\LCP(\min \X,\max \X)=\zeta$, as required.

$\zeta \ge \LCP(\X')$: 
Recall that $\X'$ was selected as a multiset that maximises $\min \X'$.
Consider the multiset $\mathcal{Y}$ of minimal elements in $\X'$.
We next argue that $\mathcal{Y}$ has a non-empty subset $\mathcal{Y'}$ such that for all $Y \in \mathcal{Y'}$, $Y=\pprev_{\Color(Y)}(S,\L)$. Suppose to the contrary that this is not the case. Then, for each $Y \in \mathcal{Y'}$, let us replace $Y$ in $\X'$ with $\pprev_{\Color(Y)}(S,\L)$.
This way, we would obtain a $\lambda$-rainbow~$\X''$ containing $S$ and satisfying $\min \X'' > \min \X'$.
Moreover, \[\LCP(\X'') = \LCP(\min \X'', \max \X'') = \LCP(\min \X'', \max \X') \ge \LCP(\min \X', \max \X') = \LCP(\X')\]
by \cref{fact:trivial} as $\min X' < \min \X'' \le \max \X'$; a contradiction.

Now, consider an element of $\mathcal{Y}'$ with minimal color, and suppose that this color is $c_i$. Now, it suffices to note that
\begin{equation}\label{eq:next_ineq}
    \max(\X') \ge \max \{\nnext_{c_j}(S,\L)\,:\,j \in [0\dd i)\}
\end{equation}
Indeed, if $i>1$, by the order of $c_1,\ldots,c_\lambda$, no string $V$ in $\L$ such that $\pprev_{c_i}(S,\L) \le V \le S$ has a color in $\{c_1,\ldots,c_{i-1}\}$, and the smallest strings in $\L$ that are greater than $S$ and have colors in $\{c_1,\ldots,c_{i-1}\}$ are exactly $\nnext_{c_j}(S,\L)$ for $j \in [1\dd i)$. If $i=1$, the right-hand side of \eqref{eq:next_ineq} is just $S$.

By \eqref{eq:next_ineq} and \cref{fact:trivial},
\[\LCP(\X') = \LCP(\min \X', \max \X') \le \LCP(\min \X',\max \{\nnext_{c_j}(S,\L)\,:\,j \in [0\dd i)\}) = \zeta,\]
as required.

\medskip
Proof of part \eqref{col_b}: From part \eqref{col_a}, it follows that, for every string $S$,
\[\maxLCP(S,\L) = \LCP(\nnext(U,\L),U)\]
for some string $U$. More precisely, the string $U$ is defined as the string $\pprev_{c_i}(S,\L)$ for $i$ which yields the maximal value in \eqref{eq:maxLCPeq}.
Hence, the maximum LCP of a $\lambda$-rainbow $\X \subseteq \L$ does not exceed $\max_{U \in \L} \LCP( \nnext(U,\L),U)$, which is equal by symmetry to $\max_{U \in \L} \LCP( \pprev(U,\L),U)$. The same value is clearly a lower bound for this maximum LCP. This concludes the proof.
\end{proof}

We often answer $\LCE$ queries between substrings of pairs of input strings $T^{(1)},T^{(2)},\ldots,T^{(\lambda)}$. To this end, we use the $\LCE$ data structure of \cref{thm:lce} in a string $T^{(1)}\$ T^{(2)} \$\cdots \$T^{(\lambda)}$ (where $\$ \not \in \Sigma$). 
Whenever we perform a longest common extension query for substrings of two strings $T^{(x)}$ and $T^{(y)}$, we ensure that no delimiter $\$$ is crossed by capping the result -- as the substrings are given in the form $T^{(z)}[i \dd j]$, we can infer the distances of $i$ and $j$ from the beginning and end of~$T^{(z)}$ in constant time. A symmetric solution can be used for answering $\LCE$ queries on substrings of the reversals of the input strings. Let us note that we deliberately do not use unique delimiters when concatenating the input strings as this would blow up the alphabet size if $\sigma$ is small.

\subsection{Dynamic Predecessor Data Structures}\label{sec:pred}

For a set of integers $A$ and integer $x$, the predecessor and successor of $x$ in $A$ are defined as
$\max\{a \in A\,:\,a < x\}$ and $\min\{a \in A\,:\,a>x\}$, respectively. (We assume that $\max \emptyset = -\infty$ and $\min \emptyset = \infty$.)
In a dynamic version of the problem, a set $A \subseteq [1\dd u]$ of size up to $n$ is to be maintained under insertions and deletions so that predecessor and successor queries for $A$ can be answered efficiently.
The classic solution by van Emde Boas \cite{DBLP:journals/ipl/Boas77} achieves $\Oh(\log \log u)$ time per operation using space $\Oh(u)$. We use a slightly more space-efficient solution.

\begin{lemma}[\cite{DBLP:journals/siamcomp/Willard00,DBLP:journals/jda/LagogiannisMT06}]\label{lem:compact_pred}
The dynamic predecessor/successor problem on an instance of size up to $n$ over a universe $[1\dd u]$ can be solved in $\Oh(\log \log u)$ time per operation using space $\Oh(u/\log u+n)$.
\end{lemma}

The dynamic predecessor/successor problem can also be solved in $\Oh(n)$ space and $\Oh(\log \log u)$ amortised expected time per operation using a y-fast trie~\cite{DBLP:journals/ipl/Willard83}
or in $\Oh(\log^2 \log u/\log \log \log u)$ worst-case time per operation using an exponential search tree~\cite{DBLP:journals/jacm/AnderssonT07}.
For simplicity, an AVL tree with $\Oh(n)$ space and $\Oh(\log u)$ time cost of operations can also be used.
By $\pi_u$ we denote the time of an operation on a predecessor/successor data structure of size $\Oh(n)$ over universe $[1\dd u]$.

\subsection{Short LCS}\label{sec:lambda_short}
We assume that $\ell\leq \frac13 \log_\sigma n$.

The algorithm follows rather closely the proof of \cref{lem:short}. Each of the input strings is split into fragments of length (up to) $2m$, where $m = \lfloor\frac13 \log_\sigma n\rfloor$. We compute all distinct substrings corresponding to such fragments for each of the strings in $\Oh(n/m+\lambda+n^{2/3})$ total time using Radix Sort of pairs of the form: (an integer encoding a fragment, id number of the string it originates from). Each string $T^{(t)}$ is then replaced with a concatenation of its distinct fragments with unique delimiters. Thus, $\lambda$ strings of total length $\Oh(n^{2/3}m+\lambda)$ are obtained, and their LCS can be computed in $\Oh(n^{2/3}m+\lambda)$ time~\cite{DBLP:conf/cpm/Hui92}. The overall time complexity is $\Oh(n/m+n^{2/3}m+\lambda)=\Oh(n/\log_\sigma n)$.

\subsection{Long LCS}\label{sec:lambda_long}
Now, we assume that $\ell=\Omega(\frac{\log^{4\lambda} n}{\log^\lambda \sigma})$.

We consider  a $(\lambda,d)$-cover $(D,h)$ for $d \le \ell$ and then take:

\begin{itemize}
    \item $\P^{(t)}:= \{((T^{(t)}[1 \dd i))^R, T^{(t)}[i \dd |T^{(t)}|]) : i \in [1\dd |T^{(t)}|] \cap D\}$ for $t \in [1\dd \lambda]$,
    \item $\F := \{U : (U,V) \in \bigcup_{t=1}^\lambda \P^{(t)} \text{ or } (V,U) \in \bigcup_{t=1}^\lambda \P^{(t)} \text{ for some string }V\}$.
\end{itemize}

Thus, the LCS problem for $\lambda$ strings is reduced to an instance of \problemlambda for $\P^{(1)},\ldots,\P^{(\lambda)}$ and $\F$.
By \cref{thm:lambda_dcover}, $|\P^{(t)}|=\Oh(\lambda |T^{(t)}|/\sqrt[\lambda]{d})$, so $N=|\F|+\sum_{t=1}^\lambda |\P^{(t)}|=\Oh(\lambda \cdot n / \sqrt[\lambda]{d})$.
By \cref{thm:lambda_dcover}, the sets $\P^{(1)},\ldots,\P^{(\lambda)}$ can be constructed in $\Oh(\lambda \cdot n / \sqrt[\lambda]{d})$ time.
Construction of the compacted trie $\TT(\F)$ in $\Oh(N \log N+n / \log_\sigma n)$ time follows the construction for two strings.

The solution is completed by the following lemma. In its proof, we adapt the technique of merging AVL trees from~\cite{DBLP:journals/ipl/FlouriGKU15}. A main modification is that we do not need to use the very efficient merging algorithm of Brown and Tarjan~\cite{DBLP:journals/jacm/BrownT79} in which case we would have to ensure that it can handle a more complex type of queries; instead, we pay an extra $\Oh(\log n)$ factor that stems from using a simple smaller-to-larger trick.

\newcommand{\lcpstr}{\mathsf{LCPStr}}
\newcommand{\iden}{\mathsf{id}}
\newcommand{\result}{\mathsf{result}}
\begin{lemma}\label{lem:problemlambda}
The \problemlambda can be solved in $\Oh(N\lambda\pi_N\log N)$ time and $\Oh(N\lambda)$ space.
\end{lemma}
\begin{proof}
For strings $U_1,\ldots,U_k$, by $\lcpstr(U_1,\ldots,U_k)$ let us denote their longest common prefix. For each node $w$ of $\T(\F)$, we will compute a list of pairs ${(P^{(1)}_1,P^{(1)}_2)\in \P^{(1)},\ldots,(P^{(\lambda)}_1,P^{(\lambda)}_2)\in \P^{(\lambda)}}$ such that $\val(w)$ is a prefix of $\lcpstr(P^{(1)}_1,\ldots,P^{(\lambda)}_1)$ (where $\val(w)$ is the string corresponding to node~$w$) and $\LCP(P^{(1)}_2,\ldots,P^{(\lambda)}_2)$ is maximal; the latter will be denoted as $\result(w)$ (if there is no such list, we set $\result(w)=-\infty$). In the end, we will return the maximum of values $|\val(w)|+\result(w)$ over nodes $w$. Let us observe that the prefix $\lcpstr(P^{(1)}_1,\ldots,P^{(\lambda)}_1)$ corresponds to an explicit node of $\T(\F)$, so it is sufficient to consider explicit nodes $w$.

We will iterate through all explicit nodes of $\T(\F)$ in a bottom-up manner. For each node $w$, we will compute the multiset
\[\L(w)=\{V\,:\,(U,V) \in \bigcup_{t=1}^\lambda \P^{(t)},\,\val(w)\text{ is a prefix of }U\}.\]

\paragraph{Computing the sets $\L(w)$.}
We order the second components of pairs across all the sets $\P^{(t)}$ lexicographically. This can be done in $\Oh(N)$ time by a left-to-right traversal of $\T(\F)$. For each string pair $(U,V)$ in a set~$\P^{(t)}$, we assign to $V$ an integer identifier $\iden(V)$ that indicates the position of $V$ in the sorted list of strings and color $\Color(V)=t$.

The strings in $\L(w)$ will be stored in $\lambda$ dynamic predecessor data structures, one per string color. The strings~$V$ in a predecessor data structure will be stored using their identifiers $\iden(V)$.
We will ensure that, at any time, there is an injection from the elements stored in the predecessor structures to the elements of the multiset obtained as the union of $\P^{(t)}$ for $t \in [1\dd \lambda]$. Hence, the total space required for the predecessor structures is~$\Oh(N\lambda)$.

The multisets $\L(w)$ can be initialised to empty in $\Oh(N\lambda)$ time.
For an explicit node $w$, we compute $\L(w)$ as $\bigcup_{j=1}^r \L(v_j)$ where $v_1,\ldots,v_r$ are all explicit children of $w$; we may then need to insert additional strings if $w$ is a terminal node (that is, we insert strings $V$ from pairs $(U,V)$ from $\bigcup_{t=1}^\lambda \P^{(t)}$ such that $U=\val(w)$). The multisets $\L(v_j)$ are not kept; instead, as usual in applications of the smaller-to-larger trick, the union is computed by choosing the largest among the multisets, say $\L(v_{j'})$, and inserting to it all elements from $\L(v_j)$, for $j \ne j'$ (elements are moved from source predecessor data structure with color $c$ to a target predecessor data structure of the same color), and potentially strings obtained for a terminal node. In total, every element will be moved between predecessor data structures at most $\Oh(\log N)$ times, as the size of its multiset $\L$ at least doubles after a move, and the final multiset $\L$ for the root of $\T(\F)$ has $\Oh(N)$ elements. Thus, the total cost of insertions is $\Oh(N\lambda+N \pi_N\log N)$.

\paragraph{Computing the maximum LCP of a rainbow.}
The multiset $\L(w)$ can be viewed as a multiset of strings with assigned colors. When the multiset $\L(w)$ has been computed, for each string $V$ that was inserted to it (that is, $V \in \L(v_j)$ for some $j \ne j'$), we will compute the maximum $\LCP$ of a $\lambda$-rainbow $\X \subseteq \L(w)$ such that $V \in \X$. We compute $\result(w)$ as the maximum of these $\LCP$s (or $-\infty$ if there is no such $\LCP$) and the values $\result(v_j)$ over all explicit children $v_j$ of $w$.

The maximum $\LCP$ of a $\lambda$-rainbow $\X$ containing a given $V \in \L(w)$ is computed using the formula from \cref{lem:colored_lcp}\eqref{col_a}. Our representation of the multiset $\L(w)$ allows answering each query $\pprev_c(S,\L(w))$ or $\nnext_c(S,\L(w))$ for $S \in \L(w)$ in $\Oh(\pi_N)$ time, the time of an operation on a predecessor/successor data structure of linear size over $[1\dd N]$. The formula takes $\Oh(\lambda \pi_N)$ time to compute the values $\pprev$ and $\nnext$, $\Oh(\lambda \log \log \lambda)$ time to compute the permutation $c_1,\ldots,c_\lambda$ by sorting~\cite{DBLP:journals/jal/Han04}, and $\Oh(\lambda)$ time for the remaining operations; in total, it works in $\Oh(\lambda (\log \log \lambda+\pi_N))=\Oh(\lambda \pi_N)$ time as sorting reduces to dynamic predecessors and $\lambda \le N$. The formula is used $\Oh(N \log N)$ times, whenever a string is inserted into a multiset $\L$. This yields $\Oh(N\lambda \pi_N\log N)$ time. 
\end{proof}

Recall that $N=\Oh(\lambda \cdot n / \sqrt[\lambda]{d})$.
Overall, the time complexity of the solution to long LCS (\cref{lem:problemlambda}) is $\Oh(n \lambda^2 \pi_n \log n / \sqrt[\lambda]{d}+n / \log_\sigma n)$, which is $\Oh(n / \log_\sigma n)$ if $d=\Omega(\frac{\log^{4\lambda} n}{\log^\lambda \sigma})$ and $\lambda=\Oh(\sqrt{\log n})$ (and $\pi_n = \Oh(\log n)$). Clearly, the space is also bounded by $\Oh(n / \log_\sigma n)$ in this case.

\subsection{Medium-length LCS}\label{sec:lambda_medium}
Finally, we consider the case that $\frac13\log_\sigma n\le\ell\le 2^{\Oh(\sqrt{\log n})}$. Then, all possible values of $\ell$ will have been considered, as $\log^{4\lambda} n = 2^{\Oh(\sqrt{\log n})}$ for $\lambda = \Oh(\sqrt{\log n}/\log \log n)$.

The subsets of positions $A_j$ for $j\in \{I,\II,\III\}$ are defined as in \cref{sec:medium} for a string \[T^{(1)}\$ \cdots T^{(\lambda-1)}\$ T^{(\lambda)}\] where $\$ \not\in \Sigma$. We then denote
\[A^{(t)}_j=\{a-\mathit{sum}_{t-1}\,:\,a \in A_j \cap (\mathit{sum}_{t-1}\dd \mathit{sum}_t)\}\]
for $j\in \{I,\II,\III\}$ and $\mathit{sum}_t=|T^{(1)}|+\dots+|T^{(t)}|+t$.
Note that given the sets $A_j$ for $j\in \{I,\II,\III\}$ and the lengths of the strings $T^{(t)}$ for $t \in [1 \dd \lambda]$, we can compute all sets $A^{(t)}_j$ in time $\cO(\lambda + \sum_{j \in \{I,\II,\III\}} |A_j| \log\log n)$ using integer sorting~\cite{DBLP:journals/jal/Han04}. This yields a time complexity of $\cO(n \log\log n /  \log_\sigma n)$.

\cref{lem:anchors} can be directly generalised as follows.

\begin{lemma}\label{lem:anchors_lambda}
If an LCS of $T^{(1)},\ldots,T^{(\lambda)}$ has length $\ell \ge 3\tau$, then there exist positions $i^{(t)} \in [1\dd |T^{(t)}|]$ for $t \in [1\dd \lambda]$, a shift $\delta \in [0\dd  \ell)$, and $j \in \{I,\II,\III\}$ such that all strings $T^{(t)}[i^{(t)} \dd i^{(t)}+\ell)$ for $t \in [1\dd \lambda]$ are equal, $i^{(t)}+\delta\in A^{(t)}_j$ for all $t \in [1\dd \lambda]$, and
\begin{itemize}
    \item if $j=I$, then we can choose $\delta\in [0\dd \tau)$;
    \item if $j=\II$, then $T^{(1)}[i^{(1)} \dd i^{(1)}+\ell)$ is contained in the $\tau$-run from which $i^{(1)}+\delta\in A^{(1)}$ originates;
    \item if $j=\III$, then $\delta \ge 3\tau-2$ and $T^{(1)}[i^{(1)} \dd i^{(1)}+\delta]$ is a suffix of the $\tau$-run from which $i^{(1)}+\delta\in A^{(1)}$ originates.
\end{itemize}
\end{lemma}

The proof of \cref{lem:anchors_lambda} follows the proof of \cref{lem:anchors}; we choose the indices $i^{(1)},\dots,i^{(\lambda)}$ that minimise the sum $i^{(1)}+\dots+i^{(\lambda)}$. Let us note that if $\ell \ge 3\tau$, the equal delimiters do not create any $\tau$-runs in $T^{(1)}\$ \cdots T^{(\lambda-1)}\$ T^{(\lambda)}$ that would not be present in the strings $T^{(1)},\ldots,T^{(\lambda)}$.

Let us now show how to generalise the solutions to the special cases of \problemtwo to $\lambda$ strings.

\begin{lemma}\label{lem:suffix2}
An instance of the \problemlambda in which $\bigcup_{t=1}^{\lambda} \P^{(t)}$ is a prefix family can be solved in $\Oh(N\lambda\log \log N)$ time and $\Oh(N\lambda/\log N + N)$ space.
\end{lemma}
\begin{proof}
We first show how \cref{clm:gamma} from the proof of \cref{lem:suffix} can be adapted to the case of~$\lambda$ input strings.

We construct an array $\R$ that contains all string pairs from $\bigcup_{t=1}^\lambda \P^{(t)}$, sorted by their second components. For a string pair $e=(U,V) \in \P^{(t)}$, where $t \in [1\dd \lambda]$, we denote a list of strings
\[\L_{|U|}=\{V'\,:\,(U',V') \in \R,\,|U'| \ge |U| \}.\]
We assume that strings in $\L_{|U|}$ are sorted lexicographically and each string $V'$ in $\L_{|U|}$ has a color $c \in [1\dd \lambda]$ such that the original string pair $(U',V')$ belonged to $\P^{(c)}$. This allows us to apply the techniques behind \cref{lem:colored_lcp}.

The following claim replaces \cref{clm:gamma}. We assume that $\maxLCP(V,\L_{|U|})=-\infty$ (cf.\ \cref{lem:colored_lcp}) if there is no $\lambda$-rainbow $\X \subseteq \L_{|U|}$ such that $V \in \X$.

\begin{claim}
$\maxMultiLCP(\P^{(1)},\ldots,\P^{(\lambda)}) = \max_{(U,V) \in \R} (|U|+\maxLCP(V,\L_{|U|}))$.
\end{claim}
\begin{proof}
We have $\maxMultiLCP(\P^{(1)},\ldots,\P^{(\lambda)}) \ge \max_{(U,V) \in \R}(|U|+\maxLCP(V,\L_{|U|}))$ since the family $\bigcup_{t=1}^{\lambda} \P^{(t)}$ is a prefix family. We proceed to show the other inequality.

For all $t \in [1\dd \lambda]$, let us pick $(P^{(t)}_1,P^{(t)}_2)\in \P^{(t)}$ such that 
\[\maxMultiLCP(\P^{(1)},\ldots,\P^{(\lambda)})=\LCP(P^{(1)}_1,\ldots,P^{(\lambda)}_1)+\LCP(P^{(1)}_2,\ldots,P^{(\lambda)}_2).\]
As in the proof of \cref{clm:gamma}, we choose an index $t' \in [1\dd \lambda]$ such that $\min\{|P^{(1)}_1|,\ldots,|P^{(\lambda)}_1|\}=|P_1^{(t')}|$. Let us denote $(U,V)=(P^{(t')}_1,P^{(t')}_2)$. Then \[\LCP(P^{(1)}_1,\ldots,P^{(\lambda)}_1)=|P_1^{(t')}|=|U|\text{ and }\LCP(P^{(1)}_2,\ldots,P^{(\lambda)}_2)\le\maxLCP(V,\L_{|U|}).\]
Hence, indeed $\maxMultiLCP(\P^{(1)},\ldots,\P^{(\lambda)}) \le \max_{(U,V) \in \R}(|U|+\maxLCP(V,\L_{|U|}))$.
\end{proof}

\cref{lem:colored_lcp}\eqref{col_a} gives a formula for $\maxLCP(V,\L_{|U|})$ that can be evaluated efficiently if strings $\pprev_c(V,\L_{|U|})$ (and $\nnext_c(V,\L_{|U|})$) for all $c \in [1\dd \lambda]$ are at hand and strings $\pprev_c(V,\L_{|U|})$ can be sorted by $c$. 

We will process all pairs in $\mathcal{R}$ in the order of non-decreasing lengths of the first components. As before, this order can be computed in $\Oh(N)$ time using the compacted trie $\T(\F)$. All elements with equal lengths of the first components are processed simultaneously.

We store integer sets $L^{(1)},\ldots,L^{(\lambda)}$ such that $L^{(t)}$ contains an index $i \in [1\dd |\R|]$ if and only if $\R[i]$ originates from $\P^{(t)}$ and was not processed yet. Each set $L^{(t)}$ is stored in an instance of the dynamic predecessor/successor data structure of \cref{lem:compact_pred}, for a total of $\Oh(N \lambda/\log N + N)$ space.

Consider the processing of a string pair $\R[i]=(U,V)$. At the time of its processing, we have
$$L^{(t)}=\{j \in [1\dd |\R|]\,:\,\R[j]=(U',V'),\,|U'| \ge |U|,\,\Color(V')=t\}$$
for each $t \in [1\dd \lambda]$. Thus, to answer a $\pprev_c(V,\L_{|U|})$ query, it suffices to compute the predecessor of $i$ in $L^{(c)}$. If the predecessor is infinite, $\pprev_c(V,\L_{|U|})=\$$. If the predecessor equals $j<i$, then $\pprev_c(V,\L_{|U|})=V'$ where $\R[j]=(U',V')$.
Values $\nnext_c(V,\L_{|U|})$ can be computed symmetrically using successor queries.
Finally, all strings $\pprev_c(V,\L_{|U|})$, for $c \in [1\dd \lambda]$, can be sorted by sorting the corresponding predecessor values.

Hence, computing $\maxLCP(V,\L_{|U|})$ requires $\Oh(\lambda \log \log N)$ time for asking $\lambda$ predecessor and successor queries (\cref{lem:compact_pred}) and $\Oh(\lambda \log \log \lambda)$ time for sorting strings $\pprev_c(V,\L_{|U|})$, for $c \in [1\dd \lambda]$, represented as integers~\cite{DBLP:journals/jal/Han04}. LCPs on second components of pairs are computed in $\Oh(1)$ time using LCA queries on $\T(\F)$. Overall, this gives $\Oh(N \lambda \log \log N)$ time.
\end{proof}

\medskip
We now generalise the solution to the second special case of \problemlambda.

\begin{lemma}\label{lem:main2}
An instance of the \problemlambda in which $\P^{(t)}$, for all $t \in [1\dd \lambda]$, are $(\alpha,\beta)$-families can be solved in time 
$\cO(N(\alpha+\log N)(\log \beta + \sqrt{\log N})/ \log N)$ and space $\cO(N+N\alpha/\log N)$ provided that $\log \beta = o(\log N)$ and $\lambda = o(\sqrt{\log N})$.
\end{lemma}
\begin{proof}
To obtain \cref{lem:main2}, we use wavelet trees as in the proof of \cref{lem:main} in \cref{sec:proof_lem_main}.

Lists of $\LCPs$ are computed using  \cref{algo:computeLCPs}. The sole difference is that a local variable used for storing the value from stream $G$ now uses $\log \lambda$ bits instead of just one bit. The space grows to $s = \Oh(\log \beta+\log \lambda)$ bits.

\newcommand{\rd}[1]{#1.\texttt{read}()}
 \newcommand{\wrt}[2]{#1.\texttt{write}(#2)}
 \newcommand{\eof}[1]{#1.\mathtt{end}\texttt{\_}\mathtt{of}\texttt{\_}\mathtt{stream}()}

 \begin{figure}[htpb]
\begin{algorithm}[H]
  \caption{Application of $\LCPs$ lists in the case of $\lambda$ input strings.}\label{algo:applyLCPs_lambda}
    $\ell_1 := \cdots := \ell_\lambda := r := 0$\;
    \While{\KwSty{not} $\eof{G}$}{
        $\ell := \rd{\L}$\;
        \lFor{$t:=1$ \KwSty{to} $\lambda$}{$\ell_t := \min(\ell_t, \ell)$}
        $g:=\rd{G}$\;
        $\ell_g := \beta$\;
        $r := \max(r,\,\min(\ell_1,\ldots,\ell_\lambda))$\;
    }
  \Return{$r$}\;
 \end{algorithm}%
\end{figure}

For applying LCPs list, we use \cref{algo:applyLCPs_lambda}. Let $\R_v=(U_1,V_1),\ldots,(U_m,V_m)$ and $\V_v=V_1,\ldots,V_m$ be a list of strings with colors in $[1\dd \lambda]$ that correspond to their origins $G_v$. We next show that, after the $i$th step of the while-loop, we have
\begin{align*}
    \ell_t&=\begin{cases}\LCP(V_i,\pprev_t(V_i,\mathcal{V}_v)) & \text{ if }G[i] \ne t,\\\beta&\text{ otherwise},\end{cases}\\
    r&=\max(\{0\} \cup \{\LCP(V_j,\pprev(V_j,\mathcal{V}_v))\,:\,j \in [1\dd i]\}).
\end{align*}
After the first step of the loop ($\ell=0$), we indeed have $\ell_t=0$ for all $t \in [1\dd \lambda] \setminus \{g\}$ and $\ell_g=\beta$, where $g=G[1]$. Let us consider the $(i+1)$th step of the loop with $g=G[i+1]$. We have $\ell=\LCP(V_i,V_{i+1})\le \beta$, so, for each $t \in [1\dd \lambda] \setminus \{g\}$, the variable $\ell_t$ becomes
\[\LCP(V_i,V_{i+1})\text{ if }G[i]=t\text{ and }\min(\LCP(V_i,\pprev_t(V_i,\mathcal{V}_v)),\LCP(V_i,V_{i+1}))\text{ otherwise}.\]
In either case, $\ell_t=\LCP(V_{i+1},\pprev_t(V_{i+1},\mathcal{V}_v))$; if $G[i] \ne t$, we use \cref{fact:trivial} and the fact that the strings in the list $\V_v$ are ordered lexicographically to argue for this. Then, $\ell_g$ is set to $\beta$, which is an obvious upper bound on $\LCP(V_{i+1},\pprev_g(V_{i+1},\mathcal{V}_v))=\LCP(V_{i+1},V_{i+1})$. In the end, the value of $r$ is maximised with
\[\min\{\LCP(V_{i+1},\pprev_t(V_{i+1},\mathcal{V}_v))\,:\,t \in [1\dd \lambda]\setminus\{g\}\}\ =\ \LCP(V_{i+1},\pprev(V_{i+1},\mathcal{V}_v)).\]
Hence, the value of $r$ is as claimed.
The correctness follows from \cref{lem:colored_lcp}\eqref{col_b}.

\cref{algo:applyLCPs_lambda} works on $\Oh(1)$ data streams, uses $s = \Oh(\lambda\log \beta + \log \lambda) = o(\log n)$ (for $\lambda = o(\sqrt{\log n})$) bits of space, and performs a read or write among every $t = \Oh(\lambda)$ instructions.

For a node $v$ of the wavelet tree, the total size of the input and output streams in bits is proportional to the number of bits in streams $\L_v$ and $G_v$, i.e., $\Oh(|B_v| (\log \beta + \log \lambda))$. By \cref{prp:streaming}, for $\tau=\frac12 \log n$, after $o(N)$ preprocessing, each of the algorithms can be executed in $\Oh(1+|B_v| (\log \beta + \log \lambda) / \log N)$ time. Over all nodes, \cref{algo:computeLCPs,algo:applyLCPs_lambda} consume $\Oh(Nh(\log \beta + \log \lambda)/\log N)$ time, where $h=\Oh(\alpha+\log N)$. As in \cref{lem:main}, the wavelet tree can be built in $\Oh(N h/\sqrt{\log N}+N)$ time using space $\Oh(N h/\log N+N)$. This yields \cref{lem:main2}.
\end{proof}

Finally, \cref{lem:anchors_lambda} leads to an $\cO(n \log\log n /  \log_\sigma n)$-time reduction of computing LCS of $\lambda$ strings if $\frac13\log_\sigma n \le\ell\le 2^{\Oh(\sqrt{\log n})}$ to several instances of \problemlambda of total size $N=\Oh(n/\tau)$, just as in the proof of \cref{lem:medium}. For each instance, either all sets~$\P^{(t)}$ are $(\frac13 \log_\sigma n,2^{\Oh(\sqrt{\log n})})$-families (and then \cref{lem:main2} can be used) or $\bigcup_{t=1}^{\lambda} \P^{(t)}$ is a prefix family (and then \cref{lem:suffix2} is applied). The total time complexity is $\Oh(n \lambda\log\log n\log \sigma/\log n + n\log \sigma/\sqrt{\log n})$, which is $\Oh(n\log \sigma/\sqrt{\log n})$ for $\lambda=\Oh(\sqrt{\log n}/\log \log n)$. The space complexity is $\Oh(n\lambda\log\sigma/\log^2 n+n/\log_\sigma n)$ which is $\Oh(n/\log_\sigma n)$ if $\lambda=\Oh(\log n)$.

Together with the $\Oh(n /\log_\sigma n)$-time algorithms of \cref{sec:lambda_short,sec:lambda_long}, we obtain the following result, which extends \cref{thm:sublinear}.

\begin{theorem}\label{thm:sublinear2}
    Given $\lambda=\Oh(\sqrt{\log n}/\log \log n)$ strings $T^{(1)},\ldots,T^{(\lambda)}$ of total length $n$ over an alphabet $[0\dd \sigma)$, their LCS can be computed in
    $\cO(n\log\sigma/\sqrt{\log n})$ time using $\Oh(n \log\sigma/ \log n)$ space.
\end{theorem}

\section{Faster \texorpdfstring{$k$}{k}-LCS}\label{sec:kLCP}
\newcommand{\MI}{\mathsf{MI}}
\newcommand{\MP}{\mathsf{MP}}
\newcommand{\Fam}{\mathbf{F}}

In this section, we present our $\Oh(n\log^{k-1/2}n)$-time algorithm for the $k$-LCS problem with $k=\Oh(1)$, that underlies \cref{thm:klcs}.
For simplicity, we focus on computing the length of a $k$-LCS; an actual pair of strings forming a $k$-LCS can be recovered easily from our approach.  
If the length of an LCS of $S$ and $T$ is $d$, then the length of a $k$-LCS of $S$ and $T$ is in $[d\dd (k+1)d+k]$. Below, we show how to compute a $k$-LCS provided that it belongs to an interval $(\lceil\ell/2\rceil\dd\ell]$ for a specified $\ell$; to solve the $k$-LCS problem it suffices to call this subroutine $\Oh(\log k)=\Oh(1)$ times.

\begin{definition}
For two strings $U,V\in \Sigma^m$, we define the \emph{mismatch positions} $\MP(U,V) = \{i\in [1\dd  m] :  U[i]\ne V[i]\}$.
Moreover, for $k\in \mathbb{Z}_{\ge 0}$, we write $U=_k V$ if $|\MP(U,V)|\le k$.
\end{definition}

Similarly to our solutions for long and medium-length LCS, we first distinguish anchors $A^S\sub [1\dd |S|]$ in $S$ and $A^T\sub [1\dd |T|]$ in $T$,
as summarised in the following lemma. 
\begin{lemma}\label{prp:anchors}
Consider an instance of the $k$-LCS problem for $k=\Oh(1)$ and let $\ell \in [1\dd n]$. In $\Oh(n)$ time, one can construct sets $A^S\sub [1\dd |S|]$ and $A^T\sub [1\dd |T|]$ of size $\Oh(\frac{n}{\ell})$ satisfying the following condition:
If a $k$-LCS of $S$ and $T$ has length $\ell' \in (\lceil\ell/2\rceil \dd \ell]$, then there exist positions $i^S \in [1\dd |S|]$, $i^T \in [1\dd |T|]$ and a shift  $\delta\in [0\dd \ell')$ such that $i^S+\delta \in A^S$, $i^T+\delta \in A^T$, and  the Hamming distance between $S[i^S\dd i^S+\ell')$ and $T[i^T\dd i^T+\ell')$ is at most $k$.
\end{lemma}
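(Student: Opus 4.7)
The plan is to adapt Lemma~\ref{lem:anchors} to the $k$-mismatch setting. I set $\tau = \lfloor \ell/c \rfloor$ for a suitable constant $c = c(k)$ (e.g.\ $c = 18(k+1)$), chosen so that any $k$-LCS of length $\ell' \in (\ell/2,\ell]$ contains a matching block of length at least $3\tau$ between two consecutive mismatches (by pigeonhole on the $\leq k$ mismatch positions, the longest matching block has length at least $\lceil(\ell'-k)/(k+1)\rceil$). The anchor sets combine three ingredients, all computable in $\Oh(n)$ time: a $\tau$-synchronising set $A_I$ of $S\$T$ via Theorem~\ref{thm:synch_packed}; the first two Lyndon-root occurrences and the endpoint of each $\tau$-run in $S\$T$, via Lemma~\ref{lem:tauruns}; and $\Oh(1)$ additional \emph{misperiod} anchors per $\tau$-run, following the approach of~\cite{DBLP:conf/soda/BringmannWK19,DBLP:journals/jcss/Charalampopoulos21}.

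Fix a $k$-LCS witnessed by $S[i^S\dd i^S+\ell')$ and $T[i^T\dd i^T+\ell')$ and let $S[i^S+a\dd i^S+b) = T[i^T+a\dd i^T+b)$ be the guaranteed matching block of length at least $3\tau$. If $\per(S[i^S+a\dd i^S+a+3\tau-1]) > \tau/3$, then by property~(\ref{item2}) of synchronising sets, the interval $[i^S+a,\,i^S+a+\tau)$ contains an element $i^S+\delta \in A_I$; Lemma~\ref{lem:synch} applied to the matching block yields $i^T+\delta \in A_I$ at the same shift $\delta$, settling the aperiodic case.

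If instead every length-$3\tau$ window of every long matching block has period at most $\tau/3$, then by the periodicity lemma, the two occurrences of the $k$-LCS are covered, up to the $k$ mismatches, by a $\tau$-run $R_S$ in $S$ and a $\tau$-run $R_T$ in $T$ that share a Lyndon root. Here I would mirror the second and third subcases of Lemma~\ref{lem:anchors}: when both $R_S$ and $R_T$ fully contain the corresponding $k$-LCS occurrence, an anchor placed at a Lyndon-root occurrence aligns the two; when one of the runs ends strictly inside the $k$-LCS, its endpoint serves as the anchor. Mismatches may shift the apparent alignment of the Lyndon roots between $S$ and $T$, but only via one of $\Oh(1)$ \emph{misperiods} of the run (positions where a mismatch breaks the period), so the $\Oh(1)$ misperiod anchors per run absorb these perturbations.

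The main obstacle is the periodic case: I must prove that a constant number of anchors per $\tau$-run is enough to produce aligned anchors $i^S+\delta \in A^S$, $i^T+\delta \in A^T$ with a \emph{common} shift $\delta$, for every possible distribution of the $k$ mismatches between the interior of the runs (absorbed by the misperiod anchors) and the run boundaries (absorbed by the endpoint anchors). A careful case analysis, similar in spirit to the periodic subcase of Lemma~\ref{lem:anchors} but enriched with the misperiod machinery of~\cite{DBLP:conf/soda/BringmannWK19,DBLP:journals/jcss/Charalampopoulos21}, handles this. Since $|A_I| = \Oh(n/\tau) = \Oh(n/\ell)$ and the number of $\tau$-runs in $S\$T$ is $\Oh(n/\tau)$ by Lemma~\ref{lem:tauruns}, with each contributing $\Oh(1)$ anchors, the final sets $A^S, A^T$ have size $\Oh(n/\ell)$ and the construction runs in $\Oh(n)$ total time.
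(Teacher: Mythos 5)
Your high-level plan matches the paper's: rescale to $\tau=\Theta(\ell/k)$, use pigeonholing over the at most $k$ mismatch positions to find an exactly matching block of length about $3\tau$ inside the $k$-LCS, handle the aperiodic case with a $\tau$-synchronising set, and handle the periodic case with anchors derived from misperiods of $\tau$-runs. The aperiodic case is correct as written. But the part you yourself flag as ``the main obstacle'' --- producing a \emph{common} shift $\delta$ in the periodic case --- is exactly where all the work lies, and you defer it to ``a careful case analysis'' without supplying the ideas that make it go through. Concretely, the paper's argument needs: (i) the witness pair $(i^S,i^T)$ must be chosen to minimise $i^S+i^T$; (ii) a dichotomy on whether $\Misp_{k+1}(U,1{+}s,1{+}s{+}p)$ and $\Misp_{k+1}(V,1{+}s,1{+}s{+}p)$ intersect --- a common misperiod immediately yields an aligned pair of anchors of the third kind; (iii) when these sets are disjoint, a lemma from~\cite{DBLP:journals/jcss/Charalampopoulos21} showing that the set of mismatch positions of $(U,V)$ is then exactly the union of the two (complete) misperiod sets, each of size at most $k$; and (iv) an exchange argument: if no aligned anchors of the second kind exist, one of the two occurrences can be slid left by the period $p$ without the Hamming distance exceeding $k$, contradicting the minimality in (i). None of these steps appears in your write-up, and without (i) and (iv) in particular the periodic case does not close.

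A second, related issue is that the anchor set you propose for the periodic case is essentially the one from the exact-LCS \cref{lem:anchors} (first two Lyndon-root occurrences plus the run endpoint) plus unspecified ``misperiod anchors''. That is not quite enough. In the $k$-mismatch setting the relevant occurrence of the $k$-LCS may begin just after an \emph{interior} misperiod of the run's period, not only after the run's left boundary; accordingly, the paper places, for \emph{each} of the $k+1$ left misperiods $x$ of each $\tau$-run, the two smallest Lyndon-root-aligned positions to the right of $x$ (generalising ``first two Lyndon-root occurrences'' from the run boundary to every left misperiod), and separately all $2(k+1)$ misperiod positions themselves. You would need to commit to a concrete anchor set of this kind before the case analysis can even be attempted; saying that ``$\Oh(1)$ misperiod anchors per run absorb these perturbations'' asserts the conclusion rather than proving it.
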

\begin{proof} 
  As in~\cite{DBLP:journals/jcss/Charalampopoulos21}, we say that position $a$ in a string $X$ is a \emph{misperiod} with respect to a substring $X[i \dd j)$ if $X[a] \ne X[b]$, where $b$ is the unique position such that $b \in [i\dd j)$ and ${(j-i) \mid (b-a)}$; for example, $j-i$ is a period of $X$ if and only if there are no misperiods with respect to $X[i\dd j)$.
  We define the set $\LeftMis_k(X,i,j)$ as the set of $k$ rightmost misperiods to the left of $i$ and $\RightMis_k(X,i,j)$ as the set of $k$ leftmost misperiods to the right of $j$.
  Either set may have fewer than $k$ elements if the corresponding misperiods do not exist.
  Further, let us define
  $\Misp_k(X,i,j)=\LeftMis_k(X,i,j) \cup \RightMis_k(X,i,j)$
  and $\Misp(X,i,j) = \bigcup_{k=0}^\infty \Misp_k(X,i,j)$.

  \begin{figure}[htpb]
\centering
\begin{tikzpicture}
  \begin{scope}
    \foreach \x in {-2.8,-2.1,...,7.7}{
        \draw[xshift=\x cm] (0.5*0.35,0.4) .. controls (0.35,0.7) and (2*0.35,0.7) .. (2.5*0.35,0.4);
    }
    \filldraw[white] (-0.175,0.4) rectangle (0.175,0.8);
    \filldraw[white,xshift=-7*0.35 cm] (-0.175,0.4) rectangle (0.175,0.8);
    \filldraw[white,xshift=20*0.35 cm] (-0.175,0.4) rectangle (0.175,0.8);
    \filldraw[white,xshift=23*0.35 cm] (-0.175,0.4) rectangle (0.175,0.8);
    \filldraw[white,xshift=24*0.35 cm] (-0.175,0.4) rectangle (0.175,0.8);
    \foreach \x in {-2.8,7.7}{
        \begin{scope}[xshift=\x cm]
        \clip (0.5*0.35,0.39) rectangle (1.5*0.35,0.8);
        \draw[densely dotted] (0.5*0.35,0.4) .. controls (0.35,0.7) and (2*0.35,0.7) .. (2.5*0.35,0.4);
        \end{scope}
    }
    \foreach \x in {-0.7,6.3}{
        \begin{scope}[xshift=\x cm]
        \clip (1.5*0.35,0.39) rectangle (2.5*0.35,0.8);
        \draw[densely dotted] (0.5*0.35,0.4) .. controls (0.35,0.7) and (2*0.35,0.7) .. (2.5*0.35,0.4);
        \end{scope}
    }
    
    \foreach \x/\c in {-6/a,-5/b,-4/a,-3/b,-2/a,-1/b, 1/b,2/a,3/b,4/a,5/b,6/a,7/b,8/a,9/b,10/a,11/b,12/a,13/b,14/a,15/b,16/a,17/b,18/a,19/b, 21/b,22/a}{
      \draw (\x*0.35,0) node[above] {\texttt{\c}};
    }
    \draw (-8.5*0.35,0) node[above] {$\cdots$};
    \draw (-11*0.35,0) node[above] {$Y=$};
    \draw (24.5*0.35,0) node[above] {$\cdots$};
    \foreach \x/\c in {-7/a,0/c,20/d,23/c}{
      \draw (\x*0.35,0) node[above] {\textcolor{red}{\texttt{\c}}};
    }
    \draw[snake=brace] (19*0.35+0.175,0) -- node[below] {$\tau$-run} (1*0.35-0.175,0);
    \foreach \x in {-7*0.35,0,20*0.35,23*0.35}{
        \draw (\x,-0.3) node {\textcolor{blue}{$\III$}};
    }
    \foreach \x in {-6*0.35,-4*0.35,2*0.35,4*0.35}{
        \draw (\x,-0.3) node {\textcolor{green!50!black}{$\II$}};
    }
  \end{scope}
  \end{tikzpicture}
  \caption{A $\tau$-run for $\tau=6$ with period $p=2$ and Lyndon root \texttt{ab}. For $k=1$ and $i$ being the first position of the $\tau$-run, the set $\Misp_{k+1}(Y,i,i+p)$ contains the letters shown in red. Positions that are added to the sets $A_{\II}$ and $A_{\III}$ because of this $\tau$-run are shown.
  Intuitively, if a length-$\ell$ substring $U$ of~$Y$ starts within $p$ next positions of a position in $\LeftMis_{k+1}(Y,i,i+p)$, it has to contain a position from $A_{\II}$, and otherwise, if~$U$ fits within the approximately periodic substring of~$Y$, the substring $U'$ shifted by $p$ positions to the left contains the same (or smaller) set of misperiods.}\label{fig:A_approx}
\end{figure}
  
  Similar to \cref{lem:anchors}, we construct three subsets of positions in $Y=\#S\$T$, where $\#,\$ \not \in \Sigma$.
  For $\tau=\floor{\ell/(6(k+1))}$, let $A_I$ be a $\tau$-synchronising set of $Y$. 
  Let $Y[i \dd j]$ be a $\tau$-run with period $p$ and assume that the first occurrence of its Lyndon root is at a position $q$ of $Y$.
  Then, for $Y[i \dd j]$, for each $x \in \LeftMis_{k+1}(Y,i,i+p)$, we insert to $A_{\II}$ the two smallest positions in $[x+1\dd |Y|]$ that are equivalent to $q \pmod p$.
  Moreover, we insert to $A_{\III}$ the positions in $\Misp_{k+1}(Y,i,i+p)$.
  See \cref{fig:A_approx}.
  Finally, we denote $A=A_{I}\cup A_{\II}\cup A_{\III}$,
  as well as $A^S=\{a-1\,:\,a \in A \cap [2\dd |S|+1]\}$ and 
  $A^T=\{a-|S|-2\,:\,a \in A \cap [|S|+3\dd |Y|]\}$.
  The proof of the following claim resembles that of \cref{lem:anchors}.

  \begin{restatable}{claim}{anchorsTwo}\label{lem:anchors2}
    The sets $A^S$ and $A^T$ satisfy the condition stated in \cref{prp:anchors}.
  \end{restatable}
  \begin{proof}
By the assumption of \cref{prp:anchors}, a $k$-LCS of $S$ and $T$ has length $\ell'\in (\ell/2,\ell]$.
Consequently, there exist $i^S \in [1\dd |S|]$ and $i^T \in [1\dd |T|]$ such that $U=_kV$, for $U=S[i^S \dd i^S+\ell')$ and $V=T[i^T \dd i^T+\ell')$. Let us choose any such pair $(i^S,i^T)$ minimising the sum $i^S+i^T$.
We shall prove that there exists $\delta \in [0\dd \ell')$ such that $i^S+\delta\in A^S$ and $i^T+\delta \in A^T$.
In each of the three main cases, we actually show that $i^S+\delta\in A^S_j$ and $i^T+\delta \in A^T_j$
for $\delta \in [0\dd \ell')$ and for some $j\in \{I,\II,\III\}$, where $A^S_j=\{a-1\,:\,a \in A_j \cap [2\dd |S|+1]\}$ and
  $A^T_j=\{a-|S|-2\,:\,a \in A_j\cap [|S|+3\dd |Y|]\}$ (recall that $Y=\#S\$T$).

Recall that $\tau=\floor{\ell/(6(k+1))}$. By the pigeonhole principle, there exists a shift $s \in [0\dd \ell'-3\tau+2]$ such that 
\[W:=S[i^S+s \dd i^S+s+3\tau-2] = T[i^T+s \dd i^T+s+3\tau-2].\]
First, assume that $\per(W)>\frac13\tau$. By the definition of a $\tau$-synchronising set, in this case there exist some elements $a^S \in A_I^S \cap [i^S+s\dd i^S+s+\tau)$ and $a^T \in A_I^T \cap [i^T+s\dd i^T+s+\tau)$. Let us choose the smallest such elements. By~\cref{lem:synch}, we have $a^S-i^S=a^T-i^T$.

From now on we consider the case that $p=\per(W)\le\frac13\tau$. 
Thus, each of the two distinguished fragments of $S$ and $T$ that match $W$ belongs to a $\tau$-run with period $p$.

In the case where $\Misp_{k+1}(U,1+s,1+s+p) \cap \Misp_{k+1}(V,1+s,1+s+p) \neq \emptyset$, there exist $a^S \in A^S_{\III}$ and $a^T\in A^T_{\III}$ that satisfy the desired condition. This is because, for any string $Z$ and its run $Z[i\dd j]$ with period $p$ and $i'\in [i\dd j-p+1]$, $\Misp_{k+1}(Z,i,i+p)=\Misp_{k+1}(Z,i',i'+p)$.

In order to handle the complementary case, we rely on the following claim. We recall its short proof for completeness.

\begin{claim}[{\cite[Lemma 13]{DBLP:journals/jcss/Charalampopoulos21}}]\label{claim:circ}
  Assume that $U =_k V$ and that $U[i \dd j) = V[i \dd j)$.
  Let
    \[I=\Misp_{k+1}(U,i,j) \text{ and } I'=\Misp_{k+1}(V,i,j).\]
  If $I \cap I' = \emptyset$, then $\MP(U,V)=I \cup I'$, $I=\Misp(U,i,j)$, and $I'=\Misp(V,i,j)$.
\end{claim}
\begin{proof}
Let $J=\Misp(U,i,j)$ and $J'=\Misp(V,i,j)$.
We first observe that $I \cup I' \subseteq \MP(U,V)$ since $I \cap I' = \emptyset$.
Then, $U =_k V$ implies that $|\MP(U,V)| \leq k$ and hence $|I|\leq k$ and $|I'| \leq k$, which in turn implies that $I=J$ and $I'=J'$.
The observation that $\MP(U,V) \subseteq J \cup J'$ concludes the proof.
\end{proof}

Towards a contradiction, let us suppose that there are no $a^S \in A^S_\II$ and $a^T \in A^T_\II$ satisfying the desired condition.
By \cref{claim:circ}, we have \[|\LeftMis_{k+1}(U,1+s,1+s+p)|,|\LeftMis_{k+1}(V,1+s,1+s+p)|\le k.\]
Therefore, at least one of the following holds: \[i^S \in \LeftMis_{k+1}(\#S,1+i^S+s,1+i^S+s+p)\text{ or }i^T \in \LeftMis_{k+1}(\$T,1+i^T+s,1+i^T+s+p);\]
otherwise, $S[i^S-1]=T[i^T-1]$ and $(i^S-1,i^T-1)$ has a smaller sum and could be used instead of $(i^S,i^T)$; a contradiction.
Let us assume that the first of the two above conditions holds; the other case is symmetric.
Then, $[i^S\dd i^S+p)$ contains an element of $A^S_\II$, say $a^S$.
If $A^T_\II \cap [i^T\dd i^T+p) = \emptyset$, then this implies that
\[[i^T-p+1\dd i^T]\cap \LeftMis_{k+1}(\$T,1+i^T+s,1+i^T+s+p) = \emptyset.\]
In particular, we have $i^T>p$.
Now, let us consider $U$ and $V'=T[i^T-p \dd i^T+\ell'-p)$.
By \cref{claim:circ}, we have that $|\Misp(U,1+s,1+s+p)|+|\Misp(V,1+s,1+s+p)|\leq k$.
Further, we have $|\Misp(V',1+s+p,1+s+2p)|\leq |\Misp(V,1+s,1+s+p)|$.
Thus, $|\MP(U,V')|\le |\Misp(U,1+s,1+s+p)|+|\Misp(V',1+s,1+s+p)|\leq k$.
This contradicts our assumption that $i^S+i^T$ was minimum possible.

Consequently, there exists an element $a^T \in A_{\II}^T \cap [i^T\dd i^T+p)$. By modular arithmetics, we have $\delta:=a^S-i^S=a^T-i^T$. Then $\delta$ certainly belongs to $[0\dd \ell')$, which concludes the proof.
\end{proof}

  It remains to show that the sets $A^S$ and $A^T$ can be constructed efficiently.
  A $\tau$-synchronising set can be computed in $\Oh(n)$ time by~\cref{thm:synch_packed} and all the $\tau$-runs, together with the position of the first occurrence of their Lyndon root, can be computed in $\Oh(n)$ time~\cite{DBLP:journals/siamcomp/BannaiIINTT17}.
  After an $\cO(n)$-time preprocessing, for every $\tau$-run, we can compute the set of the $k+1$ misperiods of its period to either side in $\cO(1)$ time; see~\cite[Claim 18]{DBLP:journals/jcss/Charalampopoulos21}.
\end{proof}

The next step in our solutions to long LCS and medium-length LCS was to construct an instance of the \problemtwo.
To adapt this approach, we generalise the notions of $\LCP$ and $\maxPairLCP$ so that they allow for mismatches.
By $\LCP_k(U,V)$, for $k\in \mathbb{Z}_{\ge 0}$, we denote the maximum length $\ell$
such that $U[1\dd \ell]$ and $V[1\dd \ell]$ are at Hamming distance at most $k$.
\begin{definition}
  Given two sets $\U,\V\sub \Sigma^*\times \Sigma^*$ and two integers $k_1,k_2\in \mathbb{Z}_{\ge 0}$,
  we define \[\maxPairLCP_{k_1,k_2}(\U,\V)=\max\{\LCP_{k_1}(U_1,V_1)+\LCP_{k_2}(U_2,V_2) : (U_1,U_2)\in \U,(V_1,V_2)\in \V\}.\]
\end{definition}
\noindent Note that $\maxPairLCP(\U,\V)=\maxPairLCP_{0,0}(\U,\V)$.

By \cref{prp:anchors}, if a $k$-LCS of $S$ and $T$ has length $\ell' \in (\ell/2,\ell]$,
then
\begin{align*}
    \ell' = \max_{k'=0}^{k} \maxPairLCP_{k',k-k'}(\U,\V), \text{ for } & \U=\{((S[a-\ell \dd a))^R,S[a \dd a+\ell)) : a \in A^S\},\\ & \V = \{((T[a-\ell \dd a))^R,T[a \dd a+\ell)) : a \in A^T\}.
\end{align*}
Here, $k'$ bounds the number of mismatches between $S[i^S\dd i^S+\delta)$ and $T[i^T\dd i^T+\delta)$,
whereas $k-k'$ bounds the number of mismatches between $S[i^S+\delta \dd i^S+\ell')$ and  $T[i^T+\delta \dd i^T+\ell')$.
The following theorem, whose full proof is given in \cref{app:lcpk},
allows for efficiently computing the values $\maxPairLCP_{k',k-k'}(\U,\V)$. 
  \begin{restatable}{theorem}{maxpairlcpk}\label{thm:maxpairlcpk}
    Consider two $(\ell,\ell)$-families $\U,\V$ of total size $N$
    consisting of pairs of substrings of a given length-$n$ text over the alphabet $[0\dd n)$.
    For any non-negative integers $k_1,k_2 = \Oh(1)$,
    the value $\maxPairLCP_{k_1,k_2}(\U,\V)$ can be computed:
    \begin{itemize}
      \item in $\Oh(n+N\log^{k_1+k_2+1} N)$ time and $\Oh(n+N)$ space if $\ell > \log^{3/2} N$,
      \item in $\Oh(n + N \ell \log^{k_1+k_2-1/2} N)$ time and $\Oh(n+N\ell/\log N)$ space if $\log N < \ell \le \log^{3/2} N$,
      \item in $\Oh(n + N \ell^{k_1+k_2}\sqrt{\log N})$ time and $\Oh(n+N)$ space if $\ell \le \log N$.
    \end{itemize}
  \end{restatable}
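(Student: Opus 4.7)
The plan is to reduce $\maxPairLCP_{k_1,k_2}(\U,\V)$ to a collection of instances of the (zero-mismatch) \problem, each solvable by \cref{lem:problem} or \cref{lem:main}, and to sum the resulting costs. As preprocessing, I would build in $\Oh(n)$ time an $\Oh(1)$-time $\LCE$ structure on the input text and use $\LCE$-assisted comparisons to sort the first and second components of $\U\cup\V$, so that their compacted tries can be constructed via \cref{lem:ctrie}. The remainder of the argument splits according to the three ranges of $\ell$.

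For $\ell > \log N$, I would adapt the $k$-errata-tree recursion of Cole et al.~\cite{DBLP:conf/stoc/ColeGL04} (as used in the $k$-LCS context by Thankachan et al.~\cite{DBLP:journals/jcb/ThankachanAA16}) to convert the $(k_1,k_2)$-problem into the $(0,0)$-problem via $k_1+k_2$ one-mismatch reduction steps. The key combinatorial fact is that $\LCP_k(P,Q)\ge p$ if and only if there exists a set $B\sub [1,p]$ of size at most $k$ such that $P$ and $Q$ agree outside $B$ on their first $p$ positions; equivalently, after blanking positions in $B$ in both strings, their $\LCP_0$ is at least $p$. A heavy-path decomposition of the compacted trie of the current family of first components restricts the ``canonical'' blanking sets to positions aligned with the $\Oh(\log N)$ heavy/light boundaries along each string's root-to-leaf path, so each pair generates only $\Oh(\log N)$ modified copies. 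This yields $\Oh(\log N)$ instances of the $(k_1{-}1,k_2)$-problem of total size $\Oh(N\log N)$; iterating $k_1+k_2$ times (alternating between first and second components) produces $\Oh(\log^{k_1+k_2} N)$ zero-mismatch instances of total size $\Oh(N\log^{k_1+k_2} N)$. Processing these sequentially with \cref{lem:problem} gives the first bullet, while using \cref{lem:main} instead gives the second (note that the bound for \cref{lem:main} simplifies to $\Oh(N\ell\sqrt{\log N}/\log N)$ per instance when $\log N<\ell\le\log^{3/2} N$). Blanking symbols are eliminated before invoking each zero-mismatch subroutine by substituting them with arbitrary concrete characters, since those positions are already known to contribute to a mismatch.

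For $\ell\le\log N$, the $\log^{k_1+k_2} N$ overhead of the heavy-path recursion offers no advantage over direct enumeration, so I would simply enumerate all $\Oh(\ell^{k_1+k_2})$ size-at-most-$(k_1+k_2)$ blanking patterns (split between first and second components). Each such pattern yields a zero-mismatch instance of size $\Oh(N)$ with $\alpha=\ell\le\log N$, which \cref{lem:main} handles in $\Oh(N\sqrt{\log N})$ time and $\Oh(N)$ space. Summing over all patterns gives the third bullet.

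The main obstacle is the technical correctness and space management of the two-sided $k$-errata-tree reduction underlying the first two bullets. I expect the hardest points to be: (i) verifying that the heavy-path enumeration on both sides jointly produces, for every pair $((P_1,P_2),(Q_1,Q_2))$ realising the optimum, at least one zero-mismatch instance in which the blanked strings match up to the full length $\LCP_{k_1}(P_1,Q_1)+\LCP_{k_2}(P_2,Q_2)$; (ii) generating the $\Oh(\log^{k_1+k_2} N)$ instances in a streaming, depth-first fashion---building, solving, and discarding each instance's compacted tries and sorted orderings in turn---so as to stay within $\Oh(n+N)$ or $\Oh(n+N\ell/\log N)$ working space; and (iii) controlling the aggregate size across recursion levels so that the cumulative total remains $\Oh(N\log^{k_1+k_2} N)$ rather than suffering an extra multiplicative blow-up at each step.
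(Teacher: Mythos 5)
Your overall strategy --- reduce to zero-mismatch instances of the \problem via an errata-tree-style construction and then apply \cref{lem:problem} or \cref{lem:main} --- matches the paper's, and your direct enumeration of all $\Oh(\ell^{k_1+k_2})$ \emph{global} modification patterns for $\ell\le\log N$ is a valid and arguably simpler alternative for that regime: since the blanked positions are the same in every string of an instance, at most $k_i$ mismatches can be hidden on coordinate $i$, so soundness is automatic there. However, for the first two bullets your reduction has a genuine soundness gap. In the heavy-path recursion, different strings acquire their modifications at \emph{different} positions (wherever they diverge from the relevant heavy paths), so within a single zero-mismatch instance a modified $P_i$ may carry a set $\Delta_i$ of up to $k_i$ modified positions while a modified $Q_i$ carries a \emph{disjoint} set $\nabla_i$ of up to $k_i$ positions; their exact $\LCP$ then only certifies $\LCP_{2k_i}(P_i,Q_i)$, not $\LCP_{k_i}(P_i,Q_i)$, and the maximum over your instances can strictly exceed $\maxPairLCP_{k_1,k_2}(\U,\V)$. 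Your checklist item (i) addresses only the completeness direction, not this overcounting. The paper closes the gap by further partitioning each generated set according to the shared modification subsets $\delta_i=\Delta_i\cap\nabla_i$ and a budget split $d_i\in[0,k_i]$, keeping on one side only strings with $|\Delta_i|\le d_i$ and on the other only strings with $|\nabla_i|\le k_i+|\delta_i|-d_i$, which forces $|\Delta_i\cup\nabla_i|\le k_i$; this costs only a constant factor because each modified pair lands in $2^{|\Delta_1|+|\Delta_2|}\cdot\Oh(k_1k_2)=\Oh(1)$ of the resulting instances. You need this (or an equivalent device) for the first two bullets.

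A second, related flaw is that replacing blanks by \emph{arbitrary} concrete characters is not sound: if only one of two strings is blanked at a position, or both are blanked but receive different arbitrary characters, the trie of the instance no longer reflects the intended wildcard agreement. The standard fix, used by Cole et al.\ and formalised in the paper as a $(U,V)_k$-maxpair, is to substitute at each divergence position the corresponding character of the heavy path, which guarantees that the two canonical modified strings literally agree up to $\LCP_k(U,V)$ while any two modified strings with $|\Delta\cup\nabla|\le k$ never agree further than $\LCP_k$. Finally, you correctly flag the streaming issue needed to keep the space at $\Oh(n+N)$ while the total size of all instances is $\Theta(N\log^{k_1+k_2}N)$; be aware that the paper spends most of its technical effort exactly there (generating the two complete families in interleaved batches of size $\Theta(n+N)$ and radix-sorting tuples within each batch), so this is not a detail that can be waved away.
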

  Here we just give an outline of the proof.
    We reduce the computation of $\maxPairLCP_{k_1,k_2}(\U,\V)$ into multiple computations of $\maxPairLCP(\U',\V')$ across a family $\Pam$ of pairs $(\U',\V')$ with $\U',\V'\sub \Sigma^{*}\times \Sigma^{*}$.
    Each pair $(U'_1,U'_2)\in \U'$ is associated to a pair $(U_1,U_2)\in \U$, with the string~$U'_i$ represented as a pointer to the \emph{source} $U_i$ and up to $k_i$ substitutions needed to transform $U_i$ to~$U'_i$.
    Similarly, each pair $(V'_1,V'_2)\in \V'$ consists of \emph{modified strings} with sources $(V_1,V_2)\in \V$.
    In order to guarantee $\maxPairLCP_{k_1,k_2}(\U,\V) = \max_{(\U',\V')\in \Pam} \maxPairLCP(\U',\V')$,
    we require $\LCP(U'_i,V'_i)\le \LCP_{k_i}(U_i,V_i)$ for every $(U'_1,U'_2)\in \U'$ and $(V'_1,V'_2)\in \V'$ with $(\U',\V')\in \Pam$ and that, for every $(U_1,U_2)\in \U$ and $(V_1,V_2)\in \V$, there exists $(\U',\V')\in \Pam$ with $(U'_1,U'_2)\in \U'$ and $(V'_1,V'_2)\in \V'$, with sources $(U_1,U_2)$ and $(V_1,V_2)$, respectively, such that $\LCP(U'_i,V'_i)=\LCP_{k_i}(U_i,V_i)$.
    Our construction is based on a technique of~\cite{DBLP:journals/jcb/ThankachanAA16} which gives an analogous family for two subsets of $\Sigma^*$ (rather than $\Sigma^*\times \Sigma^*$) and a single threshold.
    We apply the approach of~\cite{DBLP:journals/jcb/ThankachanAA16} to $\U_i = \{U_i : (U_1,U_2)\in \U\}$ and $\V_i = \{V_i : (V_1,V_2)\in \V\}$ with threshold $k_i$, and then combine the two resulting families to derive~$\Pam$. 
    
    Strengthening the arguments of~\cite{DBLP:journals/jcb/ThankachanAA16}, we show that each string $F_i\in \U_i\cup \V_i$ is the source of $\Oh(1)$ modified strings $F'_i\in \U'_i\cup \V'_i$ for any single $(\U'_i,\V'_i)\in \Pam_i$
    and $\Oh(\min(\ell,\log N)^{k_i})$ modified strings across all $(\U'_i,\V'_i)\in \Pam_i$.
    This allows bounding the size of individual sets $(\U',\V')\in \Pam$ by $\Oh(N)$ 
    and the overall size by $\Oh(N\min(\ell,\log N)^{k_1+k_2})$. 
    In order to efficiently build the compacted tries required at the input of the \problemtwo, the modified strings  $F'_i\in \U'_i\cup \V'_i$ are sorted lexicographically, and the two derived linear orders (for $i\in \{1,2\}$) are maintained along with every pair $(\U',\V')\in \Pam$.
    Overall, the family $\Pam$ is constructed in $\Oh(n+N\min(\ell,\log N)^{k_1+k_2})$ time and $\Oh(n+N)$ space.

    We solve the resulting instances of the \problemtwo using \cref{lem:problem} if $\ell > \log^{3/2} N$ or \cref{lem:main} otherwise; note that $\U',\V'$ are $(\ell,\ell)$-families.

\medskip
Recall that the algorithm of \cref{thm:maxpairlcpk} is called $k+1=\Oh(1)$ times, always with $N=|A^S|+|A^T|=\cO(n/\ell)$. Overall, the value $ \max_{k'=0}^{k} \maxPairLCP_{k',k-k'}(\U,\V)$ is therefore computed in $\Oh(n\log^{k-1/2}n)$ time and $\Oh(n)$ space in each of the following cases:
    \begin{itemize}
      \item in $\Oh(n+\frac{n}{\ell}\log^{k+1}N)=\Oh(n\log^{k-1/2}n)$ time and $\Oh(n+\frac{n}{\ell})=\cO(n)$ space if $\ell > \log^{3/2} N$;
      \item in $\Oh(n + \frac{n}{\ell} \ell \log^{k-1/2}N)=\Oh(n\log^{k-1/2}n)$ time and $\Oh(n+\frac{n}{\ell} \ell/\log N)=\Oh(n)$ space if $\log N < \ell \le \log^{3/2} N$;
      \item in $\Oh(n + \frac{n}{\ell}\ell^{k}\sqrt{\log N})=\Oh(n\log^{k-1/2}n)$ time and $\Oh(n+\frac{n}{\ell})=\Oh(n)$ space if $\ell \le \log N$.
    \end{itemize}
Accounting for $\Oh(n)$ time and space to determine the length $d$ of an LCS between $S$ and $T$, and the $\Oh(\log k)$ values $\ell$ that need to be tested so that the intervals $(\lceil\ell/2\rceil\dd \ell]$ cover $[d\dd (k+1)d+k]$, this implies \cref{thm:klcs}.

We also obtain the following corollary that improves~\cite{DBLP:conf/cpm/Charalampopoulos18} for $k=\Oh(1)$. (We replace $\sqrt{\ell}$ by $\ell$.)

\begin{corollary}\label{cor:CPMbetter}
Given two strings $S$ and $T$ of total length $n$ and a constant integer $k>0$, the $k$-LCS problem can be solved in $\cO(n+\frac{n}{\ell} \log^{k+1} n)$ time using $\cO(n)$ space, where $\ell$ is the length of the $k$-LCS.
\end{corollary}
\begin{proof}
The three cases yield the following time complexities:
in $\Oh(n+\frac{n}{\ell}\log^{k+1}n)$ time if $\ell > \log^{3/2} N$, in $\Oh(n\log^{k-1/2}n) = \Oh(\frac{n}{\ell}\log^{k+1}n)$ time if $\log N < \ell \le \log^{3/2} N$, and in $\Oh(n\log^{k-1/2}n)=\Oh(\frac{n}{\ell}\log^{k+1}n)$ time if $\ell \le \log N$.
This gives an $\Oh(n+\frac{n}{\ell}\log^{k+1}n)$-time solution for any $\ell$. The space complexity is still $\Oh(n)$.
\end{proof}

\section{Computing \texorpdfstring{$\maxPairLCP_{k_1,k_2}(\U,\V)$}{maxPairLCP(k1,k2)(U,V)}---Proof of \texorpdfstring{Theorem~\ref{thm:maxpairlcpk}}{Theorem 6.2}}\label{app:lcpk}
For two strings $U,U'\in \Sigma^m$, let us define the set of \emph{mismatch positions} between $U$ and $U'$, ${\MP(U,U') = \{i\in [1\dd  n] :  U[i]\ne U'[i]\}}$, and the \emph{mismatch information}
$\MI(U,U') = \{(i,U'[i]) : i\in \MP(U,U')\}$.
Observe that $U$ and $\MI(U,U')$ uniquely determine $U'$.
This motivates the following definition.

\begin{definition}
Given $U\in \Sigma^m$ and $\Delta \sub [1\dd  m]\times \Sigma$, we denote by $U^\Delta$
the unique string $U'$ such that $\MI(U,U')=\Delta$. If there is no such string $U'$, then $U^\Delta$ is undefined.
We say that $U^\Delta$, represented using a pointer to $U$ and the set $\Delta$, is a \emph{modified string} with \emph{source} $U$.
\end{definition}

\begin{example}
Let $U=\mathtt{ababbab}$ and $\Delta=\{(2,\mathtt{a}),(3,\mathtt{b})\}$.
Then $U^{\Delta}=U'=\mathtt{aabbbab}$.
\end{example}

\begin{definition}[see~\cite{DBLP:journals/jcb/ThankachanAA16}]\label{def:pair}
Given strings $U,V\in \Sigma^*$ and an integer $k\in \mathbb{Z}_{\ge 0}$,
we say that two modified strings $(U^\Delta, V^\nabla)$ form a $(U,V)_k$-\emph{maxpair} if the following holds for every $i$:
\begin{itemize}
  \item if $i\in [1\dd  \LCP_k(U,V)]$ and $U[i]\ne V[i]$, then $U^\Delta[i]=V^\nabla[i]$
  \item otherwise, $U^\Delta[i]=U[i]$ (assuming $i\in [1\dd  |U|]$) and $V^\nabla[i]=V[i]$ (assuming $i\in [1\dd  |V|]$).
\end{itemize}
\end{definition}

\begin{example}\label{ex:maxpairs}
Let $U=\mathtt{ababbabb}$, $V=\mathtt{aacbaaab}$ and $k=3$.
Further let $\Delta=\{(2,\mathtt{a}),(3,\mathtt{b})\}$
and $\nabla=\{(3,\mathtt{b}),(5,\mathtt{b}))\}$.
We have $U^\Delta=\mathtt{aabbbabb}$ and $U^\nabla=\mathtt{aabbbaab}$.
Then $\LCP_k(U,V)=6$ and $(U^\Delta, V^\nabla)=(\mathtt{\textcolor{red}{\underline{aabbba}}ab},\mathtt{\textcolor{red}{\underline{aabbba}}bb})$ form a $(U,V)_3$-maxpair.
Additionally,  for $\Delta'=\{(3,\mathtt{c})\}$
and $\nabla'=\{(2,\mathtt{b}),(5,\mathtt{b}))\}$, $(U^{\Delta'}, V^{\nabla'})=(\mathtt{\textcolor{red}{\underline{abcbba}}ab},\mathtt{\textcolor{red}{\underline{abcbba}}bb})$ form a $(U,V)_3$-maxpair.
\end{example}

The following simple fact characterises this notion.

\begin{fact}\label{fct:pair}
Let $U^\Delta,V^\nabla$ be modified strings with sources $U,V\in \Sigma^*$ and let $k\in \mathbb{Z}_{\ge 0}$.
\begin{enumerate}[(a)]
  \item\label{it:pair:a} If $(U^\Delta,V^\nabla)$ is a $(U,V)_k$-maxpair, then $|\Delta \cup \nabla|\le k$ and $\LCP(U^\Delta,V^\nabla) \ge \LCP_k(U,V)$.
  \item\label{it:pair:b} If $|\Delta \cup \nabla|\le k$, then $\LCP_k(U,V)\ge \LCP(U^\Delta,V^\nabla)$.
\end{enumerate}
\end{fact}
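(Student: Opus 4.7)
Both parts reduce to a position-by-position analysis. For part~(\ref{it:pair:a}), the approach is to argue that the definition of a maxpair leaves no room for modifications outside the ``mismatch zone'' $\MP(U[1\dd\ell],V[1\dd\ell])$, where $\ell=\LCP_k(U,V)$, a set of size at most $k$ by the defining property of $\LCP_k$. For part~(\ref{it:pair:b}), the approach is dual: each position of $U$--$V$ disagreement lying inside the common prefix of $U^\Delta$ and $V^\nabla$ must consume at least one entry of $\Delta\cup\nabla$.

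For part~(\ref{it:pair:a}) I would set $\ell=\LCP_k(U,V)$ and observe that Definition~\ref{def:pair} directly forces $U^\Delta[i]=U[i]$ and $V^\nabla[i]=V[i]$ whenever $i>\ell$ or $U[i]=V[i]$. Consequently, every pair in $\Delta\cup\nabla$ has its first coordinate in $\MP(U[1\dd\ell],V[1\dd\ell])$, whose cardinality is at most $k$ by definition of $\LCP_k$. It remains to verify that the union contributes at most one pair per such position: the ``if'' clause of Definition~\ref{def:pair} imposes the equality $U^\Delta[i]=V^\nabla[i]$, so any pair $(i,\cdot)\in\Delta$ and any $(i,\cdot)\in\nabla$ carry the same letter and are identified in $\Delta\cup\nabla$. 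This yields $|\Delta\cup\nabla|\le k$. The $\LCP$ inequality is then immediate: on every $i\in[1,\ell]$, the two clauses of Definition~\ref{def:pair} jointly yield $U^\Delta[i]=V^\nabla[i]$, so $\LCP(U^\Delta,V^\nabla)\ge\ell$.

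For part~(\ref{it:pair:b}), I would set $m=\LCP(U^\Delta,V^\nabla)$ and count mismatches of $U$ and $V$ within $[1,m]$. For each $i\in[1,m]$ with $U[i]\ne V[i]$, the equality $U^\Delta[i]=V^\nabla[i]$ forces at least one of $U^\Delta[i]\ne U[i]$ or $V^\nabla[i]\ne V[i]$, which contributes some pair $(i,\cdot)$ to $\Delta\cup\nabla$. Distinct positions give distinct first coordinates, so $|\MP(U[1\dd m],V[1\dd m])|\le|\Delta\cup\nabla|\le k$, and hence $\LCP_k(U,V)\ge m$ by definition of $\LCP_k$.

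I foresee no substantial obstacle; the only point requiring a moment's care is the ``at most one pair per contributing position'' step in part~(\ref{it:pair:a}), which rests on the maxpair definition forcing a single shared letter $U^\Delta[i]=V^\nabla[i]$ at every modified index.
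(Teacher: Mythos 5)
Your proof is correct and follows essentially the same route as the paper's: for (a) you confine $\Delta\cup\nabla$ to the mismatch positions within the first $\LCP_k(U,V)$ characters, note that a shared position forces a shared letter so the union has at most one pair per position, and read off the $\LCP$ bound from the forced agreement; for (b) you charge each mismatch inside the common prefix of $U^\Delta$ and $V^\nabla$ to an element of $\Delta\cup\nabla$. No gaps.
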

\begin{proof}
\eqref{it:pair:a} Let $d=\LCP_k(U,V)$ and $M=\MP(U[1\dd d],V[1\dd d])$.
By~\cref{def:pair}, we have $U^\Delta[i] = V^\nabla[i]$ for $i \in [1\dd  d]$. Consequently, $\LCP(U^\Delta,V^\nabla)\ge d$ and, if $(i,a)\in \Delta$ and $(i,b)\in \nabla$ holds for some $i\in [1\dd  d]$, then $a=b$.
Furthermore,~\cref{def:pair} yields $\Delta,\nabla \sub M\times \Sigma$, and hence $|\Delta \cup \nabla|\le |M| \le k$.

\noindent\eqref{it:pair:b}
Let $d'=\LCP(U^\Delta,V^\nabla)$ and $M' = \MP(U[1\dd d'], V[1\dd d'])$.
For every $i\in M'$, we have $U[i]\ne V[i]$ yet $U^\Delta[i]=V^\nabla[i]$.
This implies $U[i]\ne U^\Delta[i]$ or $V[i]\ne V^\nabla[i]$,
i.e., $(i,U^\Delta[i])=(i,V^\nabla[i])\in \Delta\cup \nabla$.
Consequently, $|M'|\le |\Delta\cup \nabla|\le k$, which means that $\LCP_k(U,V)\ge d'$ holds as claimed.
\end{proof}

In particular, if $(U^\Delta,V^\nabla)$ is a $(U,V)_k$-maxpair, then by \cref{fct:pair}, $\LCP(U^\Delta,V^\nabla) = \LCP_k(U,V)$.

\begin{definition}\label{def:complete}
Consider a set of strings $\F\sub \Sigma^*$ and an integer $k\in \mathbb{Z}_{\ge 0}$.
A \emph{$k$-complete family} for $\F$ is a family $\Fam$ of sets of modified strings of the form $F^\Delta$ for $F\in \F$ and $|\Delta|\le k$ such that, for every $U,V\in \F$, there exists a set $\F'\in \Fam$
and modified strings $U^\Delta,V^\nabla\in \F'$ forming a $(U,V)_k$-maxpair.
\end{definition}

\begin{example}\label{ex:F1}
Let us consider the following string family:
\[\F_1=\{A_1=\mathtt{aaaab},\,
         A_2=\mathtt{abbac},\,
         A_3=\mathtt{acbab},\,
         A_4=\mathtt{bcaa},\,
         A_5=\mathtt{bcbac}
      \}.\]
Family $\Fam_1=\{\F'_{1,1},\F'_{1,2}\}$ is a 1-complete family for $\F_1$, where:

\vspace*{-0.4cm}
\twocol{
\begin{align*}
\F'_{1,1}=\{\quad\quad
         A_1&=\mathtt{aaaab},\\
         A_3^{\{(2,\mathtt{a})\}}&=\mathtt{a\textcolor{red}{a}bab},\\
         A_2^{\{(2,\mathtt{a})\}}&=\mathtt{a\textcolor{red}{a}bac},\\
         A_2&=\mathtt{abbac},\\
         A_4^{\{(1,\mathtt{a})\}}&=\mathtt{\textcolor{red}{a}caa},\\
         A_3&=\mathtt{acbab},\\
         A_5^{\{(1,\mathtt{a})\}}&=\mathtt{\textcolor{red}{a}cbac},\\
         A_4&=\mathtt{bcaa},\\
         A_5&=\mathtt{bcbac}
\,\}.
\end{align*}
}{
\begin{align*}
\F'_{1,2}=\{\quad\quad
    A_4&=\mathtt{bcaa},\\
    A_5^{\{(3,\mathtt{a})\}}&=\mathtt{bc\textcolor{red}{a}ac},\\
    A_5&=\mathtt{bcbac}
\,\}.
\end{align*}
}
For example, both strings in a $(A_2,A_3)_1$-maxpair $(A_2^{\{(2,\mathtt{a})\}}=\mathtt{a\textcolor{red}{a}bac},A_3^{\{(2,\mathtt{a})\}}=\mathtt{a\textcolor{red}{a}bab})$ belong to~$\F'_{1,1}$, and both strings in a $(A_4,A_5)_1$-maxpair $(A_4=\mathtt{bcaa},A_5^{\{(3,\mathtt{a})\}}=\mathtt{bc\textcolor{red}{a}ac})$ belong to $\F'_{1,2}$.

\cref{fig:HLD} in \cref{app:complete} shows how the family $\Fam_1$ was created.
\end{example}

\newcommand{\Y}{\mathcal{Y}}
\begin{example}\label{ex:thanks}
Let $\X$ and $\Y$ be the sets of all suffixes of strings $S$ and $T$, respectively, and $\Fam$ be a $k$-complete family for $\X \cup \Y$. Then
\[k\text{-LCS}(S,T) = \max_{\F'\in\Fam}\{\LCP(X^\Delta,Y^\nabla)\,:\,X \in \X,\, Y \in \Y,\,|\Delta\cup\nabla|\le k,\,X^\Delta,Y^\nabla \in \F'\}.\]
\end{example}

Our construction of a $k$-complete family follows the approach of Thankachan et al.~\cite{DBLP:journals/jcb/ThankachanAA16} (which, in turn, is based on the ideas behind $k$-errata trees of Cole et al.~\cite{DBLP:conf/stoc/ColeGL04})
with minor modifications.
For completeness, we provide a full proof of the following proposition in \cref{app:complete}.

\begin{restatable}{proposition}{prpThankachan}\label{prp:complete}
Let $\F\sub \Sigma^{\le \ell}$ and $k\in \mathbb{Z}_{\ge 0}$ with $k=\Oh(1)$.
There exists a $k$-complete family $\Fam$ for~$\F$ such that, for each $F\in \F$:
\begin{itemize}
  \item Every individual set $\F'\in \Fam$ contains $\Oh(1)$ modified strings with source $F$.
  \item In total, the sets $\F'\in \Fam$ contain $\Oh(\min(\ell,\log|\F|)^k)$ modified strings with source $F$.
\end{itemize}
Moreover, if $\F$ consists of substrings of a given length-$n$ text over the alphabet $[0\dd n)$,
then the family~$\Fam$ can be constructed in $\Oh(n+|\F|)$ space and $\Oh(n + |\F|\min(\ell,\log|\F|)^k)$ time
with sets $\F'\in \Fam$ generated one by one and modified strings within each set $\F'\in \Fam$ sorted lexicographically.    
\end{restatable}

Intuitively, in the approach of Thankachan et al.~\cite{DBLP:journals/jcb/ThankachanAA16}, a $k$-LCS was computed as the maximum $\LCP_k$ of any two suffixes originating from different strings $S$, $T$. 
Hence, using the $k$-complete family shown in \cref{ex:thanks} was sufficient. However, in our approach, a $k$-LCS is anchored at some pair of synchronised positions. This motivates the following generalised notion aimed to account for the parts of a $k$-LCS on both sides of the anchor.

\begin{definition}\label{def:bicomplete}
Consider a set $\G\sub \Sigma^* \times \Sigma^*$ of string pairs and integers $k_1,k_2\in \mathbb{Z}_{\ge 0}$.
A \emph{$(k_1,k_2)$-bicomplete family} for $\G$ is a family $\Gam$ of sets $\G'$ of modified string pairs
of the form $(F_1^{\Delta_1},F_2^{\Delta_2})$ for $(F_1,F_2)\in \G$, $|\Delta_1|\le k_1$, and $|\Delta_2|\le k_2$, such that, for every $(U_1,U_2),(V_1,V_2)\in \G$,
there exists a set $\G'\in \Gam$ with $(U_1^{\Delta_1},U^{\Delta_2}_2),(V^{\nabla_1}_1,V^{\nabla_2}_2)\in \G'$
such that $(U_1^{\Delta_1},V^{\nabla_1}_1)$ is a $(U_1,V_1)_{k_1}$-maxpair and $(U^{\Delta_2}_2,V^{\nabla_2}_2)$ is a $(U_2,V_2)_{k_2}$-maxpair.
\end{definition}

A concrete example of a bicomplete family (\cref{ex:G}) is presented after an efficient construction of such a family (\cref{lem:bicomplete}).

\begin{example}\label{ex:mpLCP}
Let $\U$ and $\V$ be the sets of string pairs as considered in \cref{thm:maxpairlcpk} and $\Gam$ be a $(k_1,k_2)$-bicomplete family for $\U \cup \V$. Then
\begin{multline*}\maxPairLCP_{k_1,k_2}(\U,\V) = \max_{\G'\in\Gam}\{\LCP(U_1^{\Delta_1},V_1^{\nabla_1})+\LCP(U_2^{\Delta_2},V_2^{\nabla_2}):\\
(U_1,U_2) {\in} \U, (V_1,V_2) {\in} \V,|\Delta_1\cup\nabla_1|\le k_1,|\Delta_2\cup\nabla_2|\le k_2,\,(U_1^{\Delta_1},U_2^{\Delta_2}),(V_1^{\nabla_1},V_2^{\nabla_2}){\in} \G'\}.
\end{multline*}
\end{example}

\noindent
Complete families can be used to efficiently construct a bicomplete family as shown in the following lemma.

\begin{lemma}\label{lem:bicomplete}
  Let $\G$ be an $(\ell,\ell)$-family and $k_1,k_2\in \mathbb{Z}_{\ge 0}$ with $k_1,k_2=\Oh(1)$.
  There exists a $(k_1,k_2)$-bicomplete family $\Gam$ for $\G$ such that, for each $(F_1,F_2)\in \G$:
  \begin{itemize}
    \item Every individual set $\G'\in \Gam$ contains $\Oh(1)$ pairs of the form $(F_1^{\Delta_1},F_2^{\Delta_2})$.
    \item In total, the sets $\G'\in \Gam$ contain $\Oh(\min(\ell,\log |\G|)^{k_1+k_2})$ pairs of the form $(F_1^{\Delta_1},F_2^{\Delta_2})$.
  \end{itemize}
\end{lemma}
\begin{proof}
  Let $\F_1 = \{F_1 : (F_1,F_2)\in \G\}$ and $\F_2 = \{F_2 : (F_1,F_2)\in \G\}$.
  Moreover, for $i\in \{1,2\}$, let $\Fam_i$ be the $k_i$-complete family for $\F_i$ obtained using~\cref{prp:complete}. The family $\Gam$ is defined as follows:
  \[\Gam = \{\{(F^{\Delta_1}_1,F^{\Delta_2}_2)\in \F'_1\times \F'_2 : (F_1,F_2)\in \G\} : (\F'_1,\F'_2)\in \Fam_1\times \Fam_2\}\sm \{\emptyset\}.\]

  Clearly, each set $\G'\in \Gam$ consists of modified string pairs of the form $(F^{\Delta_1}_1,F^{\Delta_2}_2)$ with $(F_1,F_2)\in \G$, $|\Delta_1|\le k_1$, and $|\Delta_2|\le k_2$.
  Let us fix pairs $(U_1,U_2),(V_1,V_2)\in \G$.
  By~\cref{def:complete}, for $i\in \{1,2\}$, there exists a set $\F'_i \in \Fam_i$ and modified strings $U_i^{\Delta_i},V_i^{\nabla_i}\in \F'_i$ that form a $(U_i,V_i)_{k_i}$-maxpair. 
  The family $\Gam$ contains a set $\G' = \{(F^{\Delta_1}_1,F^{\Delta_2}_2)\in \F'_1\times \F'_2 : (F_1,F_2)\in \G\}$ and this set $\G'$ contains both $(U_1^{\Delta_1},U_2^{\Delta_2})$ and $(V^{\nabla_1}_1,V^{\nabla_2}_2)$.
  Thus, $\Gam$ is a $(k_1,k_2)$-bicomplete family for $\G$.

  Now, let us fix a pair $(F_1,F_2)\in \G$ in order to bound the number of modified string pairs of the form $(F^{\Delta_1}_1,F^{\Delta_2}_2)$ contained in the sets $\G'\in \Gam$.
  If $(F^{\Delta_1}_1,F^{\Delta_2}_2)\in \G'$, where $\G'$ is constructed for $(\F'_1,\F'_2)\in \Fam_1\times \Fam_2$, then $F^{\Delta_i}_i\in \F'_i$. 
  By~\cref{prp:complete}, for $i\in \{1,2\}$, the set $\F'_i\in \Fam_i$ contains $\Oh(1)$ modified strings with the source $F_i$. Consequently, the set $\G'\in \Gam$ contains $\Oh(1)$ modified string pairs of the form $(F^{\Delta_1}_1,F^{\Delta_2}_2)$.
  Moreover,~\cref{prp:complete} implies that, for $i\in \{1,2\}$, the sets $\F'_i\in \Fam_i$ in total contain $\Oh(\min(\ell,\log |\F_i|)^{k_i})=\Oh(\min(\ell,\log |\G|)^{k_i})$ modified strings of the form~$F^{\Delta_i}_i$.
  Thus, the sets $\G'\in \Gam$ in total contain $\Oh(\min(\ell,\log |\G|)^{k_1+k_2})$ modified string pairs of the form $(F^{\Delta_1}_1,F^{\Delta_2}_2)$.
\end{proof}

\begin{example}\label{ex:G}
Let us consider the following string family
\[\F_2=\{B_3=\mathtt{aacba},\,
         B_2=\mathtt{abcba},\,
         B_1=\mathtt{babab},\,
         B_4=\mathtt{babc},\,
         B_5=\mathtt{bbcba}
      \}.\]
Family $\Fam_2=\{\F'_{2,1},\F'_{2,2}\}$ is a 1-complete family for $\F_2$, where:

\vspace*{-0.4cm}
\twocol{
\begin{align*}
\F'_{2,1}=\{\quad\quad
         B_3&=\mathtt{aacba},\\
         B_2&=\mathtt{abcba},\\
         B_4^{\{(4,\mathtt{a})\}}&=\mathtt{bab\textcolor{red}{a}},\\
         B_1&=\mathtt{babab},\\
         B_4&=\mathtt{babc},\\
         B_3^{\{(1,\mathtt{b})\}}&=\mathtt{\textcolor{red}{b}acba},\\
         B_5^{\{(2,\mathtt{a})\}}&=\mathtt{b\textcolor{red}{a}cba},\\
         B_5&=\mathtt{bbcba},\\
         B_2^{\{(1,\mathtt{b})\}}&=\mathtt{\textcolor{red}{b}bcba}
\,\}.
\end{align*}
}{
\begin{align*}
\F'_{2,2}=\{\quad\quad
    B_3&=\mathtt{aacba},\\
    B_2&=\mathtt{abcba},\\
    B_3^{\{(2,\mathtt{b})\}}&=\mathtt{a\textcolor{red}{b}cba}
\,\}.
\end{align*}
}

Now let us consider the following $(5,5)$-family, where $\F_1=\{A_1,A_2,A_3,A_4,A_5\}$ was considered in \cref{ex:F1}:
\begin{align*}
\G=\{\,&
(A_1=\mathtt{aaaab},B_1=\mathtt{babab}),\,
(A_2=\mathtt{abbac},B_2=\mathtt{abcba}),\,
(A_3=\mathtt{acbab},B_3=\mathtt{aacba}),\\
&(A_4=\mathtt{bcab},B_4=\mathtt{babc}),\,
(A_5=\mathtt{bcbac},B_5=\mathtt{bbcba})
\,\}.
\end{align*}
Family $\Gam=\{\G'_{1,1},\G'_{1,2},\G'_{2,1}\}$ constructed as in \cref{lem:bicomplete} using $\Fam_1$ and $\Fam_2$ is a $(1,1)$-bicomplete family for $\G$, where:
\begin{align*}
\G'_{1,1}=\,&\{(A_1=\mathtt{aaaab},B_1=\mathtt{babab})\}\ \cup\\
&\{A_2=\mathtt{abbac},A_2^{\{(2,\mathtt{a})\}}=\mathtt{a\textcolor{red}{a}bac}\} \times \{B_2=\mathtt{abcba},B_2^{\{(1,\mathtt{b})\}}=\mathtt{\textcolor{red}{b}bcba}\}\ \cup\\
&\{A_3=\mathtt{acbab},A_3^{\{(2,\mathtt{a})\}}=\mathtt{a\textcolor{red}{a}bab}\} \times \{B_3=\mathtt{aacba},B_3^{\{(1,\mathtt{b})\}}=\mathtt{\textcolor{red}{b}acba}\}\ \cup\\
&\{A_4=\mathtt{bcaa},A_4^{\{(1,\mathtt{a})\}}=\mathtt{\textcolor{red}{a}caa}\} \times \{B_4=\mathtt{babc},B_4^{\{(4,\mathtt{a})\}}=\mathtt{bab\textcolor{red}{a}}\}\ \cup \\
&\{A_5=\mathtt{bcbac},A_5^{\{(1,\mathtt{a})\}}=\mathtt{\textcolor{red}{a}cbac}\} \times \{B_5=\mathtt{bbcba},B_5^{\{(2,\mathtt{a})\}}=\mathtt{b\textcolor{red}{a}cba}\},\\
\G'_{1,2}=\,&\{A_2=\mathtt{abbac},A_2^{\{(2,\mathtt{a})\}}=\mathtt{a\textcolor{red}{a}bac}\} \times \{B_2=\mathtt{abcba}\}\ \cup\\
&\{A_3=\mathtt{acbab},A_3^{\{(2,\mathtt{a})\}}=\mathtt{a\textcolor{red}{a}bab}\} \times \{B_3=\mathtt{aacba},B_3^{\{(2,\mathtt{b})\}}=\mathtt{a\textcolor{red}{b}cba}\},\\
\G'_{2,1}=\,&\{A_4=\mathtt{bcaa}\} \times \{B_4=\mathtt{babc},B_4^{\{(4,\mathtt{a})\}}=\mathtt{bab\textcolor{red}{a}}\}\ \cup\\
&\{A_5=\mathtt{bcbac},A_5^{\{(3,\mathtt{a})\}}=\mathtt{bc\textcolor{red}{a}ac}\} \times \{B_5=\mathtt{bbcba},B_5^{\{(2,\mathtt{a})\}}=\mathtt{b\textcolor{red}{a}cba}\}
\end{align*}
Recall that in this example, $\Fam_i=\{\F'_{i,1},\F'_{i,2}\}$ for $i \in \{1,2\}$.
For each $i,j \in \{1,2\}$, we consider all pairs of modified strings in $\F'_{1,i} \times \F'_{2,j}$ and if a pair originates from a pair in $\G$, it is added to $\G'_{i,j}$. We do not create the set $\G'_{2,2}$ as it would be empty.

Let us check on two examples that $\Gam$ satisfies the conditions of \cref{def:bicomplete}.

For $(A_2=\mathtt{abbac},B_2=\mathtt{abcba}),\,(A_3=\mathtt{acbab},B_3=\mathtt{aacba}) \in \G$:
\begin{itemize}
    \item $(A_2^{\{(2,\mathtt{a})\}}=\mathtt{a\textcolor{red}{a}bac},A_3^{\{(2,\mathtt{a})\}}=\mathtt{a\textcolor{red}{a}bab})$ is a $(A_2,A_3)_1$-maxpair with $\LCP(A_2^{\{(2,\mathtt{a})\}},A_3^{\{(2,\mathtt{a})\}})=4$,
    \item $(B_2=\mathtt{abcba},B_3^{\{(2,\mathtt{b})\}}=\mathtt{a\textcolor{red}{b}cba})$ is a $(B_2,B_3)_1$-maxpair with $\LCP(B_2,B_3^{\{(2,\mathtt{b})\}})=5$,
    \item $(A_2^{\{(2,\mathtt{a})\}},B_2),(A_3^{\{(2,\mathtt{a})\}},B_3^{\{(2,\mathtt{b})\}}) \in \G'_{1,2}$.
\end{itemize}
Let us note that $A_2^{\{(2,\mathtt{a})\}},A_3^{\{(2,\mathtt{a})\}}$ contain modifications on only one position and the same applies to $B_2,B_3^{\{(2,\mathtt{b})\}}$. Let $\U=\{(A_1,B_1),(A_2,B_2),(A_5,B_5)\}$ and $\V=\{(A_3,B_3),(A_4,B_4)\}$. One can check that:
\[\maxPairLCP_{1,1}(\U,\V)=\LCP(A_2^{\{(2,\mathtt{a})\}},A_3^{\{(2,\mathtt{a})\}})+\LCP(B_2,B_3^{\{(2,\mathtt{b})\}})=9.\]
Let us note that even though the set $\G'_{1,1}$ contains pairs of modified strings with sources $A_2$ and $B_2$ and pairs with sources $A_3$ and $B_3$, no such pair of pairs yields a $(B_2,B_3)_1$-maxpair.

For $(A_3=\mathtt{acbab},B_3=\mathtt{aacba}),\,(A_5=\mathtt{bcbac},B_5=\mathtt{bbcba}) \in \G$:
\begin{itemize}
    \item $(A_3=\mathtt{acbab},A_5^{\{(1,\mathtt{a})\}}=\mathtt{\textcolor{red}{a}cbac})$ is a $(A_3,A_5)_1$-maxpair with $\LCP(A_3,A_5^{\{(1,\mathtt{a})\}})=4$,
    \item $(B_3^{\{(1,\mathtt{b})\}}=\mathtt{\textcolor{red}{b}acba},B_5=\mathtt{bbcba})$ is a $(B_3,B_5)_1$-maxpair with $\LCP(B_3^{\{(1,\mathtt{b})\}},B_5)=1$,
    \item $(A_3,B_3^{\{(1,\mathtt{b})\}}),(A_5^{\{(1,\mathtt{a})\}},B_5) \in \G'_{1,1}$.
\end{itemize}
Let us note that $(A_5^{\{(1,\mathtt{a})\}},B_5^{\{(2,\mathtt{a})\}}=\mathtt{b\textcolor{red}{a}cba})$ also belongs to $\G'_{1,1}$ and $\LCP(B_3^{\{(1,\mathtt{b})\}},B_5^{\{(2,\mathtt{a})\}})=5$. However, the two modified strings contain modifications at \emph{two different positions}, so they do not form a $(B_3,B_5)_1$-maxpair. In particular, they cannot be used for computing $\maxPairLCP_{1,1}(\U,\V)$ according to the formula from \cref{ex:mpLCP}.
\end{example}

The construction of a bicomplete family from \cref{lem:bicomplete} requires processing the complete families in batches to ensure that only linear space is used.

\begin{lemma}\label{lem:bicomplete2}
  Let $\G$ be an $(\ell,\ell)$-family and $k_1,k_2\in \mathbb{Z}_{\ge 0}$ with $k_1,k_2=\Oh(1)$.
  If $\G$ consists of pairs of substrings of a given length-$n$ text over the alphabet $[0\dd n)$,
  then a $(k_1,k_2)$-bicomplete family~$\Gam$ for $\G$ that satisfies the conditions of \cref{lem:bicomplete} can be constructed in $\Oh(n+|\G|)$ space and $\Oh(n + |\G|\min(\ell,\log |\G|)^{k_1+k_2})$ time  with sets $\G'\in \Gam$ generated one by one and each set $\G'$ sorted in two ways:
  according to the lexicographic order of the modified strings on the first and the second coordinate, respectively.
\end{lemma}
\begin{proof}
  We will show how to construct the bicomplete family from the proof of \cref{lem:bicomplete}.
  We first generate the family $\Fam_1$ and batch it into subfamilies $\Fam_1'\sub \Fam_1$ using the following procedure applied on top of the algorithm of~\cref{prp:complete} after it outputs a set $\F'_1\in \Fam_1$.
  For each modified string $F_1^{\Delta_1}\in \F'_1$, we augment each pair of the form $(F_1,F_2)\in \G$ with a triple consisting of $F_1^{\Delta_1}$, the lexicographic rank of $F_1^{\Delta_1}$ within $\F'_1$, and the index of $\F'_1$ within $\Fam'_1$.
  After processing a set $\F'_1\in \Fam_1$ in this manner, we resume the algorithm of~\cref{prp:complete}
  provided that the total number of triples stored is smaller than $n+|\G|$.
  Otherwise (and when the algorithm of~\cref{prp:complete} terminates), 
  we declare the construction of the current batch $\Fam_1'\sub \Fam_1$ complete.

  For each batch $\Fam_1'\sub \Fam_1$, we generate the family $\Fam_2$ and batch it into subfamilies $\Fam_2'\sub \Fam_2$ using the following procedure applied on top of the algorithm of~\cref{prp:complete} after it outputs a set $\F'_2\in \Fam_2$.
  For each modified string $F_2^{\Delta_2}\in \F'_2$, we retrieve all pairs of the form $(F_1,F_2)\in \G$
  and iterate over the triples stored at $(F_1,F_2)$.
  For each such triple, consisting of a modified string~$F_1^{\Delta_1}$, the lexicographic rank of $F_1^{\Delta_1}$ within $\F'_1\in \Fam_1'$, and the index of $\F'_1$ within $\Fam'_1$, 
  we generate the corresponding triple for $F_2^{\Delta_2}$ and combine the two triples into a 6-tuple.
  After processing a set $\F'_2\in \Fam_2$ in this manner, we resume the algorithm of~\cref{prp:complete}
  provided that the total number of 6-tuples stored is smaller than $n+|\G|$.
  Otherwise (and when the algorithm of~\cref{prp:complete} terminates), 
  we declare the construction of the current batch $\Fam_2'\sub \Fam_2$ complete.

  For each pair of batches $\Fam'_1,\Fam'_2$, we group the 6-tuples by the index of $\F'_1$ within $\Fam'_1$ and the index of $\F'_2$ within $\Fam'_2$, and we sort the 6-tuples in each group in two ways: by the lexicographic rank of $F_1^{\Delta_1}$ within $\F'_1$ and by the lexicographic rank of $F_2^{\Delta_2}$ within $\F'_2$.
  The keys used for sorting and grouping are integers bounded by $n^{\Oh(1)}$, so we implement this step using radix sort. 
  Finally, for each group, we create a set $\G'$ by preserving only the modified strings $(F_1^{\Delta_1},F_2^{\Delta_2})$ out of each 6-tuple. We output $\G'$ along with the two linear orders: according to $F_1^{\Delta_1}$ and $F_2^{\Delta_2}$.
  It is easy to see that this yields $\G' = \{(F^{\Delta_1}_1,F^{\Delta_2}_2)\in \F'_1\times \F'_2 : (F_1,F_2)\in \G\}$. Consequently, for the two batches $\Fam'_1,\Fam'_2$, the algorithm produces
  \[\Gam' = \{\{(F^{\Delta_1}_1,F^{\Delta_2}_2)\in \F'_1\times \F'_2 : (F_1,F_2)\in \G\} : (\F'_1,\F'_2)\in \Fam'_1\times \Fam'_2\}\sm \{\emptyset\}.\]
  Since, for $i\in \{1,2\}$, each set $\F'_i\in \Fam_i$ belongs to exactly one batch $\Fam'_i$,
  across all pairs of batches we obtain the family $\Gam$ defined above.

  We conclude with the complexity analysis. The generators of $\Fam_1$ and $\Fam_2$ use $\Oh(n+|\F_1|)=\Oh(n+|\G|)$ and $\Oh(n+|\F_2|)=\Oh(n+|\G|)$ space, respectively.
  Each set $\F'_1\in \Fam_1$ contains $\Oh(1)$ modified strings with the same source, so the number of triples generated for $\F_1'$ is $\Oh(|\G|)$, so the number of triples generated for each batch $\Fam_1'$ is $\Oh(n+|\G|)$.
  Similarly, each set $\F'_2\in \Fam_2$ contains $\Oh(1)$ modified strings with the same source, so the number of 6-tuples generated for each batch pair $\Fam'_1,\Fam'_2$ is $\Oh(n+|\G|)$.
  The triples are removed after processing each batch $\Fam'_1$ and the 6-tuples are removed after processing each pair of batches $\Fam'_1,\Fam'_2$, so the space complexity of the entire algorithm is $\Oh(n+|\G|)$.

  As for the running time, note that the generator of $\Fam_1$ takes $\Oh(n +  |\F_1|\min(\ell,\log|\F_1|)^{k_1})=
  \Oh(n +  |\G|\min(\ell,\log|\G|)^{k_1})$ time. In the post-processing of the sets $\F'_1\in \Fam_1$, by \cref{prp:complete}, for each $(F_1,F_2) \in \G$ we generate $\Oh(\min(\ell,\log|\G|)^{k_1})$ triples. This gives $\Oh(|\G|\min(\ell,\log|\F_1|)^{k_1})=\Oh(|\G|\min(\ell,\log|\G|)^{k_1})$ triples in total in $\Oh(1)$ time per triple. 
  The number of batches $\Fam'_1$ is therefore $\Oh(1 + \frac{|\G|}{n+|\G|}\min(\ell,\log|\G|)^{k_1})$.
  For each such batch, we run the generator of $\Fam_2$, which takes $\Oh(n +  |\F_2|\min(\ell,\log|\F_2|)^{k_2})=
  \Oh(n +  |\G|\min(\ell,\log|\G|)^{k_2})$ time. 
  Across all batches $\Fam'_1$, this sums up to \[\Oh((1 + \tfrac{|\G|}{n+|\G|}\min(\ell,\log|\G|)^{k_1}\!)(n+|\G|\min(\ell,\log|\G|)^{k_2}\!))\!=\Oh(n+|\G|\min(\ell,\log|\G|)^{k_1+k_2}).\]
  In the post-processing of the sets $\F'_2\in \Fam_2$, we generate $\sum_{\G'\in \Gam}|\G'|=\Oh(|\G|\min(\ell,\log|\G|)^{k_1+k_2})$ 6-tuples, in $\Oh(1)$ time per tuple. 
  The number of batch pairs $\Fam'_1,\Fam'_2$
  is therefore
  \[\Oh(1 + \tfrac{|\G|}{n+|\G|}\min(\ell,\log|\G|)^{k_1} + \tfrac{|\G|}{n+|\G|}\min(\ell,\log|\G|)^{k_1+k_2})=\Oh(1+\tfrac{|\G|}{n+|\G|}\min(\ell,\log|\G|)^{k_1+k_2}).\]
  Each batch pair is processed in $\Oh(n+|\G|)$ time, so this yields $\Oh(n+|\G|\min(\ell,\log|\G|)^{k_1+k_2})$ time in total.
\end{proof}

Next we will use bicomplete families to reduce the computation of $\maxPairLCP_{k_1,k_2}(\U,\V)$ for $(\ell,\ell)$-families of substrings of a given text to a number of computations of $\maxPairLCP(\U',\V')$ for $(\ell,\ell)$-families of modified substrings. Similarly to~\cite{DBLP:journals/jcb/ThankachanAA16}, we need to group the elements of a bicomplete family for $\U \cup \V$ by subsets of modifications so that, if two modified strings have a common modification, it is counted as one mismatch between them. Intuitively, we primarily want to avoid checking the conditions $|\Delta_i \cup \nabla_i| \le k_i$ for $i \in \{1,2\}$ in the formula in \cref{ex:mpLCP}.

\begin{proposition}\label{prp:maxpairlcpk}
  Let $\U,\V$ be $(\ell,\ell)$-families and $k_1,k_2\in \mathbb{Z}_{\ge 0}$ with $k_1,k_2=\Oh(1)$.
  There exists a family $\Pam$ such that:
  \begin{enumerate}
    \item\label{it1} each element of $\Pam$ is a pair of sets $(\U',\V')$, where the elements of $\U'$ are of the form $(U_1^{\Delta_1},U_2^{\Delta_2})$ for $(U_1,U_2)\in \U$ with $|\Delta_1|\le k_1$ and $|\Delta_2|\le k_2$, whereas the elements of $\V'$ are of the form $(V_1^{\nabla_1},V_2^{\nabla_2})$ for $(V_1,V_2)\in \V$ with $|\nabla_1|\le k_1$ and $|\nabla_2|\le k_2$;
    \item\label{it2} $\max_{(\U',\V')\in \Pam} (|\U'|+|\V'|) = \Oh(|\U|+|\V|)$;
    \item\label{it3} $\sum_{(\U',\V')\in \Pam} (|\U'|+|\V'|) = \Oh((|\U|+|\V|)\min(\ell,\log(|\U|+|\V|))^{k_1+k_2})$;
    \item\label{it4} $\max_{(\U',\V')\in \Pam} \maxPairLCP(\U',\V') = \maxPairLCP_{k_1,k_2}(\U,\V)$.
  \end{enumerate}
\end{proposition}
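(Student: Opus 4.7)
The plan is to derive $\Pam$ from a $(k_1,k_2)$-bicomplete family produced by \cref{lem:bicomplete}. I would apply that lemma with $\G=\U\cup\V$ and thresholds $k_1,k_2$, obtaining a family $\Gam$. For each $\G'\in\Gam$, I would set
\[\U'=\{(F_1^{\Delta_1},F_2^{\Delta_2})\in\G':(F_1,F_2)\in\U\}\quad\text{and}\quad\V'=\{(F_1^{\Delta_1},F_2^{\Delta_2})\in\G':(F_1,F_2)\in\V\},\]
and add $(\U',\V')$ to $\Pam$ (skipping instances where both sides are empty). Items~\ref{it1}, \ref{it2}, and~\ref{it3} then transfer almost verbatim from the corresponding guarantees in \cref{lem:bicomplete}: the form of the modified pairs, the $\Oh(|\G|)=\Oh(|\U|+|\V|)$ size bound on any single $\G'$, and the $\Oh(|\G|\min(\ell,\log|\G|)^{k_1+k_2})$ total size bound, respectively. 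The two lexicographic orderings per set and the claimed time and space bounds likewise come directly from \cref{lem:bicomplete}, with $\Pam$ streamed one pair $(\U',\V')$ at a time.

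For item~\ref{it4}, one direction is straightforward: given any witnesses $(U_1,U_2)\in\U$ and $(V_1,V_2)\in\V$ for $\maxPairLCP_{k_1,k_2}(\U,\V)$, the bicomplete property supplies a set $\G'\in\Gam$ containing coordinate-wise $(U_i,V_i)_{k_i}$-maxpairs, and \cref{fct:pair}(a) then gives $\LCP(U_i^{\Delta_i},V_i^{\nabla_i})\ge\LCP_{k_i}(U_i,V_i)$ for $i\in\{1,2\}$, so $\maxPairLCP(\U',\V')\ge\maxPairLCP_{k_1,k_2}(\U,\V)$.

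The main obstacle is the reverse inequality: by \cref{fct:pair}(b), ensuring $\LCP(U_i^{\Delta_i},V_i^{\nabla_i})\le\LCP_{k_i}(U_i,V_i)$ for every pair picked from $\U'\times\V'$ requires $|\Delta_i\cup\nabla_i|\le k_i$, whereas \cref{lem:bicomplete} only bounds $|\Delta_i|$ and $|\nabla_i|$ separately, in principle leaving a union of size up to $2k_i$. This is exactly the ``common modification'' issue flagged in the paragraph preceding the proposition. The fix is to refine each $\G'$ into finer subgroups indexed by a canonical modification skeleton: inspection of the $k$-errata-tree-style recursion behind \cref{prp:complete} shows that every modified string inside a single set $\F'_i\in\Fam_i$ has its modifications drawn from a common set of at most $k_i$ positions induced by the recursive branching choices along the path from the root. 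Restricting to pairs sharing this skeleton forces $\Delta_i\cup\nabla_i$ to lie in a set of size $\le k_i$ and makes \cref{fct:pair}(b) applicable, while, since each source already contributes only $\Oh(1)$ modified strings per $\F'_i$, the refinement inflates items~\ref{it2} and~\ref{it3} by at most a constant factor. To stay within the $\Oh(n+|\U|+|\V|)$ space bound, this refinement is woven into the batched construction in the proof of \cref{lem:bicomplete}: the skeleton identifiers serve as additional radix-sort keys when the $6$-tuples are regrouped, so a single pass suffices without storing the whole family. The delicate part of the write-up will be extracting the skeleton property from the construction of \cref{prp:complete} cleanly and verifying that the radix-sort regrouping respects the linear-space streaming constraint.
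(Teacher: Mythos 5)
Your reduction to \cref{lem:bicomplete}, the transfer of requirements \ref{it1}--\ref{it3}, and the easy ($\ge$) direction of requirement \ref{it4} all match the paper, and you correctly isolate the real difficulty: guaranteeing $|\Delta_i\cup\nabla_i|\le k_i$ for \emph{every} cross pair so that \cref{fct:pair}(b) applies. However, your proposed fix rests on a false structural claim. It is not true that all modified strings in a single set $\F'\in\Fam_i$ produced by \cref{prp:complete} draw their modifications from a common set of at most $k_i$ positions. In the errata-tree-style construction, the set $\F'_w$ attached to a light node $w$ contains, for each string $F^{\Delta}$ in the parent set, a copy corrected at position $\ell+1$ where $\ell=\LCP(F^{\Delta},\val(h(w)))$ is the depth at which \emph{that particular string} diverges from the heavy path; these divergence depths differ from string to string, so already at level $1$ two strings of the same group can carry modifications at two unrelated positions, giving $|\Delta\cup\nabla|=2>k$ for $k=1$. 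Moreover, even if you refined groups by the exact set of modified positions, the refinement would break the lower-bound direction: the maxpair you must preserve generally has $\Delta_i$ and $\nabla_i$ supported on \emph{different} position sets (see \cref{ex:maxpairs}, where the positions are $\{2,3\}$ versus $\{3,5\}$), so a skeleton-based partition would separate the two halves of the maxpair into different cells of $\Pam$.

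The paper resolves this differently, with a counting rather than a positional argument. For each $\G'\in\Gam$ it enumerates all subsets $\delta_1,\delta_2$ of the modification sets (at most $2^{k_1+k_2}=\Oh(1)$ relevant choices per element) and all budget splits $d_1\in[0,k_1]$, $d_2\in[0,k_2]$, and forms $\U'$ from the pairs with $\delta_i\sub\Delta_i$ and $|\Delta_i|\le d_i$, and $\V'$ from the pairs with $\delta_i\sub\nabla_i$ and $|\nabla_i|\le k_i+|\delta_i|-d_i$. Inclusion--exclusion then gives $|\Delta_i\cup\nabla_i|=|\Delta_i|+|\nabla_i|-|\Delta_i\cap\nabla_i|\le d_i+(k_i+|\delta_i|-d_i)-|\delta_i|=k_i$ for every cross pair, while choosing $\delta_i=\Delta_i\cap\nabla_i$ and $d_i=|\Delta_i|$ for the optimal maxpair shows it survives the filtering; each element lands in only $\Oh(1)$ cells, so requirements \ref{it2} and \ref{it3} are preserved. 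Without this (or an equivalent) device your construction can strictly overestimate $\maxPairLCP_{k_1,k_2}(\U,\V)$, so the proof as proposed does not go through.
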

\begin{proof}
  Let $\G = \U\cup \V$ and let $\Gam$ be the $(k_1,k_2)$-bicomplete family for $\G$ constructed as in \cref{lem:bicomplete}.
  Given $\G'\in \Gam$ and $\delta_1,\delta_2\sub \mathbb{Z}_{\ge 0} \times \Sigma$, we define a subset of $\G'$:
  \[\G'_{\delta_1,\delta_2}  = \{(F_1^{\Delta_1},F_2^{\Delta_2})\in \G' : \delta_1 \sub \Delta_1, \delta_2\sub \Delta_2\}.\]
  For all non-empty sets $\G'_{\delta_1,\delta_2}$ with $\G'\in \Gam$, and all integers $d_1\in [0\dd  k_1]$ and $d_2\in [0\dd  k_2]$, we insert to $\Pam$ a pair $(\U',\V')$, where
  \begin{align*}
    \U' &= \{(U_1^{\Delta_1},U_2^{\Delta_2})\in \G'_{\delta_1,\delta_2}: (U_1,U_2)\in \U,\, |\Delta_1|\le d_1,\, |\Delta_2|\le d_2\},\\
    \V' &= \{(V_1^{\nabla_1},V_2^{\nabla_2})\in \G'_{\delta_1,\delta_2}: (V_1,V_2)\in \V,\, |\nabla_1|\le k_1+|\delta_1|-d_1,\, |\nabla_2|\le k_2 + |\delta_2|-d_2\}.
  \end{align*}
  Clearly, the elements of $\U'$ and $\V'$ satisfy requirement \eqref{it1}.
  Moreover, $|\U'|+|\V'|\le 2|\G'| = \Oh(|\G|) =  \Oh(|\U|+|\V|)$, so requirement \eqref{it2} is fulfilled.
  Furthermore, each pair $(F_1^{\Delta_1},F_2^{\Delta_2})\in \G'$ belongs to 
  $2^{|\Delta_1|+|\Delta_2|}\le 2^{k_1+k_2}=\Oh(1)$ sets $\G'_{\delta_1,\delta_2}$
  and thus to $\Oh(k_1k_2)=\Oh(1)$ sets $\U'$ or $\V'$ created for $\G'$.
  Consequently, due to $\sum_{\G'\in \Gam} |\G'| = \Oh(|\G|\min(\ell,\log|\G|)^{k_1+k_2})$,
  we have $\sum_{(\U',\V')\in \Pam}(|\U'|+|\V'|) = \Oh(|\G|\min(\ell,\log|\G|)^{k_1+k_2})=\Oh((|\U|+|\V|)\min(\ell,\log(|\U|+|\V|))^{k_1+k_2})$, which yields requirement \eqref{it3}.

  Our next goal is to prove $\maxPairLCP_{k_1,k_2}(\U,\V) = \max_{(\U',\V')\in \Pam} \maxPairLCP(\U',\V')$ (requirement \eqref{it4}).

\begin{example}
As in~\cref{ex:maxpairs}, consider $U=\mathtt{ababbabb}$, $V=\mathtt{aacbaaab}$, $k=3$, $\Delta=\{(2,\mathtt{a}),(3,\mathtt{b})\}$,
and $\nabla=\{(3,\mathtt{b}),(5,\mathtt{b})\}$.
Recall that $(U^\Delta, V^\nabla)$ form a $(U,V)_3$-maxpair.
We have $\delta=\Delta \cap \nabla=\{(3,\mathtt{b})\}$.
$U^\Delta$ has $d=2$ modifications, and $V^\nabla$ has $k+|\delta|-d=2$ modifications.
Note that, if we instead had $\nabla=\{(3,\mathtt{b}),(5,\mathtt{b}),(7,\mathtt{b})\}$, then $\LCP(U^\Delta,V^\nabla)$ would be greater than $\LCP_3(U,V)$, and we would thus not want to consider the pair $(U^\Delta, V^\nabla)$.
\end{example}
  
  Let us fix $(\U',\V')\in \Pam$ generated for $\G'$, $\delta_1$, $\delta_2$, $d_1$, and $d_2$.
  For every $(U_1^{\Delta_1},U_2^{\Delta_2})\in \U'$ and $(V_1^{\nabla_1},V_2^{\nabla_2})\in \V'$,
  we have $|\Delta_i\cup \nabla_i| = |\Delta_i|+|\nabla_i|-|\Delta_i\cap \nabla_i|\le d_i + {(k_i + |\delta_i| - d_i)} - |\delta_i| \le k_i$ for $i\in \{1,2\}$ and thus, by~\cref{fct:pair}\eqref{it:pair:b}, $\LCP(U_i^{\Delta_i},V_i^{\nabla_i})\le \LCP_{k_i}(U_i,V_i)$.
  Consequently, we have $\LCP(U_1^{\Delta_1},V_1^{\nabla_1})+\LCP(U_2^{\Delta_2},V_2^{\nabla_2}) \le \LCP_{k_1}(U_1,V_1) + \LCP_{k_2}(U_2,V_2)$, and thus $\maxPairLCP(\U',\V')\le \maxPairLCP_{k_1,k_2}(\U,\V)$.

  As for the converse inequality, suppose that $(U_1,U_2)\in \U$ and $(V_1,V_2)\in \V$ satisfy
  $$\maxPairLCP_{k_1,k_2}(\U,\V) = \LCP_{k_1}(U_1,V_1) + \LCP_{k_2}(U_2,V_2).$$
  By the definition of a bicomplete family (\cref{def:bicomplete}), there exist $\G'\in \G$ and pairs of modified strings $(U_1^{\Delta_1},U_2^{\Delta_2}),(V_1^{\nabla_1},V_2^{\nabla_2})\in \G'$ such that $(U_i^{\Delta_i},V_i^{\nabla_i})$ is a $(U_i,V_i)_{k_i}$-maxpair for $i\in \{1,2\}$.
  By~\cref{fct:pair}, this yields $\LCP_{k_i}(U_i,V_i) = \LCP(U_i^{\Delta_i},V_i^{\nabla_i})$ and $|\Delta_i\cup \nabla_i|\le k_i$.
  We define $\delta_i = \Delta_i\cap \nabla_i$ and $d_i = |\Delta_i|$ for $i\in \{1,2\}$,
  and consider the pair $(\U',\V')$ constructed for $\G'$, $\delta_1$, $\delta_2$, $d_1$, and~$d_2$.
  Note that $|\Delta_i|\le d_i$ and $\delta_i \sub \Delta_i$, so $(U_1^{\Delta_1},U_2^{\Delta_2})\in \U'$.
  Moreover, $|\nabla_i| = {|\Delta_i\cup \nabla_i| + |\delta_i| - |\Delta_i|} \le {k_i + |\delta_i|-d_i}$
  and $\delta_i\sub \nabla_i$, so $(V_1^{\nabla_1},V_2^{\nabla_2})\in \V'$.
  Consequently, $\maxPairLCP_{k_1,k_2}(\U,\V)\le \maxPairLCP(\U',\V')$.
\end{proof}

\begin{example}\label{ex:P}
Let us consider sets $\U=\{(A_1,B_1),(A_2,B_2),(A_5,B_5)\}$ and $\V=\{(A_3,B_3),(A_4,B_4)\}$ and the $(1,1)$-bicomplete family $\Gam$ of $\U \cup \V$ from \cref{ex:G}. Let $\G'$ be the following set $\G'_{1,2}$ from that example:
\begin{align*}
\G'=\,&\{A_2=\mathtt{abbac},A_2^{\{(2,\mathtt{a})\}}=\mathtt{a\textcolor{red}{a}bac}\} \times \{B_2=\mathtt{abcba}\}\ \cup\\
&\{A_3=\mathtt{acbab},A_3^{\{(2,\mathtt{a})\}}=\mathtt{a\textcolor{red}{a}bab}\} \times \{B_3=\mathtt{aacba},B_3^{\{(2,\mathtt{b})\}}=\mathtt{a\textcolor{red}{b}cba}\}.
\end{align*}
In the construction of \cref{prp:maxpairlcpk}, in particular, the following set is constructed:
\begin{align*}
\G'_{\{(2,\mathtt{a})\},\emptyset}=\,&\{A_2^{\{(2,\mathtt{a})\}}=\mathtt{a\textcolor{red}{a}bac}\} \times \{B_2=\mathtt{abcba}\}\ \cup\\
&\{A_3^{\{(2,\mathtt{a})\}}=\mathtt{a\textcolor{red}{a}bab}\} \times \{B_3=\mathtt{aacba},B_3^{\{(2,\mathtt{b})\}}=\mathtt{a\textcolor{red}{b}cba}\}.
\end{align*}
For $\G'_{\{(2,\mathtt{a})\},\emptyset}$, $d_1=1$ and $d_2=0$ the following pair of sets $(\U',\V')$ is inserted to $\Pam$:
\begin{align*}
\U'&=\{A_2^{\{(2,\mathtt{a})\}}=\mathtt{a\textcolor{red}{a}bac}\} \times \{B_2=\mathtt{abcba}\},\\
\V'&=\{A_3^{\{(2,\mathtt{a})\}}=\mathtt{a\textcolor{red}{a}bab}\} \times \{B_3=\mathtt{aacba},B_3^{\{(2,\mathtt{b})\}}=\mathtt{a\textcolor{red}{b}cba}\}.
\end{align*}
We have $\maxPairLCP_{1,1}(\U,\V) = \maxPairLCP(\U',\V')$, as discussed in \cref{ex:G}.

As another example, $\G'_{\emptyset,\emptyset}=\G'$. For this set, no $d_1,d_2 \in [0\dd 1]$ produce a pair of sets $(\U',\V')$ such that $(A_2^{\{(2,\mathtt{a})\}},B_2)\in \U'$, $(A_3^{\{(2,\mathtt{a})\}},B_3^{\{(2,\mathtt{b})\}}) \in \V'$. Intuitively, this is because in $\G'_{\emptyset,\emptyset}$, the modifications $\{(2,\mathtt{a})\}$ in the first modified strings in the pairs are counted twice.
\end{example}

  \begin{lemma}\label{lem:maxpairlcpk2}
  Let $\U,\V$ be $(\ell,\ell)$-families and $k_1,k_2\in \mathbb{Z}_{\ge 0}$ with $k_1,k_2=\Oh(1)$.
  Moreover, if $\U$ and~$\V$ consist of pairs of substrings of a given length-$n$ text over the alphabet $[0\dd n)$,
  then the family~$\Pam$ from \cref{prp:maxpairlcpk} can be constructed in $\Oh(n+(|\U|+|\V|)\min(\ell,\log(|\U|+|\V|))^{k_1+k_2})$
  time and $\Oh(n+|\U|+|\V|)$ space, with pairs $(\U',\V')$ generated one by one 
  and each set $\U'\cup \V'$ sorted in two ways:   
  according to the lexicographic order of the modified strings on the first and the second coordinate, respectively.
\end{lemma}
\begin{proof}
  We use the construction from \cref{prp:maxpairlcpk}.
  We apply the construction of a bicomplete family (\cref{lem:bicomplete}) to generate the family $\Gam$ in batches $\Gam'$
  consisting of sets of total size $\Theta(n+|\G|)$ (the last batch might be smaller). In the algorithm $\Gam$ is processed in batches of size $\Omega(n)$ (and not, say, set by set) so that the time required to bucket sort integers of magnitude $n^{\Oh(1)}$ in the algorithm does not make the overall time complexity worse.
  For each batch $\Gam'\sub \Gam$, we first construct all the non-empty sets $\G'_{\delta_1,\delta_2}$ for $\G'\in \Gam'$. 
  For this, we iterate over sets $\G'\in \Gam'$, pairs $(F_1^{\Delta_1},F_2^{\Delta_2})\in \G'$,
  subsets $\delta_1\sub \Delta_1$, and subsets $\delta_2\sub \Delta_2$,
  creating 5-tuples consisting of the index of $\G'$ in $\Gam'$, as well as $\delta_1$, $\delta_2$, $F_1^{\Delta_1}$, and $F_2^{\Delta_2}$.
  We group these 5-tuples according to the first three coordinates;
  the key consists of $\Oh(1+k_1+k_2)=\Oh(1)$ integers bounded by $n^{\Oh(1)}$, so we use radix sort.
  Moreover, since radix sort is stable, the sets $\G'_{\delta_1,\delta_2}$ can be constructed along with both linear orders derived from $\G'$.
  Finally, for each non-empty set $\G'_{\delta_1,\delta_2}$ with $\G'\in \Gam'$,
  we iterate over $d_1\in [0\dd  k_1]$ and $d_2\in [0\dd  k_2]$, generating the subsets $\U'$ and $\V'$ of $\G'_{\delta_1,\delta_2}$ according to the formulae above.
  The two linear orders of $\U'\cup \V'$ are derived from $\G'_{\delta_1,\delta_2}$.

  We conclude with the complexity analysis.
  The construction of the $(k_1,k_2)$-bicomplete family of~\cref{lem:bicomplete} requires $\Oh(n+|\G|\min(\ell,\log|\G|)^{k_1+k_2})$ time and $\Oh(n+|\G|)$ space.
  The total size of the sets in each batch $\Gam'$ is $\Oh(n+|\G|)$ and the number of batches is $\Oh(1+\frac{|\G|}{n+|\G|}\min(\ell,\log|\G|)^{k_1+k_2})$.
  For each modified string $(F_1^{\Delta_1},F_2^{\Delta_2})\in \G'$,
  we construct $2^{|\Delta_1|+|\Delta_2|}\le 2^{k_1+k_2}=\Oh(1)$ tuples corresponding to elements of the sets $\G'_{\delta_1,\delta_2}$, in $\Oh(1)$ time per tuple. Consequently, this phase uses $\Oh(n+|\G|)$ space and $\Oh(|\G|\min(\ell,\log|\G|)^{k_1+k_2})$ time.
  For each batch, we use $\Oh(n+|\G|)$ time and space for radix sort, yielding a total of $\Oh(n+|\G|)$ space and $\Oh(n+|\G|\min(\ell,\log|\G|)^{k_1+k_2})$ time.
  Finally, for each non-empty set $\G'_{\delta_1,\delta_2}$ with $\G'\in \Gam$,
  each $d_1\in [0\dd  k_1]$, and each $d_2\in [0\dd  k_2]$, we spend $\Oh(|\G'_{\delta_1,\delta_2}|)$ time and space to generate $\U'$ and $\V'$. 
  Since $k_1,k_2=\Oh(1)$, the overall time of this final phase is proportional to the total size of the sets $\G'_{\delta_1,\delta_2}$, which is $\Oh(|\G|\min(\ell,\log|\G|)^{k_1+k_2})$ as argued in the proof of \cref{prp:maxpairlcpk}.
\end{proof}

We are ready to provide a proof of \cref{thm:maxpairlcpk}. With condition \eqref{it4} of \cref{prp:maxpairlcpk}, instead of computing $\maxPairLCP_{k_1,k_2}(\U,\V)$, it suffices to compute $\max_{(\U',\V')\in \Pam} \maxPairLCP(\U',\V')$. We process the family $\Pam$ in batches of size $\Omega(N)$ to use the fact that the complexity in \cref{lem:main} has a denominator $\log N$ (instead of, say, a logarithm of the size of a set in $\Pam$). Still, the batch size is bounded by $\Oh(N)$ to ensure small space complexity. By \cref{lem:maxpairlcpk2}, there are $\Oh(\min(\ell,\log N)^k)$ batches. For each of them we create an instance of \problemtwo, which we solve using~\cref{lem:problem} if $\ell > \log^{3/2}N$
or~\cref{lem:main} otherwise to obtain the desired complexity.

Let us restate the theorem for convenience.

\maxpairlcpk*
\begin{proof}
First, we augment the given text in $\Oh(n)$ time with a data structure for $\Oh(1)$-time $\LCE$ queries (see~\cref{thm:lce}).
Next, we apply~\cref{prp:maxpairlcpk,lem:maxpairlcpk2} to generate a family $\Pam$
such that $\maxPairLCP_{k_1,k_2}(\U,\V)=\max_{(\U',\V')\in \Pam}\maxPairLCP(\U',\V')$.
We process pairs ${(\U',\V')\in \Pam}$ in batches $\Pam'\sub \Pam$ consisting of pairs of sets of total size $\Theta(N)$ (in the last batch, the total size might be smaller).
Each batch $\Pam'=\{(\U'_j,\V'_j): j\in [1\dd p]\}$ is processed as follows.

First, for each $j\in [1\dd p]$, we construct the compacted trie $\T(\F_{j})$ of the set $\F_j = \F_{1,j}\cup \F_{2,j}$, where 
\[\F_{i,j} = \{F_i^{\Delta_i} : (F_1^{\Delta_1},F_2^{\Delta_2})\in \U'_j\cup \V'_j\}.\]
The two linear orders associated with the set $\U'_j\cup \V'_j$ yield the lexicographic order of the sets $\F_{1,j}$ and $\F_{2,j}$. These orders can be merged to a linear order of $\F_j$ using a classic linear-time algorithm with the aid of $\LCE$ queries used to compare any two modified substrings. In order to build the compacted tries $\T(\F_{j})$ using~\cref{lem:ctrie},
it suffices to determine the longest common prefixes between any two consecutive modified strings in $\F_{j}$,
which reduces to $\LCE$ queries on the text. 

Then we construct the compacted trie $\T(\F)$ of the set 
\[\F = \bigcup_{j=1}^{p} \{j\cdot F^{\Delta} : F^{\Delta}\in \F_j\}\sub [1\dd p]\cdot \Sigma^{\le \ell}.\]
For this, we create a new root node and, for $j\in [1\dd m]$, attach the compacted trie $\T(\F_{j})$ with a length-$1$ edge; if the root of $\T(\F_{j})$ has degree $1$ and $\varepsilon \notin \F_{j}$, we also dissolve the root. 
Finally, we construct two sets $\P,\Q\sub \F^2$:
\vspace{-1.5ex}
\begin{align*}
  \P &= \bigcup_{j=1}^{p} \{(j\cdot U_1^{\Delta_1},j\cdot U_2^{\Delta_2}) : (U_1^{\Delta_1},U_2^{\Delta_2})\in \U'_j\},\\
  \Q &=  \bigcup_{j=1}^{p} \{(j\cdot V_1^{\nabla_1},j\cdot V_2^{\nabla_2}) : (V_1^{\nabla_1},V_2^{\nabla_2})\in \V'_j\}.
\end{align*}%
This yields an instance of the \problemtwo. We solve this instance using~\cref{lem:problem} if $\ell > \log^{3/2}N$
or~\cref{lem:main} otherwise.
The obtained value satisfies \[\maxPairLCP(\P,\Q)=2+\max_{j=1}^{m} \maxPairLCP(\U'_j,\V'_j)\]
and, taking the maximum across all batches $\Pam'\sub \Pam$,
we retrieve $\maxPairLCP_{k_1,k_2}(\U,\V) = \max_{(\U',\V')\in \Pam} \maxPairLCP(\U',\V')$ (cf.\ condition \eqref{it4} of \cref{prp:maxpairlcpk}).

We conclude with the complexity analysis.
Applying the algorithm of~\cref{lem:maxpairlcpk2} costs $\Oh(n+N\min(\ell,\log N)^{k_1+k_2})$ time
and $\Oh(n+N)$ space.
According to \cref{prp:maxpairlcpk}, the resulting family~$\Pam$ satisfies $\max_{(\U',\V')\in \Pam} (|\U'|+|\V'|)=\Oh(N)$
and $\sum_{(\U',\V')\in \Pam} = \Oh(N \min(\ell,\log N)^{k_1+k_2})$.
Hence, the number of batches $\Pam'$ is $\Oh(\min(\ell,\log N)^{k_1+k_2})$.
Let us now focus on a single batch $\Pam'=\{(\U'_j,\V'_j): 1\le j \le p\}$; recall that it satisfies
$\sum_{j=1}^p (|\U'_j|+|\V'_j|)=\Oh(N)$. 
Each set $\F_{j}$ is obtained from $\U'_j\cup \V'_j$ in $\Oh(|\U'_j|+|\V'_j|)$ time.
Each modified string in $\F_{j}$ contains up to $k_1+k_2$ modifications,
so the $\LCP$ computation for two such modified strings requires up to $2(k_1+k_2)+1=\Oh(1)$ $\LCE$ queries on $T$.
Consequently, each trie $\T(\F_{j})$ is constructed in $\Oh(|\U'_j|+|\V'_j|)$ time.
Merging these tries into $\T(\F)$ requires $\Oh(p)$ additional time, i.e., 
the construction of $\T(\F)$ takes $\Oh(N)$ time in total.
The sets $\P$ and $\Q$ are of size $\Oh(N)$ and also take $\Oh(N)$ time to construct.
Overall, preparing the instance $(\T(\F),\P,\Q)$ of \problemtwo
takes $\Oh(N)$ time and space.
Solving this instance takes $\Oh(N\log N)$ time and $\Oh(N)$ space if we use~\cref{lem:problem}
or $\Oh(N(\ell+\log N)(\log \ell + \sqrt{\log N})/\log N)$ time and $\Oh(N+N\ell/\log N)$ space if we use~\cref{lem:main}.
In either case, this final step dominates both the time and the space complexity of processing a single batch $\Pam'$.
Across all $\Oh(\min(\ell,\log N)^{k_1+k_2})$ batches, we obtain the following trade-offs:
\begin{itemize}
  \item $\Oh(N \log N \cdot \log^{k_1+k_2} N)$ time and $\Oh(N)$ space if $\ell > \log^{3/2} N$;
  \item $\Oh((N\ell/\sqrt{\log N})\cdot \log^{k_1+k_2} N)$ time and $\Oh(N\ell/\log N)$ space if $\log N < \ell \le \log^{3/2} N$;
  \item $\Oh(N\sqrt{\log N}\cdot \ell^{k_1+k_2})$ time and $\Oh(N)$ space if $\ell\le \log N$.
\end{itemize}
Accounting for $\Oh(n)$ space and construction time of the data structure for $\LCE$ queries on the text, we retrieve the claimed trade-offs.
\end{proof}

\section*{Acknowledgments}
Panagiotis Charalampopoulos was partly supported by the Israel Science Foundation grant 592/17.

Tomasz Kociumaka was partly supported by NSF 1652303, 1909046, and HDR TRIPODS 1934846 grants, and an Alfred P. Sloan Fellowship.

Jakub Radoszewski was supported by the Polish National Science Center, grants no. 2018/31{\allowbreak}/D/ST6/03991 and 2022/46/E/ST6/00463.

Solon P. Pissis was supported in part by the PANGAIA and ALPACA projects that have received funding from the European Union's Horizon 2020 research and innovation programme under the Marie Skłodowska-Curie grant agreements No 872539 and 956229, respectively.

\bibliographystyle{alphaurl}
\bibliography{references2}

\appendix

\section{Proof of Proposition \ref{prp:streaming}}\label{app:prpstream}
Throughout the appendix, we restate the propositions/lemmas for convenience.
\prpstream*
\begin{proof}
The sequence of computations in the algorithm depends only on the next instruction to be performed, its current memory state, and the data that appear on the input streams. More precisely, in the preprocessing:
\begin{itemize}
\item for every possible instruction in the algorithm's code ($\Oh(1)$ options; we can assume that it is a single-bit read or write instruction or the first instruction of the algorithm), and 
\item for every possible state of memory used in the algorithm ($2^s$ options), and
\item for every possible combination of the next (up to) $\floor{\tau/x}$ bits that are available on each of the input streams, where $x$ is the number of input streams ($\Oh(2^\tau)$ possible combinations),
\end{itemize}
the algorithm is simulated starting from the selected instruction with the given memory state. The simulation lasts until at least one of the following events takes place:
\begin{itemize}
\item the algorithm terminates, or
\item the algorithm requires to read a bit of an input stream that is not given to it (that is, the $(\floor{\tau/x}+1)$th bit of a stream), or
\item the algorithm requires to write the $(\tau+1)$th bit.
\end{itemize}
The computations take $\Oh(\tau t)$ time. When the simulation ends, the following data is memoised:
\begin{itemize}
\item the (read or write) instruction on which the simulation ended, or information that the algorithm terminated ($\Oh(1)$ options),
\item the final state of the algorithm's memory ($s$ bits),
\item the number of bits read from each of the input streams ($\Oh(x\log \tau) = \Oh(\log \tau)$ bits),
\item the number of bits written to each of the output streams and the bits written to all output streams, ordered by the stream and then in the order of writing ($\Oh(y \log \tau + \tau)=\Oh(\tau)$ bits, where $y$ is the number of output streams).
\end{itemize}
In total, in the preprocessing, an array consisting of $\Oh(2^{s+\tau})$ cells is constructed. Each cell is represented using $\Oh(1+(\tau+s)/w)=\Oh(1)$ machine words. The total preprocessing time is $\Oh(2^{\tau+s}(\tau t+s))$, as required.

The constructed array allows us to simulate the algorithm in $\Oh(1+Lx/\tau)=\Oh(1+L/\tau)$ time, where $L$ is the total length (in bits) of all input and output streams, in a straightforward way. In the simulation, we store:
\begin{itemize}
\item the current instruction to be executed in the algorithm, and
\item the current state of the algorithm's memory, and
\item for each input stream, the position of the next bit to be read, and
\item the data on the output streams.
\end{itemize}
In each step of the simulation, $\Oh(\tau)$ bits of the input are processed or $\Oh(\tau)$ bits are output. Each step is performed in $\Oh(1)$ time thanks to the fact that the streams are given in a packed form and $s,\tau \le w$. Hence the total processing complexity.
\end{proof}

\section{Proof of Lemma \ref{lem:tauruns}}\label{app:taurons}
\lemtauruns*
\begin{proof}
The first claim of the lemma follows from the periodicity lemma (\cref{lem:perlemma}). Recall that a $\tau$-run is a run of length at least $3\tau-1$ with period at most $\frac13\tau$.
Suppose, towards a contradiction, that two different $\tau$-runs overlap by more than $\frac23\tau$ positions. By the fact that their periods are at most $\frac13\tau$ and an application of \cref{lem:perlemma}, we obtain that these two $\tau$-runs cannot be distinct as they share the same smallest period; a contradiction. This implies that two distinct $\tau$-runs can overlap by no more than $\frac23\tau$ positions. In turn, this implies that $T$ contains $\Oh(n/\tau)$ runs.

Let us now show how to efficiently compute $\tau$-runs.
We first compute a $\tau$-synchronising set $A$ in $\cO(n/\tau)$ time using~\cref{thm:synch_packed}.
By the definition of such a set, a position $i\in [1\dd  n-3\tau +2]$ in~$T$ is a starting position of a $\tau$-run if and only if $[i\dd i+\tau)\cap A =\emptyset$ and $i-1 \in A$.
The period of this $\tau$-run is equal to $p=\per(T[i \dd i+3\tau -2])$.
Then, by~\cref{lem:fact3.2}, the longest prefix of $T[i \dd n]$ with period $p$ is $T[i \dd \suc_A(i) + 2\tau - 2]$. 
Using these conditions, we can compute all $\tau$-runs in $\cO(n/\tau)$ time in a single scan of $A$. (We assume that $T$ is concatenated with $3\tau$ occurrences of a letter not occurring in $T$ and ignore any $\tau$-run containing such letters.) 

We now show how to group all $\tau$-runs by Lyndon roots.
By the conditions of the statement, there are no more than $\sigma^{3\tau}\leq n^{3/4}$ distinct strings of length $3\tau-1$. We generate all of them, and for each of them that is periodic, we compute its Lyndon root in $\cO(n^{3/4} \log_\sigma n)$ time in total~\cite{DBLP:journals/siamcomp/BannaiIINTT17}.
We store these pairs in a table: the index is the string and the value is the Lyndon root. 
As the Lyndon root of a $\tau$-run $T[i \dd j]$ coincides with the Lyndon root of $T[i \dd i+3\tau-2]$, we can group all $\tau$-runs as desired using this table in $\cO(n/\tau)$ time. This is because $T[i \dd i+3\tau-2]$ can be read in $\Oh(1)$ time in the word RAM model, and so each $\tau$-run is processed in $\Oh(1)$ time.

Similarly, within the same tabulation process,  in $\cO(\tau)$ time, for each distinct string $S$ of length $3\tau-1$ with Lyndon root $R$, we compute and store the position $i_R$ of the leftmost occurrence of $R$ in~$S$~\cite{DBLP:journals/siamcomp/BannaiIINTT17}. Then, for each $\tau$-run $T[a \dd b]$ with Lyndon root $R$ and period $p=|R|$, we can obtain, due to periodicity, in $\Oh(1)$ time the starting positions $a+i_R-1$ and $a+i_R+p-1$ of the two leftmost occurrences of $R$ in $T[a \dd b]$. 
\end{proof}

\section{Proof of Proposition~\ref{prp:complete}}\label{app:complete}

\prpThankachan*

Our proof follows~\cite[Section 4]{DBLP:journals/jcb/ThankachanAA16} along with the efficient implementation from~\cite[Section 5.1]{DBLP:journals/jcb/ThankachanAA16}.
However, we cannot use these results in a black-box manner since we are solving a slightly more general problem:
In~\cite{DBLP:journals/jcb/ThankachanAA16}, the input consists of two strings $X$ and $Y$, and the output family~
$\Fam$ must satisfy the following condition: for every suffix $U$ of $X$ and every suffix $V$ of $Y$,
there exists a set $\F'\in \Fam$ and modified strings $U^{\Delta},V^{\nabla}\in \F'$ forming a $(U,V)_k$-maxpair.
Instead, in \cref{prp:complete}, we have a set $\F$ of (selected) substrings of a single text. The output family $\Fam$ must be a $k$-complete family for $\F$, i.e., satisfy the same condition as before:
for every $U,V\in \F$, there exists a set $\F'\in \Fam$ and modified strings $U^{\Delta},V^{\nabla}\in \F'$ forming a $(U,V)_k$-maxpair.
Additionally, we need stronger bounds on the sizes of sets $\F'\in \Fam$. The original argument~\cite[Lemma 3]{DBLP:journals/jcb/ThankachanAA16} yields $\max_{\F'\in \Fam}|\F'|=\Oh(|\F|)$ and 
$\sum_{\F'\in \Fam}|\F'| = \Oh(|\F|\log^{k}|\F|)$. Instead, we need to prove that each string $F\in \F$
is the source of $\Oh(1)$ modified strings in any single set $\F'\in \Fam$ and that there are $\Oh(\min(\ell,\log|\F|)^k)$ modified strings with source $F$ across all sets $\F'\in \Fam$ provided that $\F\sub \Sigma^{\le \ell}$.

On the high-level, the algorithm behind \cref{prp:complete} constructs $d$-complete families $\Fam_d$ for $\F$ for each $d\in [0\dd k]$. In this setting, the input set $\F$ is interpreted as a $0$-complete family $\Fam_0=\{\F\}$.
For the sake of space efficiency, all families are built in parallel, and the construction algorithm is organised into $2k+1$ \emph{levels}, indexed with $0$ through $2k$.
Each even level $2d$ is given a stream of sets $\F'\in \Fam_d$ and its task is to construct the compacted trie $\T(\F')$ for each given $\F'\in \Fam_d$ (which, in particular, involves sorting the modified strings in $\F'$).
Each odd level $2d-1$ is given a stream of tries $\T(\F')$ for $\F'\in \Fam_{d-1}$, and its task is to output a stream of sets $\F''\in \Fam_d$. We finish at an even level to ensure that modified strings within each set $\F' \in \Fam$ are sorted lexicographically.

The claimed properties of the constructed $k$-complete family $\Fam_k$ naturally generalise to analogous properties of the intermediate $d$-complete families $\Fam_d$. Moreover, efficient implementation of even levels relies on the following additional invariant:
\begin{invariant}\label{inv}
  Every set $\F'\in \Fam_d$ contains a \emph{pivot} $P^\nabla\in \F'$ such that,
  for every  $F^{\Delta}\in \F'$, we have $\Delta \sub [1\dd \LCP(F^\Delta,P^\nabla)]\times \Sigma$.
\end{invariant}
The invariant is trivially true for $d=0$ due to $\Delta=\emptyset$ for every $F^{\Delta}\in \F'$. Each set will store its pivot.

\subsection{Implementation and Analysis of Even Levels}
Recall that the goal of an even level $2d$ is to construct $\T(\F')$ for each $\F'\in \Fam_d$.
For this, we rely on \cref{lem:ctrie}, which requires sorting the modified strings in $F^{\Delta}\in \F'$ and computing the longest common prefixes between pairs of consecutive strings.

For the latter task, in the preprocessing, we augment the input text in $\Oh(n)$ time with a data structure for $\Oh(1)$-time $\LCE$ queries (\cref{thm:lce}). Since each modified string $F^{\Delta}\in \F'$ satisfies $|\Delta|\le d$, 
the longest common prefix of any two modified strings in $\F'$ can be computed in $\Oh(1)$ time using up to $2d+1$ $\LCE$ queries.
In particular, this yields $\Oh(1)$-time lexicographic comparison of any two modified strings in $\F'$,
which means that $\F'$ can be sorted in $\Oh(|\F'|\log |\F'|)$ time.
Nevertheless, this is too much for our purposes. A more efficient procedure requires buffering the sets $\F'\in \Fam_d$ into batches $\Fam'_d\sub \Fam_d$ of total size $\Theta(n+|\F|)$ and using the following result:%
\begin{theorem}[{\cite[Theorem 12]{Sorting}}]\label{thm:sort}
Any $m$ substrings of a given length-$n$ text over the alphabet $[0\dd n)$ can be sorted lexicographically in $\Oh(n+m)$ time.
\end{theorem}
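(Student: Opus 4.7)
The plan is to reduce the problem to a depth-first traversal of a suitably augmented suffix tree of the text. First I would build the suffix tree of $T$ in $\Oh(n)$ time using Farach's linear-time construction (which assumes an integer alphabet of size $n^{\Oh(1)}$, as is the case for our applications), together with a linear-space weighted level ancestor structure answering queries in $\Oh(1)$ time. Each of the $m$ input substrings $T[i\dd j]$ then corresponds to a unique locus in the suffix tree: the point at string-depth $j-i+1$ on the root-to-leaf path for suffix $T[i \dd n]$. A weighted level ancestor query locates, in $\Oh(1)$ time per substring, the unique edge of the suffix tree on which this locus lies.

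Once each substring is assigned to an edge, I would bucket the $m$ substrings by edge identifier using a single global radix sort in $\Oh(n+m)$ time. All substrings landing on the same edge share the root-to-locus path, so one is a prefix of another; hence sorting them by string-depth (equivalently by length $j-i+1$) yields their relative lexicographic order. Because the sum of edge-label lengths in the compacted suffix tree is $\Oh(n)$, I would normalize each depth key to its offset within the edge label, so that the keys used across all edges live in a universe of size $\Oh(n)$ and the per-edge sorting folds into the same $\Oh(n+m)$-time radix sort.

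The final step is a standard DFS of the suffix tree in which the children of each node are visited in the lexicographic order of the first character of their outgoing edge labels; while descending along an edge, we emit the substrings bucketed onto that edge in order of increasing normalized depth. Correctness is the classical observation that an in-order DFS of a suffix tree enumerates loci in lexicographic order; the prefix-vs-prefix ties on a single edge are broken correctly by the per-edge depth sort. Summing contributions, the construction of the suffix tree and the level ancestor structure takes $\Oh(n)$ time, the classification queries take $\Oh(m)$ time, the radix sorting takes $\Oh(n+m)$ time, and the traversal takes $\Oh(n+m)$ time.

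The main obstacle is ensuring that every component is genuinely linear in $n+m$, rather than incurring an $\Oh(\log(n+m))$ factor. Two places warrant care: first, ordering the children at each suffix tree node by the first character of each outgoing edge must be achievable in $\Oh(n)$ total time, which holds for integer alphabets of size $n^{\Oh(1)}$ via an initial bucket sort of the leading characters of edge labels; second, the depth keys fed to the per-edge radix sort must be bounded by the corresponding edge-label length so that their total magnitude is $\Oh(n)$, as otherwise the radix sort degrades. Both points are routine but are the places where the argument could silently lose linearity if implemented carelessly.
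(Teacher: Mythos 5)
The paper never proves this statement: it is imported as a black box from the external reference (see the citation attached to \cref{thm:sort}) and is only used in Appendix~A to sort the suffixes $F[\ell+1\dd |F|]$ within a batch. So there is no internal proof to compare against, and your argument must stand on its own. In outline it does: identifying each substring $T[i\dd j]$ with its locus in the suffix tree and emitting loci in DFS order, with ties on a single edge broken by string depth (where the shallower locus is a prefix of the deeper one), is a correct and standard way to establish the claim.

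Two steps need repair before the proof is airtight. First, the assertion that a linear-space weighted level ancestor structure with $\Oh(1)$ query time simply exists is doing a lot of hidden work: for arbitrary weighted trees this is not available (the problem generalises predecessor search, and there are $\Omega(\log\log n)$ lower bounds for near-linear space), and the off-the-shelf $\Oh(\log\log n)$-query structures would only give $\Oh(n+m\log\log n)$ overall, which is not $\Oh(n+m)$ when $m$ is large. What rescues you is the suffix-tree-specific result of Gawrychowski, Lewenstein and Nicholson (ESA 2014; simplified by Belazzougui, Kosolobov, Puglisi and Raman, CPM 2021) giving $\Oh(1)$-time weighted ancestors after $\Oh(n)$ preprocessing; you must invoke it explicitly, or else exploit that your $m$ locus queries are offline and answer them in a batched fashion, which avoids this machinery entirely. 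Second, the claim that ``the sum of edge-label lengths in the compacted suffix tree is $\Oh(n)$'' is false --- for a text with $n$ distinct letters the leaf edges alone carry labels of total length $\Theta(n^2)$, which is precisely why edge labels are stored as fragment pointers. Fortunately the claim is unnecessary: each sorting key is a pair consisting of an edge identifier and a string depth, both integers in $[0,\Oh(n)]$, so a two-pass radix sort of $m$ such pairs runs in $\Oh(n+m)$ time irrespective of the total label length. With these two corrections your proof is sound.
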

Unfortunately, the possibility of generalising this result to modified substrings remains an open question.
A workaround proposed in~\cite{DBLP:journals/jcb/ThankachanAA16} relies on \cref{inv}. 
In the lexicographic order of~$\F'$, the strings satisfying  $F^{\Delta}\leq P^\nabla$ come before 
those satisfying  $F^{\Delta}>P^\nabla$. Moreover, in the former group, the values $\LCP(F^{\Delta},P^\nabla)$ form a non-decreasing sequence and, in the latter group, these values form a non-increasing sequence.
This way, the task of sorting $\F'$ reduces to partitioning $\F'$ into $\F'_{p}= \{F^{\Delta}\in \F' : \LCP(F^{\Delta},P^\nabla)=p\}$ and sorting each set $\F'_{p}$.
We first compute $\LCP(F^{\Delta},P^\nabla)$ for all $F^{\Delta}\in \F'$ in $\cO(|\F'|)$ time.
Then, in order to construct sets $\F'_{p}$, we use radix sort for the whole batch $\Fam'_d \sub \Fam_d$---the keys ($\LCP(F^{\Delta},P^\nabla)$ for $F^{\Delta}\in \F'$) are integers bounded by~$n$.
In order to sort the elements of each set $\F'_{p}$, we exploit the fact that the lexicographic order of a modified string $F^{\Delta}\in \F'_{p}$
is determined by $F^{\Delta}[p+1\dd |F|]$. Moreover, as $p=\LCP(F^{\Delta},P^\nabla)$ for the pivot $P^\nabla$,
we have $\Delta\sub [1\dd p]\times \Sigma$ by \cref{inv} and thus $F^{\Delta}[p+1\dd |F|]=F[p+1\dd |F|]$.
Thus, $F^{\Delta}[p+1\dd |F|]$ is a substring of the input text and hence we can use \cref{thm:sort} to sort all such substrings arising as we process the batch $\Fam'_d\sub \Fam_d$, and then classify them back into individual subsets $\F'_{p}$ using radix~sort.

The overall time and space complexity for processing a single batch $\Fam'_d\sub \Fam_d$ is, therefore, $\Oh(n+|\F|)$.
Across all batches, the space remains $\Oh(n+|\F|)$ and the running time becomes $\Oh(n+|\F|\min(\ell,\log|\F|)^d)$ due to $\sum_{\F'\in \Fam_d} |\F'| = \Oh(|\F|\min(\ell,\log|\F|)^d)$.

\subsection{Implementation and Analysis of Odd Levels}

Recall that the goal of an odd level $2d-1$ is to transform a stream of tries $\T(\F')$ for $\F'\in \Fam_{d-1}$ into a stream of sets $\F''\in \Fam_d$.
This process is guided by the \emph{heavy-light decomposition} of $\T(\F')$, which classifies the nodes of $\T(\F')$ into \emph{heavy} and \emph{light} so that the root is light and exactly one child of each internal node is heavy: the one whose subtree contains the maximum number of leaves (with ties broken arbitrarily).
Each light node $w$ is therefore associated with the \emph{heavy path} which starts at $w$ and repeatedly proceeds to the unique heavy child until reaching a leaf, which we denote~$h(w)$. The key property of the heavy-light decomposition is that each node has $\Oh(\log |\F'|)$ light ancestors. However, since the height of $\T(\F')$ is $\Oh(\ell)$, we can also bound the number of light ancestors by $\Oh(\min(\ell,\log|\F'|))$.

The algorithm constructs the heavy-light decomposition of $\T(\F')$ and, for each light node $w$,
creates a set $\F'_w\in \Fam_d$ constructed as follows.
We traverse the subtree of $w$ and, for each modified string $F^{\Delta}\in \F'$ with a prefix $\val(w)$,
we compute $p = \LCP(F^{\Delta},\val(h(w)))$, which is the string depth of the lowest common ancestor of $h(w)$ and the locus of $F^{\Delta}$.
If $\Delta \sub [1\dd p]\times \Sigma$, we add $F^{\Delta}$ to $\F'_w$.
If additionally $|F^{\Delta}|>p$, we also add $F^{\Delta\cup\{(p+1,\val(h(w))[p+1])\}}$ to $\F'_w$.
This way, we guarantee that $\val(h(w))\in \F'_w$ forms a pivot of $\F'_w$, as defined in \cref{inv}.
Note that each string $F^{\Delta}\in \F'$ has a prefix $\val(w)$ for $\Oh(\min(\ell,\log |\F'|))$ light nodes $w$.
Hence, each trie $\T(\F')$ for $\F'\in \Fam_{d-1}$ is processed in $\Oh(|\F'|\min(\ell,\log |\F'|))=\Oh(|\F'|\min(\ell,\log |\F|))$ time and $\Oh(|\F'|)=\Oh(|\F|)$ space. 
Across all $\F'\in \Fam_{d-1}$, this yields $\Oh(|\F|\min(\ell,\log |\F|)^d)$ time and $\Oh(|\F|)$ space.

\renewcommand{\tabcolsep}{0pt}
\newcommand{\dy}{0.6}
\begin{figure}[htpb]
\centering
\begin{tikzpicture}
\draw (0,0) -- node[sloped,left,rotate=270] {
\begin{tabular}{c}
$\mathtt{a}$
\end{tabular}
} (-2,-1*\dy);
\draw (-2,-1*\dy) -- node[sloped,left,rotate=270] {
\begin{tabular}{c}
$\mathtt{a}$\\$\mathtt{a}$\\$\mathtt{a}$\\$\mathtt{b}$
\end{tabular}
} (-3.5,-5*\dy);
\draw (-2,-1*\dy) -- node[sloped,right,rotate=90] {
\begin{tabular}{c}
$\mathtt{b}$\\$\mathtt{b}$\\$\mathtt{a}$\\$\mathtt{c}$
\end{tabular}
} (-2,-5*\dy);
\draw (-2,-1*\dy) -- node[sloped,right,rotate=90] {
\begin{tabular}{c}
$\mathtt{c}$\\$\mathtt{b}$\\$\mathtt{a}$\\$\mathtt{b}$
\end{tabular}
} (-0.5,-5*\dy);

\draw (0,0) -- node[sloped,right,rotate=90] {
\begin{tabular}{c}
$\mathtt{b}$\\$\mathtt{c}$
\end{tabular}
} (1,-2*\dy);
\draw (1,-2*\dy) -- node[sloped,left,rotate=270] {
\begin{tabular}{c}
$\mathtt{a}$\\$\mathtt{a}$
\end{tabular}
} (0.3,-4*\dy);
\draw (1,-2*\dy) -- node[sloped,right,rotate=90] {
\begin{tabular}{c}
$\mathtt{b}$\\$\mathtt{a}$\\$\mathtt{c}$
\end{tabular}
} (1.7,-5*\dy);

\foreach \x/\y in {-2/-1,-3.5/-5,0.3/-4}{
\filldraw (\x,\y*\dy) circle (0.08cm);
}
\foreach \x/\y in {0/0,-2/-5,-0.5/-5,1/-2,1.7/-5}{
\filldraw[white] (\x,\y*\dy) circle (0.08cm);
\draw (\x,\y*\dy) circle (0.08cm);
}
\foreach \x/\y/\i in {-3.5/-5/1,-2/-5/2,-0.5/-5/3,0.3/-4/4,1.7/-5/5}{
\draw (\x,\y*\dy) node[below] {$A_{\i}$};
}
\draw (-0.9,-6.5*\dy) node {$\T(\F_1)$};

\begin{scope}[xshift=6cm]
\draw (0,0) -- node[sloped,left,rotate=270] {
\begin{tabular}{c}
$\mathtt{a}$
\end{tabular}
} (-1,-1*\dy);
\draw (-1,-1*\dy) -- node[sloped,left,rotate=270] {
\begin{tabular}{c}
$\mathtt{a}$\\$\mathtt{c}$\\$\mathtt{b}$\\$\mathtt{a}$
\end{tabular}
} (-1.7,-5*\dy);
\draw (-1,-1*\dy) -- node[sloped,right,rotate=90] {
\begin{tabular}{c}
$\mathtt{b}$\\$\mathtt{c}$\\$\mathtt{b}$\\$\mathtt{a}$
\end{tabular}
} (-0.3,-5*\dy);

\draw (0,0) -- node[sloped,right,rotate=90] {
\begin{tabular}{c}
$\mathtt{b}$
\end{tabular}
} (2,-1*\dy);
\draw (2,-1*\dy) -- node[sloped,left,rotate=270] {
\begin{tabular}{c}
$\mathtt{a}$\\$\mathtt{b}$
\end{tabular}
} (1,-3*\dy);
\draw (1,-3*\dy) -- node[sloped,left,rotate=270] {
\begin{tabular}{c}
$\mathtt{a}$\\$\mathtt{b}$
\end{tabular}
} (0.5,-5*\dy);
\draw (1,-3*\dy) -- node[sloped,right,rotate=90] {
\begin{tabular}{c}
$\mathtt{c}$
\end{tabular}
} (1.5,-4*\dy);
\draw (2,-1*\dy) -- node[sloped,right,rotate=90] {
\begin{tabular}{c}
$\mathtt{b}$\\$\mathtt{c}$\\$\mathtt{b}$\\$\mathtt{a}$
\end{tabular}
} (2.5,-5*\dy);

\foreach \x/\y in {-0.3/-5,2/-1,1/-3,0.5/-5}{
\filldraw (\x,\y*\dy) circle (0.08cm);
}
\foreach \x/\y in {0/0,-1/-1,-1.7/-5,1.5/-4,2.5/-5}{
\filldraw[white] (\x,\y*\dy) circle (0.08cm);
\draw (\x,\y*\dy) circle (0.08cm);
}
\foreach \x/\y/\i in {-1.7/-5/3,-0.3/-5/2,0.5/-5/1,1.5/-4/4,2.5/-5/5}{
\draw (\x,\y*\dy) node[below] {$B_{\i}$};
}

\draw (0.4,-6.5*\dy) node {$\T(\F_2)$};
\end{scope}
\end{tikzpicture}
\caption{
The compacted tries $\T(\F_1)$ and $\T(\F_2)$ for families $\F_1$ and $\F_2$ from \cref{ex:F1} and \cref{ex:G}, respectively, with heavy (full circles) and light nodes (empty circles) marked. For $i \in \{1,2\}$, for each light non-leaf node in the trie $\T(\F_i)$, a subfamily of $\Fam_i$ is created.
}\label{fig:HLD}
\end{figure}

It remains to prove that $\Fam_d$ satisfies all the conditions of \cref{prp:complete}.
First, we argue about the bullet points.
Each modified string $F^{\Delta}\in \F'$ gives rise to at most two modified strings added to a single set $\F'_w$ and $\Oh(\min(\ell,\log|\F|))$ modified strings across sets $\F'_w$ for light nodes $w$ in $\T(\F')$.
Since each string $F\in \F$ is the source of $\Oh(1)$ modified strings in any single set $\F'\in \Fam_{d-1}$
and $\Oh(\min(\ell,\log|\F|)^{d-1})$ modified strings across all sets $\F'\in \Fam_{d-1}$,
it is the source of $\Oh(1)$ modified strings in any single set $\F'_w\in \Fam_{d}$
and $\Oh(\min(\ell,\log|\F|)^{d})$ modified strings across all sets $\F'_w\in \Fam_{d}$.

Finally, we shall argue that $\Fam_d$ is indeed a $d$-complete family.
The modified strings in $\F'\in \Fam_{d-1}$ have sources in $\F$ and up to $d-1$ modifications.
Each string inserted to $\F'_w\in \Fam_d$ retains its source and may have up to one more modification, for a total of up to $d$.
Next, consider two strings $U,V\in \F$.
Since $\Fam_{d-1}$ is a $(d-1)$-complete family, there is a set $\F'\in \Fam_{d-1}$ and modified strings $U^{\Delta},V^{\nabla}\in \F'$ forming a $(U,V)_{d-1}$-maxpair.
Let $v$ be the node of $\T(\F')$ representing the longest common prefix of $U^{\Delta}$ and $V^{\nabla}$,
and let $w$ be the lowest light ancestor of $v$ (so that $v$ lies on the path from $w$ to $h(w)$).
By \cref{fct:pair}, we have $p:= \LCP(U^{\Delta},V^{\nabla})=\LCP_{d-1}(U,V)$.
\cref{def:pair} further yields $\Delta,\nabla \sub [1\dd p]\times \Sigma$.
If $\LCP_{d}(U,V)=p$, then $U^{\Delta},V^{\nabla}\in \F'_w$ form a $(U,V)_d$-maxpair.
Otherwise, let $c = \val(h(w))[p+1]$ and note that $U[p+1]\ne V[p+1]$.
If $U[p+1]\ne c \ne V[p+1]$, then $U^{\Delta\cup\{(p+1,c)\}},V^{\nabla\cup\{(p+1,c)\}}\in \F'_w$ form a $(U,V)_d$-maxpair.
If $U[p+1]\ne c = V[p+1]$, then $U^{\Delta\cup\{(p+1,c)\}},V^{\nabla}\in \F'_w$ form a $(U,V)_d$-maxpair.
Symmetrically, if $U[p+1]=c \ne V[p+1]$, then $U^{\Delta},V^{\nabla\cup\{(p+1,c)\}}\in \F'_w$ form a $(U,V)_d$-maxpair.
Thus, in all four cases, $\F'_w\in \Fam_d$ contains a $(U,V)_d$-maxpair. This completes the proof of \cref{prp:complete}.

\end{document}